
\documentclass[acmsmall,nonacm]{acmart}


\usepackage{subfig}
\usepackage{xcolor}
\usepackage[ruled, linesnumbered]{algorithm2e}
\usepackage{balance}
\usepackage{listings}
\usepackage{multirow}
\usepackage{pifont}
\usepackage{cleveref}
\lstset{
  breaklines,
  basicstyle=\small\ttfamily\bfseries,
  numberstyle=\small\ttfamily\bfseries,
  language=SQL,
  keywordstyle=\color{purple},
  commentstyle=\color{gray},
  breakatwhitespace=false,
  frame=single,
  xleftmargin=.05\textwidth, 
  xrightmargin=.05\textwidth
}

\DeclareMathOperator*{\argmax}{argmax}
\newcommand{\ie}{\emph{i.e.,}\xspace}
\newcommand{\eg}{\emph{e.g.,}\xspace}
\newcommand{\etc}{\emph{etc.}\xspace}
\newcommand{\wrt}{\emph{w.r.t.}\xspace}
\newcommand{\resp}{\emph{resp.,}\xspace}

\newcommand{\eat}[1]{}
\def\EndOfExample{\nolinebreak\ \hfill\rule{1.5mm}{2.7mm}}
\definecolor{revR1}{RGB}{0,76,153}
\newcommand{\todo}[1]{\textcolor{red}{#1}}

\newcommand{\bcirclednumber}[1]{\ding{\numexpr181+#1\relax}}

\AtBeginDocument{%
  }


\setcopyright{none}

\begin{document}
\title{Towards Selecting the Informative Alternative Relational Query Plans for Database Education}

\author{Hu Wang}
\email{huwang@stu.xidian.edu.cn}
\affiliation{
  \institution{Xidian University}
  \city{Xi'an}
  \country{China}
}

\author{Hui Li}
\email{hli@xidian.edu.cn}
\affiliation{
  \institution{Xidian University}
  \city{Xi'an}
  \country{China}
}

\author{Sourav S. Bhowmick}
\email{assourav@ntu.edu.sg}
\affiliation{
  \institution{Nanyang Technological University}
  \city{Singapore}
  \country{Singapore}
}

\author{Zihao Ma}
\email{zihao_ma@stu.xidian.edu.cn}
\affiliation{
    \institution{Xidian University}
    \city{Xi'an}
    \country{China}
}

\renewcommand{\shortauthors}{Hu Wang et al.}
\begin{abstract}
Off-the-shelf \textsc{rdbms} typically expose only the query execution plan (\textsc{qep}) of an \textsc{sql} query, without presenting information about representative \textit{alternative query plans} (\textsc{aqp}s) considered during plan selection in a \emph{user-friendly} manner. Providing easy access to representative \textsc{aqp}s is valuable in database education, as it helps learners understand the plan choices made by a query optimizer, one of the several important components related to the topic of relational query processing. In this paper, we present a novel problem called\textit{ informative plan selection problem} (\textsc{tips}) which aims to discover a set of $k$ \textit{informative} \textsc{aqp}s from the underlying plan space so that the \textit{plan informativeness} of the set is maximized. Specifically, we explore two variants of the problem, \textit{batch} \textsc{tips} and \textit{incremental} \textsc{tips}, to cater to diverse learners. Due to the computational hardness of the problem, we present an approximation algorithm to address it efficiently while providing theoretical guarantees for the results. An extensive experimental study--including feedback from real-world learners and a three-year in-class evaluation of academic outcomes--demonstrates the effectiveness of our solutions for database education.
\end{abstract}

\begin{CCSXML}
<ccs2012>
   <concept>
       <concept_id>10002951.10002952.10003190.10003192</concept_id>
       <concept_desc>Information systems~Database query processing</concept_desc>
       <concept_significance>500</concept_significance>
   </concept>
   <concept>
       <concept_id>10003456.10003457.10003527</concept_id>
       <concept_desc>Social and professional topics~Computing education</concept_desc>
       <concept_significance>300</concept_significance>
   </concept>
</ccs2012>
\end{CCSXML}

\ccsdesc[500]{Information systems~Database query processing}
\ccsdesc[300]{Social and professional topics~Computing education}

\keywords{Query processing and optimization, Database education}

\maketitle

\section{Introduction}\label{secintro}
Given an \textsc{sql} query, a relational query engine selects its \textit{query execution plan} (\textsc{qep}) from many \textit{alternative query plans} (\textsc{aqp}s) based on their estimated costs. Although major off-the-shelf relational database systems (\textsc{rdbms}) expose the \textsc{qep} of a query to an end user, they do not reveal the \textsc{aqp}s considered by the underlying query optimizer in a \emph{user-friendly} manner. 
Easy exposition of \textsc{aqp}s is useful in several applications, such as database education~\cite{BL22} and database administration. Although an \textsc{rdbms} (\eg \textit{PostgreSQL}) may allow one to manually pose \textsc{sql} queries with various constraints on \textit{configuration parameters} (\eg \textit{SET enable\_hashjoin = \textsf{true}}) to view the corresponding \textsc{qep}s containing specific physical operators, this strategy requires not only familiarity with the syntax and semantics of the configuration parameters, but also a clear idea of the \textsc{aqp}s that one is interested in. This is often impractical to expect from end users in practice. For instance, in an academic institution where many learners are taking a database system course for the first time, it is unrealistic to assume that they will manually pose such queries accurately to explore \textsc{aqp}s. Consider the following motivating scenario.

\begin{example}\label{eg1}
Lena is currently enrolled in an undergraduate database systems course. She wishes to explore some of the representative \textsc{aqp}s for the following query from the TPC-H benchmark after viewing its \textsc{qep} as depicted in Fig.~\ref{fig:alter_plans}(a).
\begin{lstlisting}
SELECT s_acctbal, s_name, n_name, s_address, s_phone
FROM supplier, nation, region
WHERE s_acctbal>=5000 and s_nationkey = n_nationkey and n_regionkey = r_regionkey;
\end{lstlisting}
Lena wants to know whether there are \textsc{aqp}s that share similar (\resp different) logical plans and physical operators yet exhibit very different (\resp similar) estimated costs, and, if so, what such \textsc{aqp}s look like. Examples of these \textsc{aqp}s are depicted in Fig.~\ref{fig:alter_plans}(b)-(d).
\EndOfExample \end{example}

\begin{figure*}[t]
    \centering
    \subfloat[QEP]{\includegraphics[width=0.24\linewidth, height=4cm]{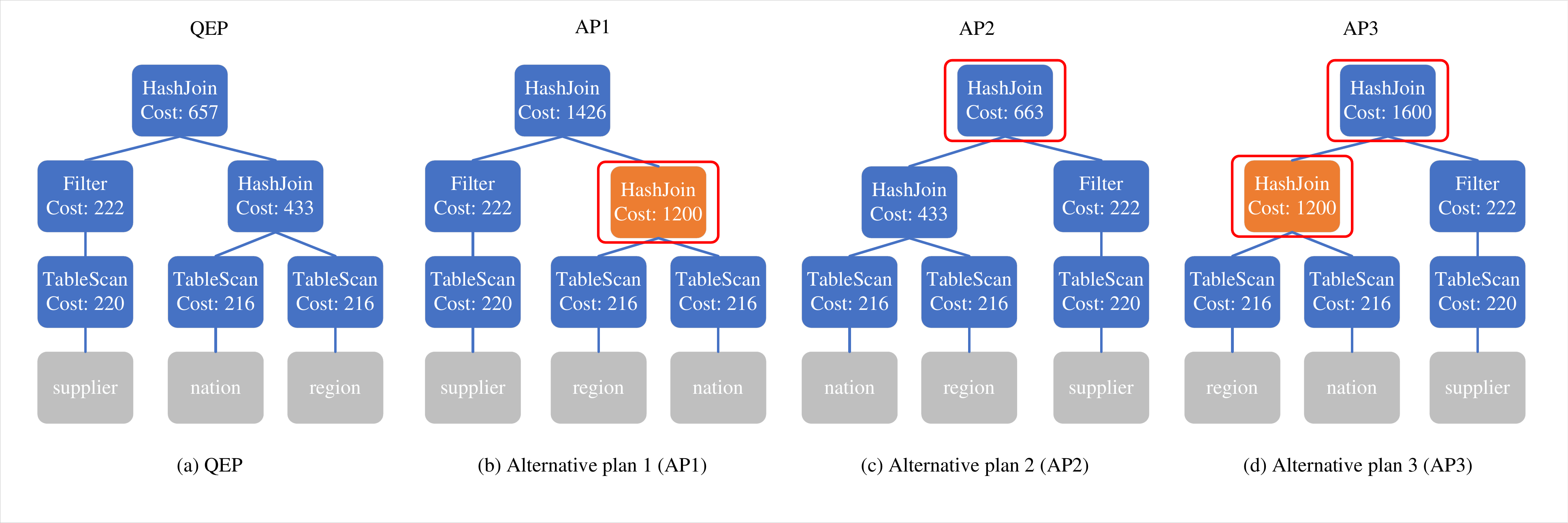}}
    \subfloat[AQP1]{\includegraphics[width=0.24\linewidth, height=4cm]{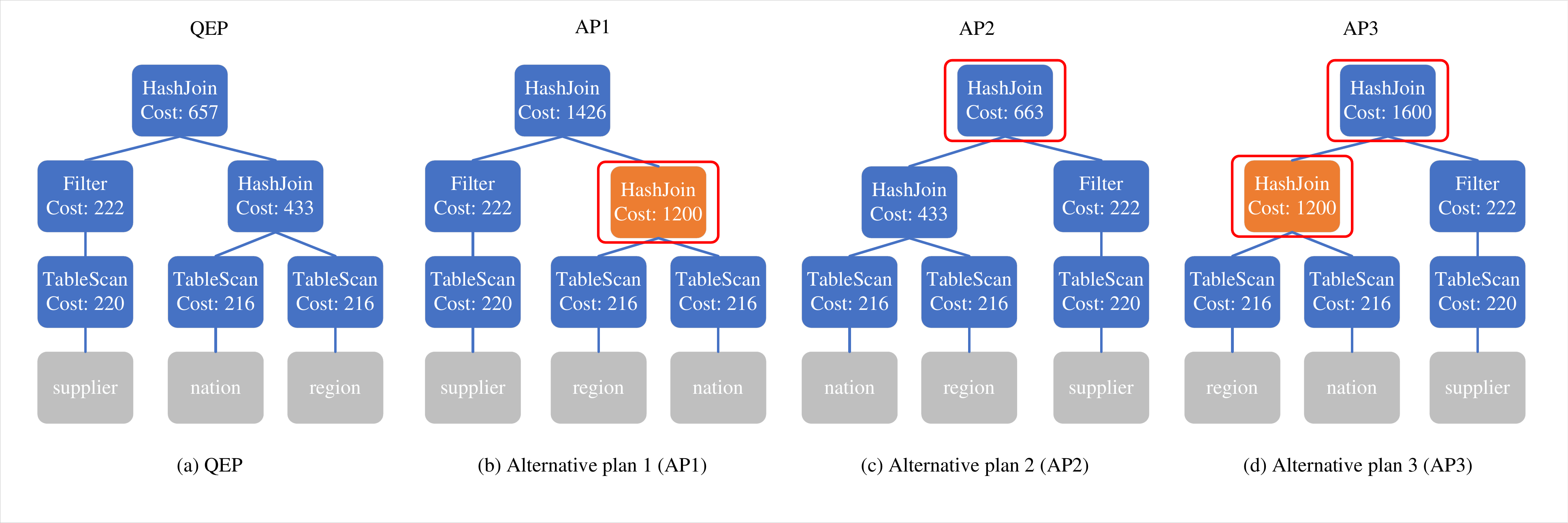}}
    \subfloat[AQP2]{\includegraphics[width=0.24\linewidth, height=4cm]{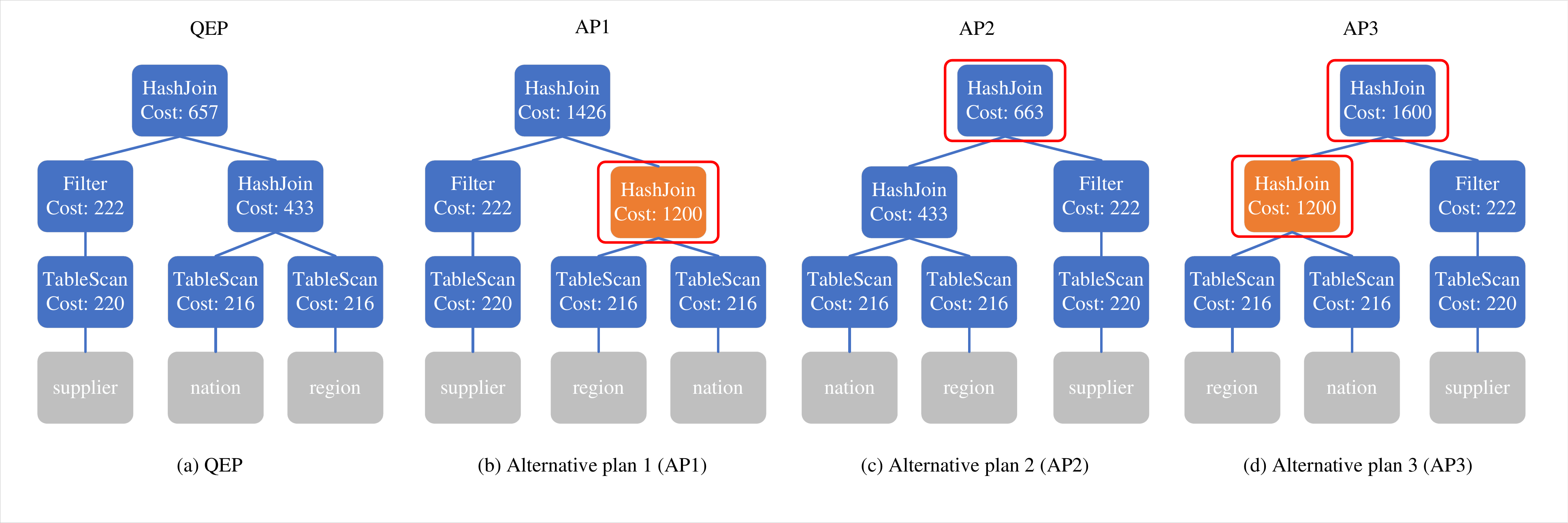}}
    \subfloat[AQP3]{\includegraphics[width=0.24\linewidth, height=4cm]{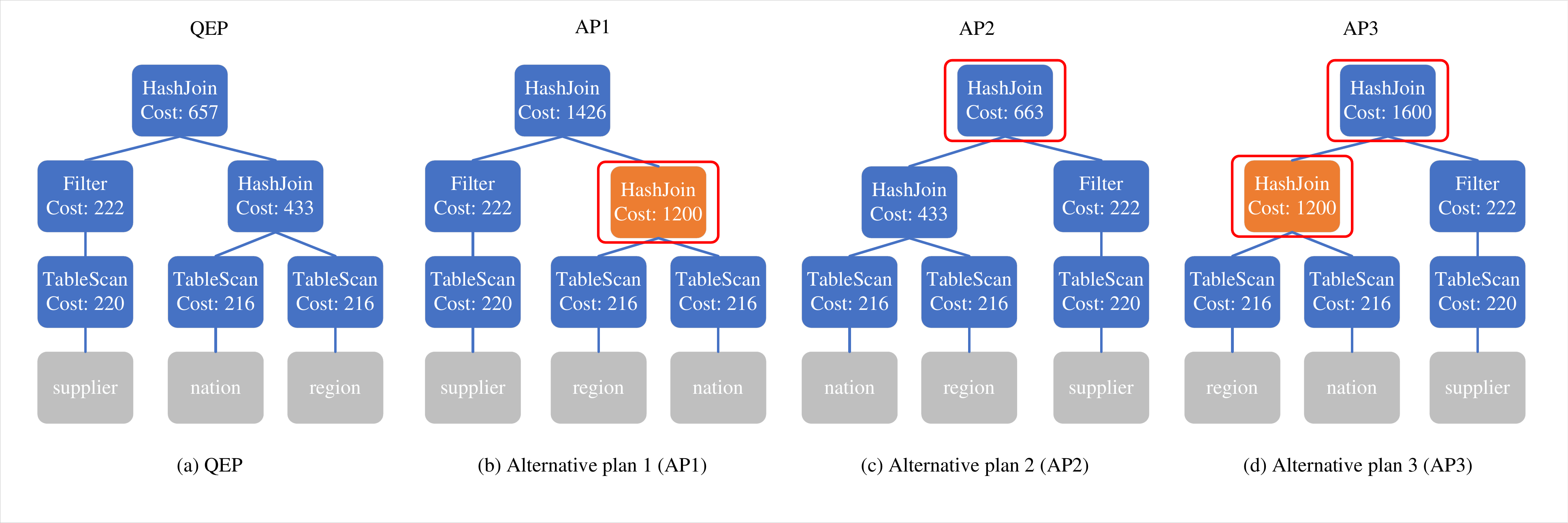}}
    \caption{Example of \textsc{qep} and alternative query plans (\textsc{aqp}s).}
    \label{fig:alter_plans}
\end{figure*}

\eat{\begin{figure}[t]
	\centering
	\includegraphics[width=0.6\linewidth]{DBSurvey.png}
	\caption{Survey results of learners.}
	\label{fig:survey}
\end{figure}}

Clearly, a user-friendly system that can facilitate exploration of \textit{representative} \textsc{aqp}s associated with a given query can greatly aid Lena in answering her queries. Unfortunately, scant attention has been paid by the data management research community to build such a system~\cite{BL22}.\eat{ Particularly, recent efforts for technological support to augment learning of relational query processing primarily focused on the natural language descriptions of \textsc{qep}s~\cite{neuron_liu_2018,towards_weiguo_2021} and visualization of the plan space~\cite{picasso_jayant_2010}.} In this paper, we present a novel solution for efficient selection of \textit{representative} \textsc{aqp}s for a given \textsc{sql} query. Our approach not only has received positive feedback from learners (Section~\ref{sec:ustudy}) but also demonstratively improves academic outcomes (Section~\ref{sec:outcome}).

Selecting a set of \textit{representative} \textsc{aqp}s is technically challenging~\cite{BL22}. First, the plan space can be prohibitively large~\cite{SC98}, and many \textsc{aqp}s are neither interesting nor useful to users (\eg learners). For instance, Lena should not have to inspect all \textsc{aqp}s to learn relational query processing\footnote{\scriptsize Learning relational query processing involves several components, including search algorithms, query rewriting, plan choices, and cost estimation. Our focus here is one component (\ie plan choices) within this larger set.}. Thus, exposing all plans is impractical. Instead, we should retrieve a representative subset based on its \textit{informativeness} to end users. 
However, informativeness is hard to quantify because it can vary across users. In Figures~\ref{fig:alter_plans}(b)-(d), for example, which plans should Lena be shown? Any informativeness measure must account for the plans (including the \textsc{qep}) a user has already seen, avoiding highly \textit{similar} plans, and reflect the user’s interests. Second, scanning all \textsc{aqp}s to find informative ones can be prohibitively expensive. We therefore need efficient techniques to select informative plans. This selection cannot be integrated into the optimizer’s enumeration step, since informative \textsc{aqp}s must be chosen not only based on the \textsc{qep} a user has seen, but also on learner feedback when available.
\eat{Thirdly, at first glance, it may seem that we can select $k>1$ alternative query plans where $k$ is a value specified by a user. Although this is a realistic assumption for many top-$k$ problems, our engagement with users reveals that they may not necessarily be confident in always specifying the value of $k$ (detailed in Section~\ref{sec:ustudy}). One may prefer to \textit{iteratively} view one plan-at-a-time and only cease exploration once they are satisfied with the understanding of the alternative plan choices made by the optimizer for a specific query. Hence, $k$ may not only be unknown \textit{apriori} but also the selection of an \textsc{aqp} in each iteration to enhance their understanding of different plan choices depends on the plans they have viewed thus far. This demands a flexible solution framework that can select alternative query plans in the absence or presence of $k$ value. Lastly, any proposed solution should be \textit{generic} so that it can be easily realized on top of the majority of existing \textsc{rdbms}. }
Third, it may seem natural to allow users to specify $k>1$ \textsc{aqp}s, as in standard top-$k$ problems. However, this introduces drawbacks in our setting. First, it returns the same set of \textsc{aqp}s for a query regardless of who issues it, though different users may find different plans informative. Second, our user study (Section~\ref{sec:ustudy}) shows that learners often lack confidence in choosing $k$. Instead, users should be able to inspect \textsc{aqp}s \textit{iteratively}, 
provide feedback on the usefulness of each plan, and stop once they are satisfied. Thus, $k$ is not only unknown \textit{a priori} but also depends on the plans examined so far. This calls for a flexible framework that supports both {\em batch\/} and {\em incremental\/} selection of \textsc{aqp}s. Finally, any solution should be sufficiently generic to be implemented on most existing \textsc{rdbms}.

In this paper, we formalize the aforementioned challenges into a novel problem called \textit{informa\underline{ti}ve  \underline{p}lan \underline{s}election} (\textsc{tips}) problem and present a solution to address it. Specifically, given an \textsc{sql} query and $k \geq 1$ (when $k$ is unspecified it is set to 1), it retrieves $k$ alternative query plans (\textsc{aqp}s) that\textit{ maximize} \textit{plan informativeness}. Such query plans are referred to as \textit{informative} query plans. To this end, we introduce two variants of \textsc{tips},  namely \textit{batch} \textsc{tips} (\textsc{b-tips}) and \textit{incremental} \textsc{tips} (\textsc{i-tips}), to cater to scenarios where $k$ is specified or unspecified by a user, respectively.  We introduce a novel concept called \textit{plan informativeness} to characterize and select informative \textsc{aqp}s. Specifically, it exploits the preferences of real-world users by mapping them to a novel \textit{distance} and \textit{relevance} measures of \textsc{aqps} \wrt the \textsc{qep} or the set of \textsc{aqp}s viewed by a user thus far. These measures are computed by exploiting \textit{differences} between plans \wrt their topology, physical operators, and estimated cost. This enables us to capture all the relevant dimensions associated with a query plan for any application. Next, given the computational hardness of the \textsc{tips} problem, we present a 2-approximation algorithm to efficiently address both variants of the problem. Specifically, it has a linear time complexity \wrt the number of candidate \textsc{aqp}s. Given that the number of such candidates can be very large for complex queries, we propose two novel \textit{strategies} to prune them efficiently. In this context, we adopt the candidate plan set retrieval framework of the \textsc{arena} system~\cite{arena} to retrieve the candidate plan space for a given \textsc{sql} query as it allows us to build a generic solution that can easily be realized on most existing \textsc{rdbms}.

To demonstrate the effectiveness of our \textsc{aqp} selection framework, we deployed it in a database education environment. Specifically, we analyze feedback from real-world learners enrolled in a database systems course, along with their academic performance \wrt their understanding of the \textsc{qep} selection process. Our analysis demonstrates the efficiency and effectiveness of our solutions in helping learners understand alternative query plans.

In summary, this paper makes the following contributions.
\begin{itemize}
\item We adopt a \textit{learner-centered} approach to characterize the notion of \textit{informative} query plans for database education by engaging real-world learners (Section~\ref{sec:feedback}), and present the notion of \textit{plan informativeness} (Section~\ref{sec:aqps}) to quantitatively model the informativeness of \textsc{aqp}s.  
\item We present a novel \textit{informative plan selection} (\textsc{tips}) problem to obtain a collection of \textsc{aqp}s for a given \textsc{sql} query by maximizing \textit{plan informativeness} (Section~\ref{sec:problem_def}).
\item We present approximate algorithms with quality guarantee that exploit plan informativeness and two \textit{pruning strategies} to address the \textsc{tips} problem efficiently (Section~\ref{ssec:plansel}). Our algorithms underpin any framework for exploration of informative \textsc{aqp}s for \textsc{sql} queries. 
\item We undertake an exhaustive performance study to demonstrate the superiority of the proposed algorithms over several representative baselines (Section~\ref{sec:experiment}). 
\item We conduct a user study and academic assessment involving real-world learners enrolled in a database systems course to demonstrate the usefulness and effectiveness of our solutions for pedagogical support (Section~\ref{sec:appl}). These solutions are one piece of a broader effort to address bottlenecks in learning SQL and relational query processing. 
\end{itemize}


\section{Informative Alternative Plans for Education}
\label{sec:feedback}
A key question that any alternative query plan (alternative plan for brevity) selection technique\eat{ for supporting database education} needs to address is as follows:  \textit{Which alternative plans are ``interesting'' or ``informative'' to a user?}\eat{ At first glance, it may seem that learners may overwhelmingly be interested in alternative plans the differ significantly from the \textsc{aqp}s \wrt cost. Is this really so?} Reconsider Example~\ref{eg1}. Fig.~\ref{fig:alter_plans}(b)-(d) depict three alternative plans for the query where the physical operator/join order differences are highlighted with red rectangles and significant cost differences are shown using orange nodes. Specifically, all three \textsc{aqp}s have different join orders \wrt the \textsc{qep}. However, \textit{AQP1} shares a similar logical plan with \textsc{qep} but has a substantially higher estimated cost. In contrast, \textit{AQP2} has an estimated cost that is very close to that of \textsc{qep}. \textit{AQP3} incurs a similarly high cost to \textit{AQP1}. Suppose $k=2$. Then, among \{\textit{AQP1}, \textit{AQP2}\}, \{\textit{AQP1}, \textit{AQP3}\},  and \{\textit{AQP2}, \textit{AQP3}\},  which one is the best choice? Intuitively, it is reasonable to choose the set that is most ``informative'' to an individual.

\begin{example} Reconsider Example~\ref{eg1}. 
By examining \textit{AQP1} or \textit{AQP3}, Lena learned that the \texttt{HASH JOIN} operator is sensitive to the order of the joined tables, reinforcing concepts she had previously encountered in her classroom lectures. In contrast, \textit{AQP2} shows that altering the join order does not necessarily result in substantially different estimated costs. Lena mistakenly assumed the costs would always differ, based on the examples presented in lectures. Hence, she found \textit{AQP2} informative. Consequently, \{\textit{AQP1}, \textit{AQP2}\} or \{\textit{AQP2}, \textit{AQP3}\} are potentially the most informative set for Lena. 
\EndOfExample \end{example}





\begin{table}[t]
    \centering
        \caption{Categorization for \textit{AP}s depending on their distances (\wrt 3 dimensions) to the \textsc{qep}.}
\label{tbl:info_plans}
\small
        \begin{tabular}{lccc} 
            \toprule
            \textit{Plan type} & \textit{LogiPln} & \textit{PhyOpr} & \textit{Cost}  \\ 
            \midrule
            \small
            $AP_{\uppercase\expandafter{\romannumeral1}}$ & small & small & small \\
            $AP_{\uppercase\expandafter{\romannumeral2}}$ & small & small & large \\
            $AP_{\uppercase\expandafter{\romannumeral3}}$ & small & large & small \\
            $AP_{\uppercase\expandafter{\romannumeral4}}$ & large & small & small \\    
            $AP_{\uppercase\expandafter{\romannumeral5}}$ & small & large & large \\
            $AP_{\uppercase\expandafter{\romannumeral6}}$ & large & small & large \\
            $AP_{\uppercase\expandafter{\romannumeral7}}$ & large & large & small \\
            $AP_{\uppercase\expandafter{\romannumeral8}}$ & large & large & large \\ 
            \bottomrule
        \end{tabular}
\end{table}

\subsection{Categorizing Alternative Query Plans}
We observe from the above example that the differences between a \textsc{qep} and \textsc{aqp}s embody valuable information that individuals can garner to supplement their knowledge. Observe that these differences manifest along three dimensions: the \textit{logical plan} (abbr. \textsf{LogiPln}, \ie the topology of the query plan tree), \textit{physical operators} (abbr. \textsf{PhyOpr}), and the estimated \textit{cost} (\textsf{Cost}) of the plan. Throughout the paper, we consistently use \textsf{LogiPln} to denote the logical plan and \textsf{PhyOpr} to denote physical-operator information. These dimensions determine the informativeness of \textsc{aqp}s. Since a query can have exponentially many \textsc{aqp}s, many of which are similar along these three dimensions, we group them into a concise set of \textit{categories} that characterize informative plans. Specifically, we define eight categories, $AP_{\uppercase\expandafter{\romannumeral1}}$ to $AP_{\uppercase\expandafter{\romannumeral8}}$, shown in Table~\ref{tbl:info_plans}, based on whether each dimension differs slightly or greatly from the \textsc{qep}. For instance, a plan that differs only slightly from the \textsc{qep} in terms of \textsf{LogiPln} and \textsf{PhyOpr}, but exhibits a substantial cost difference, falls into category $AP_{\uppercase\expandafter{\romannumeral2}}$. These eight categories cover \emph{all} possible combinations of differences from the \textsc{qep} along the three dimensions.

\subsection{Learner-centered Characterization of Informative AQPs} 
We posit that informativeness of an \textsc{aqp} cannot be determined accurately in practice without considering the interest of end users. Hence, for database education we adopt a learner-centered approach to understand what categories of plans are deemed as \textit{informative} or \textit{uninformative}. Our objective is to examine how learners perceive relationships between plans along the aforementioned dimensions, rather than to exhaustively cover every operator; additional operators, such as \texttt{UNION}, naturally fit into this categorization without modification. First, we survey 100 unpaid volunteers who have recently taken a database system course or are currently junior database engineers in industry to determine the informativeness of a set of \textsc{aqp}s. Note that we focus on both groups of volunteers as they cover current learners of a database systems course as well as those who have learned it in the recent past and applied their knowledge in industry. Second, we take the results of the survey as input to classify the aforementioned plan categories into informative or uninformative. Observe that any \textsc{aqp} can be assigned to one of the eight categories. Subsequently, we shall exploit these plan categories to select informative \textsc{aqp}s for any given query.   

 \begin{figure}[t]
	\centering
	\subfloat[Informative plans]{\includegraphics[width=0.45\linewidth]{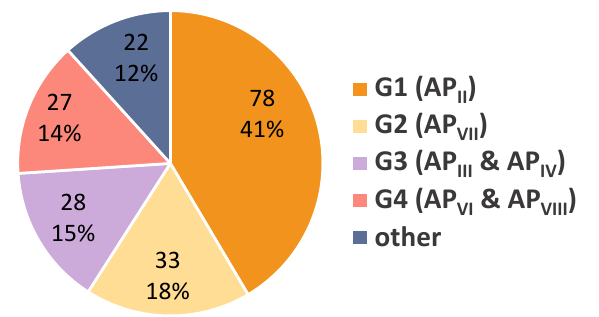}}
	\subfloat[Uninformative plans]{\includegraphics[width=0.51\linewidth]{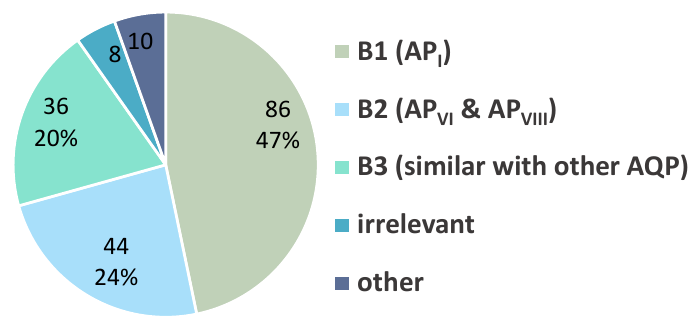}}
  \caption{Survey results.}
	\label{subfig:q4}
\end{figure}  

A key challenge in any human-centered approach is the number of items volunteers must evaluate. Given the exponential number of \textsc{aqp}s for a query, which plans--and how many--should be presented for feedback? Randomly selecting a large number of plans is ineffective: it may not adequately cover the plan categories and, as our interactions with volunteers suggest, large sets discourage feedback by requiring users to browse too many plans\eat{ (\todo{According to Table~\ref{tab:imdb_space}, for a given query, the query optimizer generates thousands of AQPs})}. 
We therefore use \textsc{aqp}s from two different \textsc{sql} queries on the IMDb dataset~\cite{good_viktor_2015}, each with 3–4 joins and different table scan methods, and ask all volunteers to provide feedback on them. For each query, we show the \textsc{qep} and 8 \textsc{aqp}s representing the plan categories. These \textsc{aqp}s are heuristically chosen based on their differences from the \textsc{qep} along the three dimensions. For example, we may select an \textsc{aqp} that exhibits small differences in \textsf{LogiPln} or \textsf{PhyOpr}, but large differences along the remaining dimensions. This strategy covers all plan categories while limiting volunteer effort to reviewing only 18 plans. To mitigate plan-specific factors beyond \textsf{PhyOpr}, \textsf{LogiPln}, and \textsf{Cost}, we present all volunteers with the same representative plan for each category because randomizing plans within a category could therefore bias preferences toward such incidental details rather than the intended category-level differences. Note that our strategy may occasionally select some similar alternative plans. We deliberately expose these plans to the volunteers to garner their feedback on viewing similar alternative plans. 

We ask the volunteers to choose the \emph{most} and \emph{least} informative plans given a \textsc{qep} and provide justifications for their choices. Specifically, volunteers classify a plan as informative if the \emph{specific} differences between it and the \textsc{qep} augment, reinforce, or rectify their knowledge related to the plan space and choices (considered by a relational query processor) that they may have acquired through classroom lectures, textbooks, or on-the-job experiences. Note that the aforementioned plan categories are not revealed to volunteers to avoid bias. \eat{We received 188 feedback from them.} We received 188 pieces of valid feedback from them (not every volunteer provided effective feedback for all plans). We map the specified informative and uninformative plans to the corresponding plan categories $AP_{\uppercase\expandafter{\romannumeral1}}$ to $AP_{\uppercase\expandafter{\romannumeral8}}$ and count the number of occurrences for each category. Fig.~\ref{subfig:q4}(a) reports the distribution of feedback. The eight categories are divided into four groups, \ie $G_1 - G_4$, depending on the feedback. The group entitled \textsf{other} contains feedback that is too sparse to be grouped as representative of the volunteers. Fig.~\ref{subfig:q4}(b) reports the distribution of uninformative plans and categories. $B_1$ represents plans that are very similar to the \textsc{qep} ($AP_{\uppercase\expandafter{\romannumeral1}}$) and therefore voted uninformative, as they do not convey any useful information. $B_2$ represents  $AP_{\uppercase\expandafter{\romannumeral6}}$ and $AP_{\uppercase\expandafter{\romannumeral8}}$. Both these categories have large cost differences, as well as at least one other dimension that differs significantly. Given that two of the three dimensions differ significantly (including cost), it is intuitive that these plans do not interest the volunteers. $B_3$ represents plans that are very similar to other alternative plans viewed by volunteers. In addition, $AP_{\uppercase\expandafter{\romannumeral5}}$ garnered too little feedback to be deemed informative or uninformative. Moreover, there are eight other pieces of feedback that are orthogonal to the problem addressed in this work (\eg feedback on the visual interface design). Hence, they are categorized as \textsf{``irrelevant''}.

\subsection{Informative \& Uninformative Plan Categories} 
In summary, our engagement with volunteers reveals that the majority find alternative plans of categories $AP_{\uppercase\expandafter{\romannumeral2}}-AP_{\uppercase\expandafter{\romannumeral4}}$  and $AP_{\uppercase\expandafter{\romannumeral7}}$ informative. However, $AP_{\uppercase\expandafter{\romannumeral1}}$, $AP_{\uppercase\expandafter{\romannumeral5}}$, $AP_{\uppercase\expandafter{\romannumeral6}}$, and $AP_{\uppercase\expandafter{\romannumeral8}}$ are not. Observe that the number of feedback responses classifying the latter two as uninformative is substantially higher than those classifying them as informative ($G_4$ vs. $B_2$). Hence, we consider them as uninformative. Volunteers also find similar alternative plans uninformative. In the sequel, we shall use these insights to guide the design of algorithms and \textit{pruning strategies} to automatically select informative \textsc{aqp}s.

\noindent\textbf{\textit{Remark}.} It is worth noting that the results of the above human-centered characterization of \textsc{aqp}s into different plan categories are application dependent. In this paper, we focus on database education, and therefore the results in Fig.~\ref{subfig:q4} should be considered within this context. Certainly, for a different application (\eg database administration), the distribution of plan categories for informative and uninformative \textsc{aqp}s may be different. Nevertheless, the proposed human-centered framework can still be leveraged to determine the distributions.

\section{Informative Plan Selection (\textsc{tips}) Problem} 
\label{sec:problem_def}
In this section, we formally define the \textsc{tips} problem. We begin with a brief background on relational query plans.

\subsection{Relational Query Plans}

Given an arbitrary \textsc{sql} query, an \textsc{rdbms} generates a \textit{query execution plan} (\textsc{qep}) to execute it.  A \textsc{qep}  consists of a collection of physical operators organized in the form of a tree, namely the \textit{physical operator tree} (operator tree for brevity). Fig.~\ref{fig:alter_plans}(a) depicts an example \textsc{qep}. Each physical operator, \eg \texttt{SEQUENTIAL SCAN}, \texttt{INDEX SCAN}, takes as input one or more data streams and produces an output one. A \textsc{qep} explicitly describes how the underlying \textsc{rdbms} shall execute the given query. Notably, given a \textsc{sql}, there are many different query plans, other than the \textsc{qep}, for executing it. For instance, there are several physical operators in a \textsc{rdbms} that implement a join, \eg \texttt{HASH JOIN}, \texttt{SORT MERGE}, \texttt{NESTED LOOP}. We refer to each of these different plans (other than the \textsc{qep}) as an \textit{alternative query plan} (\textsc{aqp}).  Figures~\ref{fig:alter_plans}(b)-(d) depict three examples of \textsc{aqp}. It is well-known that the plan space is non-polynomial~\cite{SC98}.

\subsection{Problem Definition} 
\eat{Given that the plan space is exponential, it is infeasible to reveal all alternative plans to a learner.} The \textit{informative plan selection} \textit{(\textsc{tips}) problem} aims to automatically select a small number of alternative plans that maximize \textit{plan informativeness}. These plans are referred to as \textit{informative alternative plans} \textit{(informative plans/\textsc{aqp}s} for brevity). The \textsc{tips} problem has two flavors: 

\begin{itemize}
	\item \emph{Batch \textsc{tips}}. A user may wish to view top-$k$ informative plans beside the \textsc{qep};
	\item \emph{Incremental \textsc{tips}}. A user may iteratively view an informative plan besides what has already been shown to him or her. Moreover, a user can \textit{rate} each viewed plan \wrt their informativeness. This rating impacts the succeeding \textsc{aqp} selection.  
\end{itemize}

Formally, we define these two variants of \textsc{tips} as follows. 

\begin{definition}\label{bapq}
{\em Given a \textsc{qep} $\pi^*$ and a budget $k$, the \textbf{batched informative plan selection (\textsc{b-tips}) problem} aims to find top-$k$ alternative plans $\Pi_{OPT}$ such that the plan informativeness of the plans in $\Pi_{OPT}\cup\{\pi^*\}$ is maximized, 
\begin{align*}
\Pi_{OPT} = \argmax_{\Pi} U(\Pi\cup\{\pi^*\}) \qquad \text{s.t.} \quad |\Pi|=k \quad \text{and} \quad \Pi\subseteq\Pi^*\setminus\{\pi^*\}
\end{align*}
where $\Pi^*$ denotes the space of all possible plans and $U(\cdot)$ denotes the plan informativeness of a set of query plans.\/}
\end{definition}

\begin{definition}\label{iAPQ}
	{\em Given a query plan set $\Pi$ a user has seen so far and the user's preference $r\in\{0,1\}$ of the current plan, the \textbf{iterative informative plan selection (\textsc{i-tips}) problem} aims to select an alternative plan $\pi_{OPT}$ such that the new plan informativeness of the plans in $\Pi\cup\{\pi_{OPT}\}$ is maximized, $$\pi_{OPT}=\argmax_\pi U_{r}(\Pi\cup\{\pi\}) \mbox{ s.t. } \pi\in\Pi^* \mbox{ and } \pi\notin\Pi$$ where $U_{r}(\cdot)$ denotes the plan informativeness adjusted according to $r$.\/}
\end{definition}
Here $U_r(\cdot)$ employs the same \textit{MaxMin-style} plan informativeness as $U(\cdot)$, but its reference point is updated in each iteration based on the user’s feedback $r$. The update mechanism is described in detail in Section~\ref{sec:pi} and formalized in Algorithm~\ref{alg: update_u}.
In the next section, we shall elaborate on the notion of \textit{plan informativeness} $U(\cdot)$ to capture informative plans.
\eat{Both of the problems are conceptually (but not from the solution aspect) correlated with the existing facility placement (FP) problem. Given a group of existing locations of a specific kind of facility, the goal of facility placement is to find a location from series of candidate places to establish a new facility such that the objective function (\eg minimum service distance) is optimum~\cite{place_cui_2018}. If we map the selected plan set in the iAPQ problem to the existing locations, the unselected plan set to the candidate locations, and the utility score to the objective function, then the iAPQ problem is a classic location selection problem. The bAPQ problem is a bit special where we need to select multiple locations. Unfortunately, our objective function is not as simple as the FP problem. As we have seen in the example (AP1$+$AP3 vs. AP1$+$AP2) shown in Fig.~\ref{fig:alter_plans}, distance is not the only factor that has to be taken into account in query plan learning, we shall elaborate the factors and present our objective function in the next section. Besides, a different objective leads to completely different solution path, which will be demonstrated in the next section.}
\begin{example}
Reconsider Example~\ref{eg1}. Suppose Lena has viewed the \textsc{qep} in Fig.~\ref{fig:alter_plans}(a). She may now wish to view two additional \textsc{aqp}s (\ie $k=2$) that are informative. The \textsc{b-tips} problem is designed to address it by selecting these \textsc{aqp}s from the entire plan space so that the plan informativeness $U(\cdot)$  is maximized. On the other hand, suppose another learner is unsure about how many alternative plans should be viewed or is dissatisfied with the results returned by \textsc{b-tips}. In this case, the \textsc{i-tips} problem enables the learner to iteratively select ``\textsc{aqp}-at-a-time'' and provide a rating \wrt its informativeness until satisfied. The plan informativeness is maximized by considering the rating from the learner.
\EndOfExample\end{example}
Given a plan informativeness measure $U(\cdot)$ and a plan space $\Pi^\ast$, we can find an alternative plan for the \textsc{i-tips} problem in $O(|\Pi^\ast|)$ time. However, the plan space increases exponentially with increasing number of joins and it is NP-hard to find the best join order~\cite{optimal_ibaraki_1984}. This demands efficient solutions especially when there is a large number of joined tables. 
\begin{theorem}\label{np-bapq}
	\textit{Given a plan informativeness measure $U(\cdot)$, the \textsc{b-tips} problem is NP-hard.}
\end{theorem}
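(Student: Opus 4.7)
The plan is to reduce a classical NP-hard problem to \textsc{b-tips}. The natural candidate is the \emph{Max-Sum Dispersion} (\textsc{msd}) problem: given a finite ground set $V$ equipped with a symmetric non-negative distance function $d:V\times V\to\mathbb{R}_{\ge 0}$ and a positive integer $k$, find a subset $S\subseteq V$ with $|S|=k$ that maximises $\sum_{u,v\in S,\,u\neq v} d(u,v)$. \textsc{msd} is well-known to be NP-hard and admits the classical greedy 2-approximation, which matches the ratio the authors claim to obtain for \textsc{b-tips} immediately after this theorem and thus strongly suggests this is the intended reduction.

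First I would describe the polynomial-time construction. Given an instance $(V,d,k)$ of \textsc{msd} with $|V|=n$, I would build a \textsc{b-tips} instance with one ``anchor'' \textsc{qep} $\pi^{*}$ and $n$ candidate alternative plans $\pi_1,\dots,\pi_n$, one per element of $V$. The plans would be engineered so that the pairwise dissimilarities they induce through the three channels underlying plan informativeness---structural, content and estimated-cost differences---reproduce the target matrix $d$ up to an additive constant. The anchor $\pi^{*}$ would sit equidistantly from every $\pi_i$, so that the contribution of $\pi^{*}$ to $U(\Pi\cup\{\pi^{*}\})$ depends only on $|\Pi|=k$ and not on which $\Pi$ is chosen. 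Under this setup, maximising $U(\Pi\cup\{\pi^{*}\})$ coincides with maximising $\sum_{\pi_i,\pi_j\in\Pi,\,i\neq j} d(\pi_i,\pi_j)$, and an oracle for \textsc{b-tips} would therefore solve \textsc{msd} in polynomial time.

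The hard part will be realising an arbitrary target distance matrix through concrete query plans, since query plans are not free combinatorial objects: each must be a valid operator tree for some \textsc{sql} query. I would handle this with a gadget construction over a sufficiently large universal query (many tables and joinable attributes) in which each $\pi_i$ uses a distinctive combination of physical operators and join order, making the structural and content components of pairwise distance a fixed large constant for every $i\neq j$. The remaining freedom in the estimated-cost field---an essentially continuous parameter---would then be tuned so that the cost component reproduces $d(\pi_i,\pi_j)$ above that constant offset. Since $U$ is described as aggregating pairwise dissimilarities, the shared constants can be absorbed without disturbing the argmax, and the reduction goes through. A secondary subtlety is any \emph{relevance} term in $U$: by making all $\pi_i$ equidistant from $\pi^{*}$ in the gadget, relevance also becomes a constant, leaving the dispersion term as the only non-trivial quantity and completing the reduction.
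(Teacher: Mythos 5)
Your high-level instinct (reduce from a dispersion problem) matches the paper's, but the execution diverges in ways that open real gaps. The paper's proof is essentially a one-liner: since the theorem is stated for a \emph{given} informativeness measure $U(\cdot)$, it observes that facility dispersion \emph{is} \textsc{b-tips} once one sets $\{\pi^*\}=\emptyset$ and $U(\cdot)=f$; the NP-hard problem is a literal special case of the problem template, and no gadget is needed. You instead try to prove the stronger statement that the \emph{concrete} $U$ of Section 4 is hard to optimize, by engineering plans whose pairwise distances realize an arbitrary target matrix $d$. That stronger statement is not what the paper establishes, and your construction of it does not go through: the cost channel is $cost\_dist(\pi_i,\pi_j)=|cost_i-cost_j|/(cost_{max}-cost_{min})$, a one-dimensional $L_1$ embedding, so after you pin the structural and content components to a common constant, the residual distances you can realize are exactly line metrics. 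Already three mutually equidistant points cannot be embedded in $(\mathbb{R},|\cdot|)$, so ``tuning the estimated-cost field so that the cost component reproduces $d(\pi_i,\pi_j)$'' is impossible for general dispersion instances.

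Second, you reduce from Max-\emph{Sum} dispersion and argue that a \textsc{b-tips} oracle maximizes $\sum_{i\ne j} d(\pi_i,\pi_j)$. But the paper's $U$ is the Max-\emph{Min} objective, $U(\Pi)=\min_{\pi_i,\pi_j\in\Pi} Dist(\pi_i,\pi_j)$, and the 2-approximation you cite as corroboration is the farthest-point greedy for max-min dispersion under the triangle inequality, not the max-sum greedy. So even granting a realizable distance matrix, the oracle answers the wrong optimization question for your source problem: additive constants are harmless under a min as well as under a sum, but the argmax of the minimum and the argmax of the sum are different sets. The fix is either to do what the paper does --- treat $U$ as a parameter and instantiate it as the dispersion objective, which makes your entire gadget unnecessary --- or, if you genuinely want hardness for the concrete $U$, to reduce from max-min dispersion restricted to a metric class that the three channels can actually realize, which requires an embedding argument you have not supplied.
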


\section{Quantifying Plan Informativeness} \label{sec:aqps}  
In this section, we describe the quantification of \textit{plan informativeness} $U(\cdot)$\eat{introduced in the preceding section} by exploiting the insights from learner-centered characterization of \textsc{aqp}s in Section~\ref{sec:feedback}. 

\subsection{Lessons from Characterization of AQPs}
In Section~\ref{sec:feedback}, feedback from volunteers reveals that the selection of informative plans is influenced by the following: 

\begin{itemize}
\item[(a)] The \textsf{LogiPln}, \textsf{PhyOpr}, and \textsf{Cost} differences of an alternative plan \wrt the \textsc{qep} (and other alternative plans) can be exploited to differentiate between informative and uninformative alternative plans\eat{ that are consistent with learners' pedagogical interests}. Hence, we need to quantify these \textit{differences} for informative plan selection. 
\item[(b)] 
In database education, volunteers typically prefer not to encounter alternative plans that closely resemble the \textsc{qep} or any plans they have already seen--specifically, plans that are similar across all three dimensions: \textsf{LogiPln}, \textsf{PhyOpr}, and \textsf{Cost}. Therefore, the \textit{relevance} of a new alternative plan for a learner should be assessed based on the set of query plans they have viewed so far.
\end{itemize}

Notably, our survey revealed that learners found plans with minor \textsf{LogiPln} or \textsf{PhyOpr} differences but substantial cost differences (\eg $AP_{\uppercase\expandafter{\romannumeral2}}$) particularly informative, which motivates our approach to measuring relevance.
We elaborate on the quantification approach to capture these two factors and then utilize them to quantify \textit{plan informativeness}.

\begin{figure}[t]
    \centering
    \includegraphics[width=0.7\linewidth]{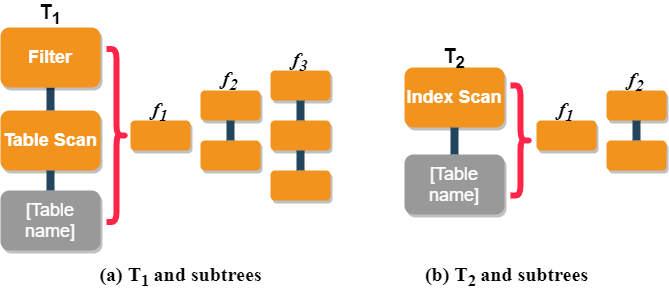}
    \caption{Plan Tree examples.}
    \label{fig:subtree}
\end{figure}

\subsection{Plan Differences}\label{ssec:metric}
Broadly, $U(\cdot)$ should facilitate identifying specific \textsc{aqp}s other than the \textsc{qep} such that by putting them together a learner can reinforce her understanding of the alternative plan choices made by a query optimizer for her query (\ie informative). This involves effectively quantifying the differences between different plans \wrt their \textsf{LogiPln}, \textsf{PhyOpr}, and \textsf{Cost}. As $U(\cdot)$ is defined over a set of query plans, we shall first discuss how to quantify these \textit{differences}.

\noindent\textbf{LogiPln Difference.} To evaluate the \textsf{LogiPln} difference between a pair of physical operator trees, we need to employ a measure to compute structural differences between trees. The choice of this measure is orthogonal to our framework and any distance measure that can compute the structural differences between trees efficiently can be adopted. By default, we use the \textit{subtree kernel}\eat{\footnote{\scriptsize We have conducted experiments by employing an alternative LogiPln difference measure in our solution, which is reported in Section~\ref{sec:experiment}.}}, which has been widely adopted for tree-structured data~\cite{fast_sm_2003} to measure the structural difference. Formally, a kernel function~\cite{kernel_jo_2004} is a function measuring the similarity of any pair of objects $\{x, x'\}$ in the input domain $\mathcal{X}$. It is written as $\kappa(x, x') = \left<\phi(x), \phi(x') \right>$, in which $\phi$ is a mapping from $\mathcal{X}$ to a feature space $\mathcal{F}$. The basic idea of subtree kernel is to express a kernel on a discrete object by a sum of kernels of their constituent parts. The features of the subtree kernel are \textit{proper subtrees} of the input tree $T$. A proper subtree $f_i$ comprises node $n_i$ along with all of its descendants. Two proper subtrees are identical if and only if they have the same tree structure. For example, consider $T_1$ and $T_2$ in Fig.~\ref{fig:subtree}.  $T_1$ has three different proper subtrees, while $T_2$ only has two subtrees. They share two of them, namely, $f_1$ and $f_2$. Note that we do not consider the content of nodes here. Formally, subtree kernel is defined as follows.

\begin{definition} 
{\em Given two trees $T_a$ and $T_b$, the \textbf{subtree kernel} is:
$\kappa_S\left(T_a, T_b\right) = \sum_{n_a\in N(T_a)} \sum_{n_b\in N(T_b)} \Delta(n_a, n_b)$
where $\Delta(n_a, n_b)=\sum_{i=1}^{|\mathcal{F}|}I_i(n_a)I_i(n_b)$, $\mathcal{F}$ is the feature space that contains all possible proper subtrees, and $I_i(n)$ is an indicator function which determines whether the proper subtree $f_i$ is rooted at node $n$ and $N(T)$ denotes the node set of $T$.\/}
\end{definition}
Since the subtree kernel is a similarity measure, we need to convert it to a distance. Moreover, each dimension may have a different scale. Hence, it is necessary to normalize each dimension following $\kappa\left(T_a, T_b\right)= \kappa_S(T_a, T_b)/ \sqrt{\kappa_S(T_a, T_a)\cdot \kappa_S(T_b, T_b)}$\eat{, and then define the LogiPln distance as follows:}. 
\begin{definition} 
{\em Given two query plans $\pi_a, \pi_b$ with corresponding tree structures $T_a, T_b$, based on the normalized kernel, we can define the \textbf{LogiPln distance} between the plans as 
$s\_dist\left(\pi_a, \pi_b\right) = \sqrt{1-\kappa\left(T_a, T_b\right)}$.}
\end{definition}
\begin{lemma}\label{lem:metric}
\textit{The LogiPln distance is a metric.}
\end{lemma}

 
It is worth noting that the classical approach of computing subtree kernel using suffix tree will lead to quadratic time cost~\cite{fast_sm_2003}.
\eat{Traditionally, the suffix tree is used to calculate the \textit{string kernel} and \textit{subtree kernel}. For \textit{string kernel}, if the weight of substrings can be computed in constant time, the kernel function can be computed in linear time~\cite{fast_sm_2003}. In comparison, for \textit{subtree kernel}, not all substrings reflect valid subtrees (if it is a subtree, the weight is 1, otherwise it is 0), so we cannot compute the weight in constant time. Therefore, the computation of the suffix tree-based subtree kernel will lead to a quadratic time cost} We advocate that this will introduce a large computational cost since we need to build a suffix tree for each alternative plan at runtime. To address this problem, we first serialize a tree bottom-up, and then use a hash table to record the number of different subtrees. When calculating the subtree kernel, we can directly employ a hash table to find the same subtree. Although the worst-case time complexity is still quadratic, as we shall see in Section~\ref{sec:experiment}, it is significantly faster than the classical suffix tree-based approach in practice.
We also note that hash-based distance measures (\eg the Picasso visualizer~\cite{picasso_jayant_2010}) primarily rely on plan hashing to group similar plans, whereas our distances are explicitly defined over \textsf{LogiPln}, \textsf{PhyOpr} and \textsf{Cost}. This allows us to reason about learner-perceived differences and to combine these dimensions in a unified \textit{MaxMin} objective.

\eat{When using suffix tree to calculate the \textit{subtree kernel} of two trees, we only need to construct the suffix tree of one tree, so the suffix tree is especially suitable for the search of similar tree structures. For example, when finding the \textsc{xml} documents which is similar to a given tree structure, we only need construct the suffix tree of the given structure. But in our scenario, we need to calculate the structural distance between any two plans, and each plan tree needs an initialization of the suffix tree, so it is not suitable for our scenario.}

\noindent\textbf{PhyOpr Difference.} In order to further compare the content of the nodes, we adopt the edit distance to quantify \textsf{PhyOpr} differences. Specifically, we first use the preorder traversal to convert a physical operator tree to a string, and based on it we apply the edit distance. Note that the alphabet is the collection of all node contents and the content of each tree node is regarded as the basic unit of comparison. For instance, in Fig.~\ref{fig:subtree}, the strings of $T_1$ and $T_2$ are [``\texttt{Filter}'', ``\texttt{Table Scan}[Table name]''] and [``\texttt{Index Scan}[Table name]''], respectively. The edit distance is 2 as they require at least two inserts/replaces to become identical to each other.

\eat{Since we require that the \textit{PhyOpr distance} is a metric, we further normalize the edit distance~\cite{normalized_yu_2007}. Let $\Sigma$ be the alphabet, $\lambda \notin \Sigma$ is the null string, and $\gamma$ be a nonnegative real number that represents the weight of a transformation. Given two strings $X, Y$, the normalized edit distance is
$$ edit_{norm}(X, Y) = \frac{2\cdot edit(X, Y)}{\alpha\cdot (|X|+|Y|)+edit(X, Y)} $$
where $\alpha=\max\{\gamma(a\rightarrow \lambda), \gamma(\lambda \rightarrow b), a,b\in \sum\}$ and $|X|$ is the length of $X$. This has been proven to be a metric~\cite{normalized_yu_2007}. In our work, $\gamma=1$, so the \textit{PhyOpr distance} can be defined as:}
By normalizing the edit distance following~\cite{normalized_yu_2007},\eat{ who also proved the corresponding result to be a metric,} we define the \textit{PhyOpr distance} as follows.
\begin{definition} \label{def:PhyOpr}
{\em Given two query plans $\pi_a, \pi_b$ and the corresponding strings $X_a, X_b$, the \textbf{PhyOpr distance} between the plans is defined as:
    $$c\_dist(\pi_a, \pi_b) = \frac{2\cdot edit(X_a, X_b)}{|X_a|+|X_b|+edit(X_a, X_b)}.$$}
\end{definition}
Notably, join order difference is a special case where the \textsf{PhyOpr} is the same while the order is different. Clearly, Defn.~\ref{def:PhyOpr} is able to differentiate different join orders, \eg \textit{AP1} and \textit{AP2} in Fig.~\ref{fig:alter_plans}(b) and \ref{fig:alter_plans}(c).

\noindent\textbf{\textit{Remark}.} There exist several different ways to define the distance between \textsf{LogiPln} and \textsf{PhyOpr}. For example, we could extract a finite-length feature vector for each plan, and then map it to a feature space to calculate the similarity via dot product. We shall investigate this approach in Section~\ref{sec:experiment}.

\noindent\textbf{Cost Difference.} Finally, because the (time) cost is a real number, we can adopt $L1$ distance and apply Min-Max normalization accordingly.\eat{ That is, for plan $\pi_{a}$, the cost can be normalized as $\frac{cost_a-cost_{min}}{cost_{max}-cost_{min}}$.} 

We are now ready to present the definition of \textit{distance}.

\begin{definition}\label{def:distance}
 {\em Given two query plans $\pi_i, \pi_j$, the \textbf{distance} between them is
    \begin{align*}
    dist(\pi_i, \pi_j)= \alpha\cdot s\_dist(\pi_i, \pi_j) + \beta\cdot c\_dist(\pi_i, \pi_j) + 
     (1-\alpha-\beta)\cdot cost\_dist(\pi_i, \pi_j)
    \end{align*} 
    where $0\le \alpha, \beta\le 1$ are user-defined weights based on the importance of each component.\/}
\end{definition}

\subsection{Relevance of an Alternative Plan}\label{ssec:relevance}
The aforementioned distance measures can prevent the selection of an \textsc{aqp} that is similar to plans that have been already viewed. However, these measures individually cannot distill informative plans from uninformative ones. Therefore, we propose a measure called \textit{relevance} (denoted as $rel(\cdot)$), to evaluate the ``value'' of each plan, so that informative plans receive higher \textit{relevance} scores. Recall the plan categories introduced in Section~\ref{sec:feedback}. We convert the differences in each category into a binary form (Fig.~\ref{fig:reference}(a)) where 1 indicates large. We can also represent the informative value of these types (based on learner feedback) in binary form (1 means informative). We can then employ a multivariate function fitting to find an appropriate \textit{relevance function}  such that alternative plans that are close to the informative ones receive high scores. With only eight labeled plan types, higher-order models are underdetermined and less interpretable; standard ML regressors (\eg SVR) do not improve agreement with learner feedback compared to the cubic form. Since eight existing samples are insufficient to fit a polynomial function with an order greater than 3~\cite{jam2013introduction}, it should appear in the following form: $\rho_1\times s+\rho_2\times c + \rho_3 \times cost + \rho_4\times s\times c + \rho_5\times s\times cost + \rho_6 \times c \times cost + \rho_7\times s\times c\times cost $ where
$\rho_1,\ldots,\rho_7$ are the parameters to be fit. By fitting this polynomial, we can define the \textit{relevance} measure as follows.

\eat{The\textit{ relevance} measure evaluates the degree of deviation between the topological difference and cost difference between an alternative plan and the \textsc{qep} or a set of already viewed plans. Intuitively, the higher the relevance score, the greater the deviation between them. This enables us to select informative alternative plans that are different from plans viewed by a learner so far.  Formally, it is defined as follows.}

\begin{definition}
\label{def:relevance}
           {\em  Given the normalized LogiPln distance $s_i$, PhyOpr distance $c_i$ and cost distance $cost_i$ between an alternative plan $\pi_i$ and the \textsc{qep} $\pi^*$, the \textbf{relevance score} of $\pi_i$ is defined as} $rel(\pi_i)=s_i+c_i+cost_i-s_i\cdot c_i - 
           2\cdot s_i \cdot cost_i -2\cdot c_i \cdot cost_i + 2\cdot s_i \cdot c_i \cdot cost_i$. 
\end{definition}
The effect of $rel(\cdot)$ is shown in Fig.~\ref{fig:reference}(b). The red vertices represent the plan types that learners consider the most informative, which can effectively be captured by $rel(\cdot)$. In addition, plans close to the informative ones receive high scores accordingly. In fact, each plan $\pi$ can be referred to as a 3-dimensional vector $\vec{\pi}$ in the space shown in Fig.~\ref{fig:reference}(b). Hence, $rel(\cdot)$ transforms a vector into a scalar value. We shall use $rel(\pi)$ instead of $rel(\vec{\pi})$.  

\noindent\textbf{\textit{Remark}.} Note that the distance and relevance scores are different. The former evaluates the differences between plans, while the latter is used to assign a higher score to \textsc{aqp}s that users consider as informative. Any lower-order polynomial cannot satisfy this requirement. More importantly, these measures are generic and orthogonal to the distribution of plan categories and the characterization approach of informative alternative plans in specific applications.
\eat{The ``value'' column in Fig.~\ref{fig:reference}(a) denotes the relevance score derived from learner feedback.}

\begin{figure}[t]
	\centering
	\subfloat[Categories of \textsc{aqp}s]{\label{subfig:dp} \includegraphics[width=0.35\columnwidth]{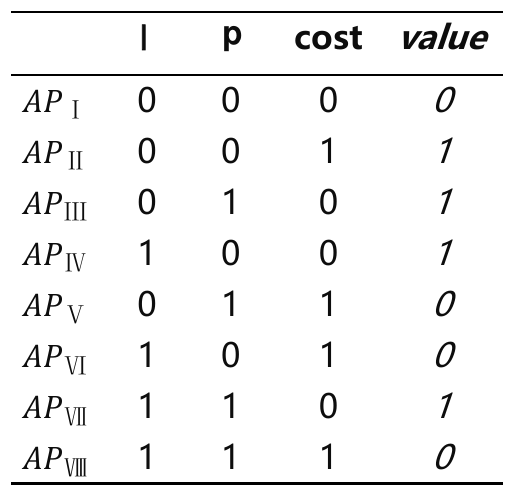}}
	\subfloat[Effect of the fitted function]{\label{subfig:fitted} \includegraphics[width=0.46\columnwidth]{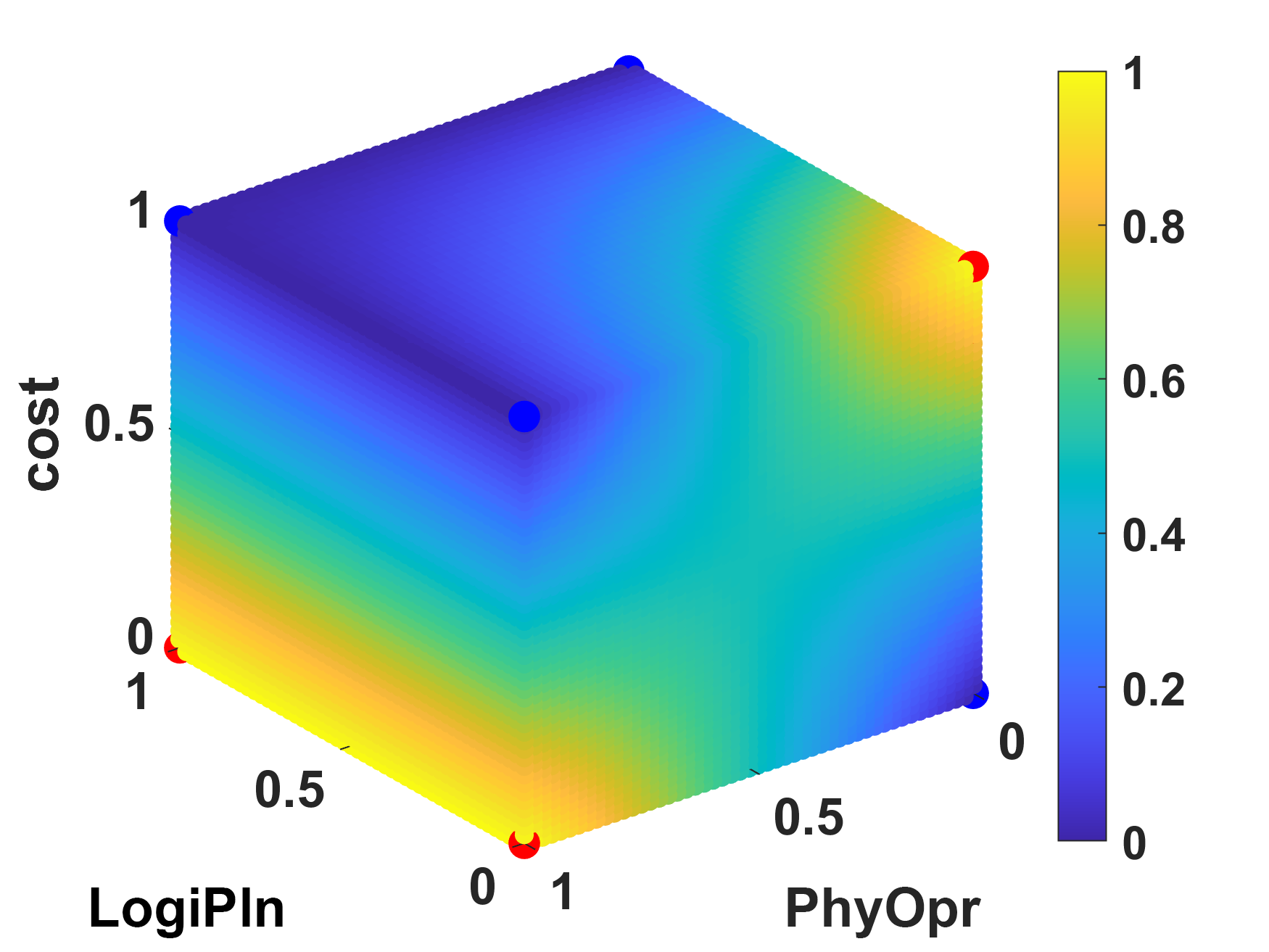}}
        \caption{Relevance (in (a), the ``value'' column denotes the relevance score derived from learner feedback).}
  \label{fig:reference}
\end{figure}

\subsection{Plan Informativeness} \label{sec:pi}
\eat{In the preceding subsections, we introduce the distance and the relevance measures between a pair of query plans.} Intuitively, through the \textit{plan informativeness} measure, we aim to maximize the minimum relevance and distance of the selected query plan set. To this end, we can exploit measures that have been used in diversifying search results, such as MaxMin, MaxSum, DisC, \etc\cite{search_drosou_2010,disc_ma_2012}. Based on~\cite{poikilo_dr_2013}, we utilize the MaxMin model in this work. We adopt the MaxMin criterion because it directly safeguards the \emph{worst-case} informativeness among the selected plans, matching the pedagogical goal of avoiding any ``uninformative'' outlier. It also provides a simple, interpretable objective with a known approximation guarantee. By contrast, objectives such as MaxSum can be driven by a small number of highly informative plans, while still admitting low-informative ones into the selection.

Suppose the plan set is $\Pi$. Then the objective is to maximize $(1-\lambda)\min \limits_{\pi_i\in \Pi} rel(\pi_i) + \lambda \min \limits_{\pi_i, \pi_j \in \Pi} dist(\pi_i, \pi_j)$, where $\lambda > 0$  is a parameter specifying the trade-off between relevance and distance.
According to~\cite{axiomatic_go_2009}, this bi-criteria objective problem can be transformed to finding the distance defined as follows.
\begin{definition}\label{def:new_dist}
{\em Given two query plans $\pi_i, \pi_j$, the \textbf{refined distance} function is 
 $$ Dist(\pi_i, \pi_j) = \frac{(1-\lambda)}{2}(rel(\pi_i)+rel(\pi_j)) + \lambda dist(\pi_i, \pi_j).$$\/}
\end{definition}
\begin{lemma}\label{disttri}
 \textit{The distance function defined in Defn.~\ref{def:new_dist} satisfies the triangle inequality.}
\end{lemma}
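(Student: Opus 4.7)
The plan is to expand both sides of the desired inequality $Dist(\pi_i,\pi_k) \le Dist(\pi_i,\pi_j) + Dist(\pi_j,\pi_k)$ using Definition~\ref{def:new_dist} and then show that the $dist$-contribution and the $rel$-contribution can be handled independently. After cancelling the $\frac{1-\lambda}{2}\bigl(rel(\pi_i)+rel(\pi_k)\bigr)$ term that appears on both sides, the inequality reduces to
\[
\lambda \cdot dist(\pi_i,\pi_k) \;\le\; \lambda\bigl[dist(\pi_i,\pi_j)+dist(\pi_j,\pi_k)\bigr] + (1-\lambda)\,rel(\pi_j),
\]
so it suffices to prove (i) the triangle inequality for $dist$, and (ii) non-negativity of $rel$.

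For (i), I would argue that $dist(\cdot,\cdot)$ is itself a metric. By Definition~\ref{def:distance} it is a convex combination of $s\_dist$, $c\_dist$, and $cost\_dist$ with non-negative coefficients $\alpha,\beta,1-\alpha-\beta$ summing to $1$; the first is a metric by the preceding lemma, the second is the normalized edit distance shown to be a metric in~\cite{normalized_yu_2007}, and the third is a scaled $L_1$ distance between real-valued costs. A convex combination of metrics is again a metric, so the triangle inequality holds for $dist$, and multiplying through by $\lambda \ge 0$ preserves it.

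The only nontrivial step is (ii): showing that the slack term $(1-\lambda)\,rel(\pi_j)$ is non-negative. Since $\lambda\in[0,1]$, this reduces to verifying $rel(\pi)\ge 0$ for every plan $\pi$, i.e., that the trilinear polynomial in Definition~\ref{def:relevance} is non-negative on the cube $[0,1]^3$. This is the main obstacle because $rel$ has mixed-sign coefficients, so positivity is not immediate from inspection. The key observation I plan to exploit is that $rel$ is \emph{affine} in each of $s_i$, $c_i$, $cost_i$ separately; a standard multilinearity argument then shows that the extrema of $rel$ on $[0,1]^3$ must be attained at one of the eight binary corners. A direct evaluation at these vertices yields values in $\{0,1\}$, all of which are non-negative, giving $rel \ge 0$ throughout the cube and closing the argument.
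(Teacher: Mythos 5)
Your proposal is correct and follows essentially the same route as the paper's proof: both split $Dist$ into the $\lambda\,dist$ part (triangle inequality because $dist$ is a non-negative combination of the three component metrics) and the relevance part, and both reduce the latter to the single fact $rel(\pi)\ge 0$. The only difference is that the paper merely asserts $rel(\pi_j)\ge 0$, whereas you actually verify it via the multilinearity-of-$rel$/vertex-evaluation argument on $[0,1]^3$ (the eight corner values are indeed $0$ or $1$), which closes a step the paper leaves implicit.
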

Consequently, we define the \textit{plan informativeness} $U(\cdot)$ as follows.
\begin{definition}\label{planinfor}
{\em Given a query plan set $\Pi$, the \textbf{plan informativeness} is defined as
$U(\Pi) = \min{Dist(\pi_i, \pi_j)}$ where $\pi_i\in \Pi, \pi_j \in \Pi.$\/}
\end{definition}

\noindent\textbf{Incorporating user preferences.} Consider $U_r$ in Defn.~\ref{iAPQ}. Notably, \textsc{i-tips} allows a user to explore \textit{AP} iteratively based on what has been revealed so far. We take into account user preferences in the form of $r\in\{1,0\}$ (\ie prefer or not) during the exploration. For example, a user may want to find only those plans with different join orders but not others. This scenario can be easily addressed by \textsc{i-tips}, which allows one to mark whether a newly generated plan is preferred. Consider Fig.~\autoref{subfig:fitted} where each \textit{AP} is represented as a 3-dimensional vector $\vec{\pi}$ \wrt the \textsc{qep} ($\vec{\pi^*}=\langle 0,0,0\rangle$). According to Defn.~\ref{def:relevance}, a preferred \textit{AP} (as well as its neighbors) should receive higher relevance scores (warmer in the heatmap). Intuitively, finding the category of the preferred \textit{AP} (\ie $AP_{II}\sim AP_{IV}, AP_{VII}$), say $AP_{II}$ (according to Fig.~\ref{subfig:dp}, we can denote $\vec{AP_{II}}=\langle 0,0,1\rangle$), and moving the next plan towards (\resp away from) it if $r=1$ (\resp $r=0$), \eg $\vec{\pi}=\vec{\pi}+\epsilon(\vec{AP_{II}}-\vec{\pi})$ would address this ($\epsilon=0.05$ by default). However, this strategy can affect only a single \textit{AP} in the current iteration but not later. Therefore, instead of moving a single \textit{AP}, we choose to update the origin of the 3-dimensional system, \ie $\vec{\pi^*}=\vec{\pi^*}-\epsilon(\vec{AP_{pre}}-\vec{\pi})$, where $\pi$ refers to the latest plan seen so far, and $AP_{pre}$ refers to the category to which $\pi$ belongs. To find $AP_{pre}$ for a plan $\pi$, we only need to perform a nearest neighbor search over the set of $\{\vec{AP_{II}}, \vec{AP_{III}}, \vec{AP_{IV}}, \vec{AP_{VII}}\}$ (each can be encoded according to Fig.~\ref{subfig:dp}) with respect to $\vec{\pi}$. Algorithm~\ref{alg: update_u} formalizes this update of $U_r$.

\setlength{\textfloatsep}{2pt}
\begin{algorithm}[t]
\scriptsize
\caption{Algorithm $update\_U$} %
\label{alg: update_u}

 \KwIn{\textsc{qep} $\pi^*$, user's rating for current \textsc{aqp} $r$}
 \KwOut{$U_{r}(\cdot)$}
 
 Let $\pi^{(i-1)}$ represents the current \textsc{aqp}.\;
 $AP_{pre}$ = $argmin\ dist(AP, \pi^{(i-1)})$ subject to $AP\in \{AP_{II}, AP_{III}, AP_{IV}, AP_{VII}\}$\;
 Let $\Delta_\pi=\epsilon(\vec{AP_{pre}}-\vec{\pi^{(i-1)}})$\;
 update the origin (\textsc{qep}) : $\vec{\pi^*}=(r==1)?(\vec{\pi^*}-\Delta_\pi):(\vec{\pi^*}+\Delta_\pi)$\;
\end{algorithm}

\section{\textsc{tips} Algorithms}\label{ssec:plansel}
As discussed in Section~\ref{secintro}, we cannot integrate the selection of informative plans into the optimizer’s enumeration step because these plans are chosen based on the \textsc{qep} and \textsc{aqp}s observed by a learner. Moreover, query optimizers search and filter by cost, whereas we are not merely seeking the lowest-cost plans. Thus, new algorithms are needed for the \textsc{tips} problem. We propose \textsc{i-tips} and \textsc{b-tips} for the two problem variants in Section~\ref{sec:problem_def}.

We first present two pruning strategies--\textit{group forest-based pruning} (\textsc{gfp}) and \textit{importance-based plan sampling} (\textsc{ips})--to enable efficient exploration of the candidate plan space and support queries with many joins. We then describe the \textsc{i-tips} and \textsc{b-tips} algorithms, which leverage these strategies to select informative plans.
It is important to note that the pruning strategies are essential for maintaining acceptable response times even for queries with a small number of joins (\eg 3–4) that exhibit join-order effects. Such queries may generate thousands of alternative plans due to join-order permutations, access-path choices, and join algorithms~\cite{DBLP:conf/icde/GraefeM93,DBLP:books/daglib/0011128}. As reported in Section~\ref{sec:ustudy}, \textsc{gfp} keeps the per-\textsc{aqp} wait time around one second.

We begin by introducing our framework to retrieve a collection of candidate \textsc{aqp}s $\Pi^\ast$ from the underlying \textsc{rdbms}. These candidate plans are passed on as input to our proposed algorithms.
\eat{
\subsection{Candidate Plan Set Retrieval using ARENA} \label{sec:framework} 
The architecture of \textsc{arena} is shown in Fig.~\ref{fig:arena}. It consists of the following components.

\noindent\textbf{\textsc{ARENA} GUI.} A browser-based visual interface that enables a user to retrieve and view information related to the \textsc{qep} and alternative query plans of her input query in a user-friendly manner. Once an \textsc{sql} query is submitted by a user, the corresponding \textsc{qep} is retrieved from the \textsc{rdbms}, based on which she may either invoke the \textsc{b-tips} or the \textsc{i-tips} component (detailed in Section~\ref{ssec:plansel}). Moreover, it can automatically analyze the differences between the \textsc{aqp} and \textsc{qep} including the specific information of each operator, which is displayed to the user through the GUI. 

Figure~\ref{fig:arenaInter} shows a screenshot of the \textsc{arena} GUI. It consists of several components organized into three columns. The left-most column contains three panels:  database schema visualizer (C1), configuration panel for various parameters (C2), and controller of the display mode (C3). In the middle column, a learner can type in his/her \textsc{sql} query (C4) or select a predefined query from a drop-down list (C5). Upon clicking on \textsf{Execute}, the C6 panel displays the \textsc{qep} and the \textsc{aqp}s, as well as associated information (detailed later) that highlights the differences between the former and the latter. Clicking on a specific plan in C6 leads to its visualization in the right column (C7). In addition, if \textsf{Compare Plan} in C3 is enabled, a pop-up window showing the differences between the  \textsc{qep} and the selected \textsc{aqp} are displayed (Figure~\ref{fig:compare_plan}). Specifically, the left panel lists down the differences between both plans (\wrt. operators, estimated cost, join order). Clicking on any item in this list will trigger the right panel to highlight the corresponding regions of the plans.

\begin{figure}[t]
  \includegraphics[width=\linewidth]{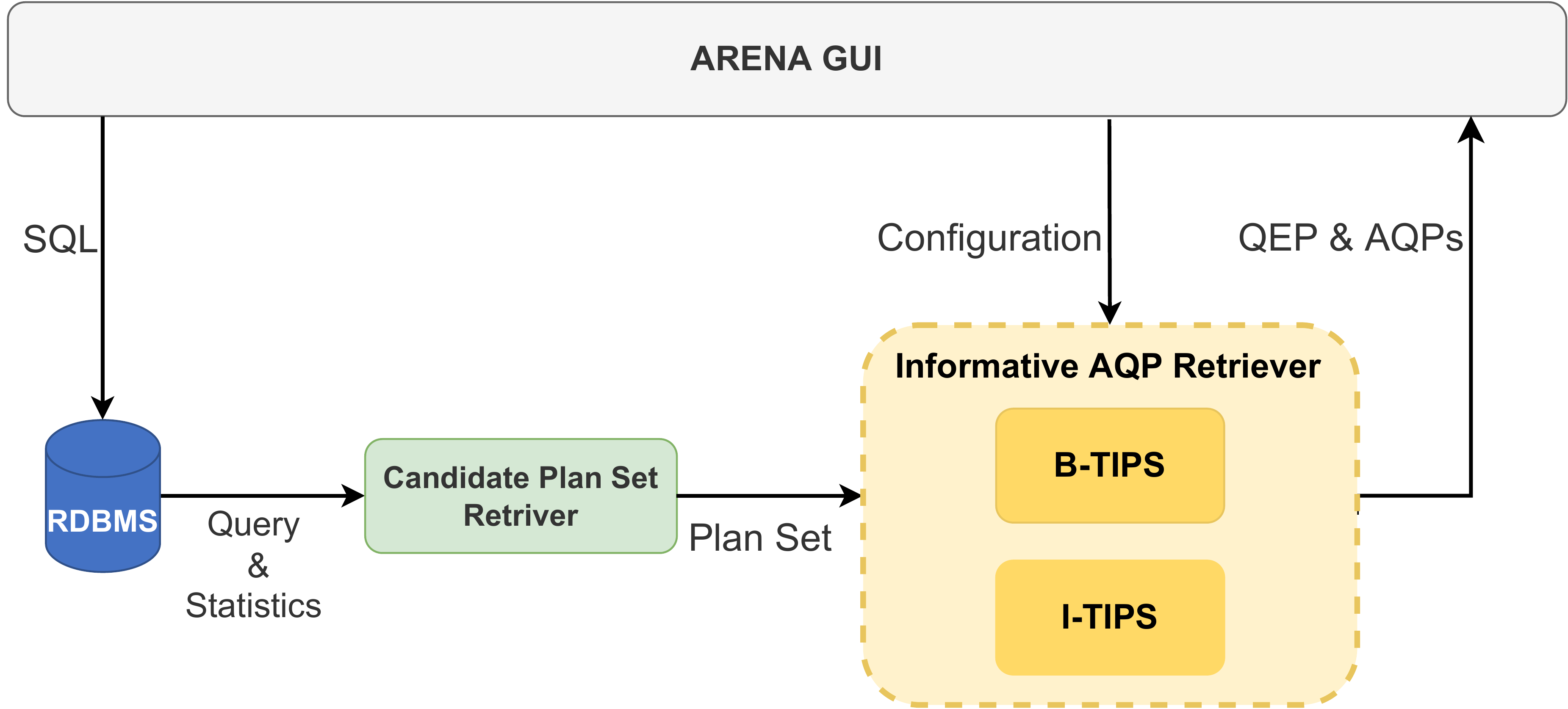}
  \caption{Architecture of \textsc{arena}.}
  \label{fig:arena}
\end{figure}

\begin{figure*}[t]
	\centering
	\includegraphics[width=\linewidth]{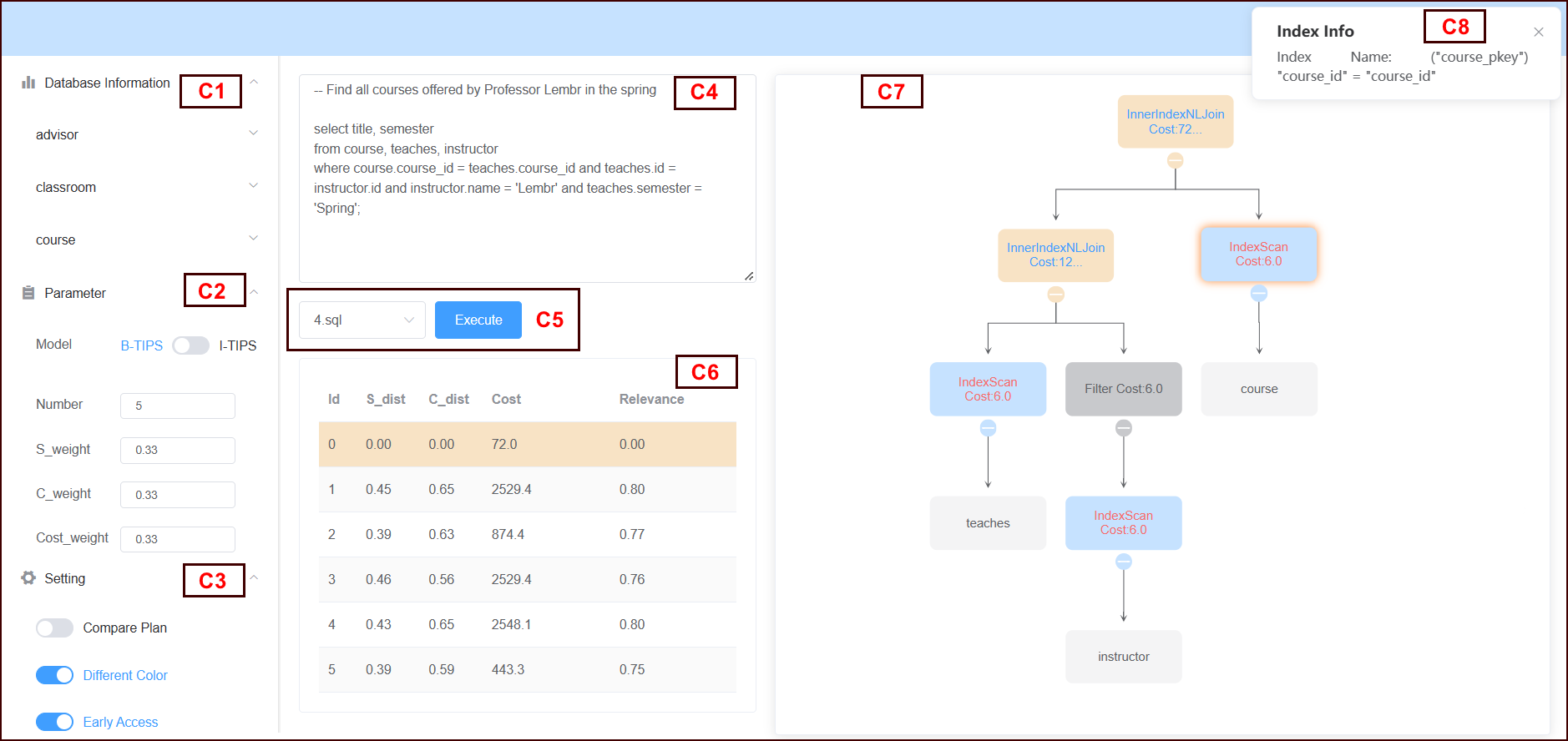}
	\caption{A screenshot of the ARENA GUI.}
	\label{fig:arenaInter}
	\end{figure*}

\begin{figure*}[t]
	\centering
	\includegraphics[width=\linewidth]{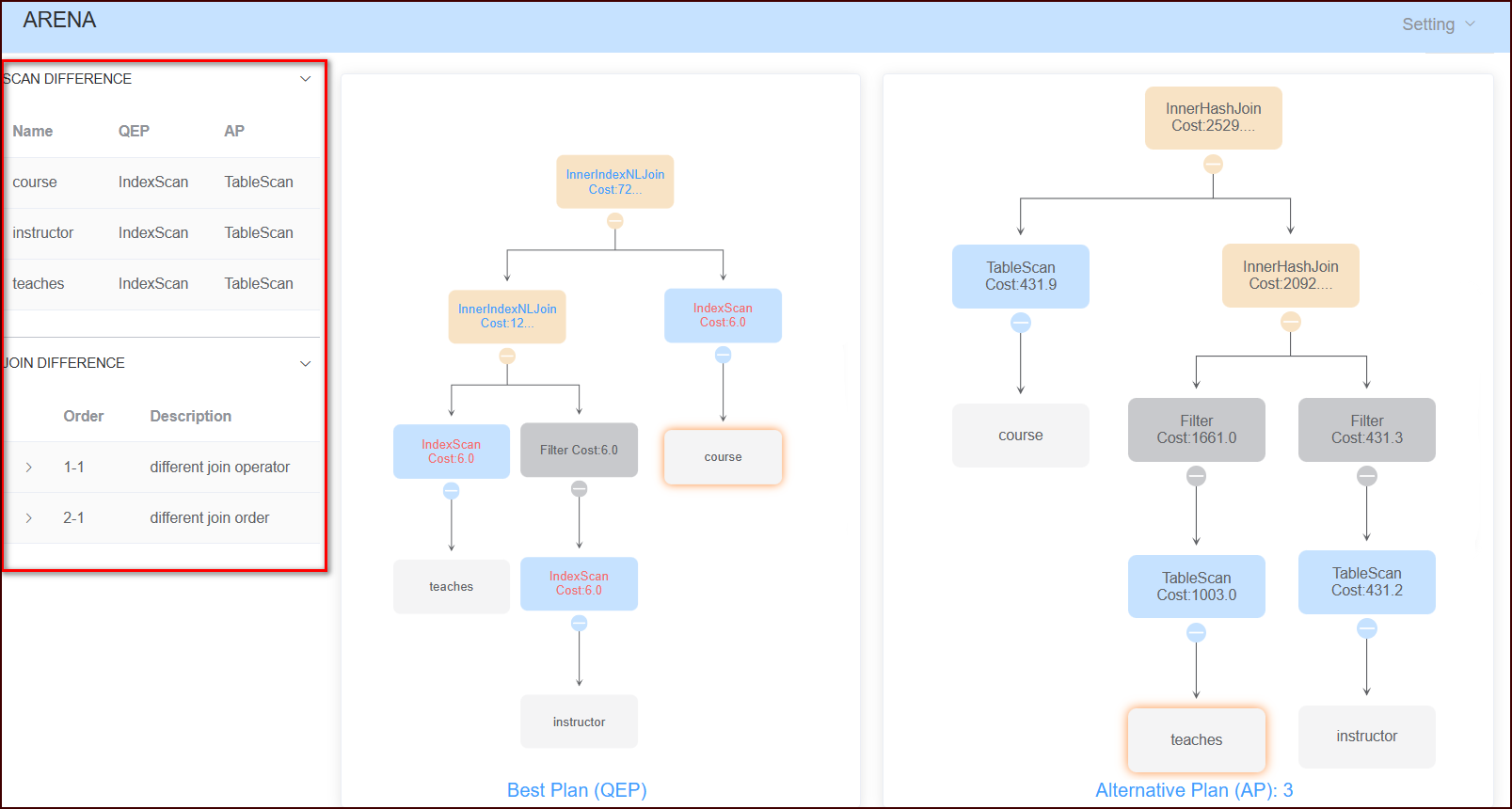}
	\caption{Comparison of \textsc{qep} and an alternative query plan.}
	\label{fig:compare_plan}
\end{figure*}
}

\subsection{Candidate Plan Set Retrieval} \label{sec:framework} 
\eat{\noindent\textbf{Candidate Plan Set Retriever.}} To obtain the plan space for a given \textsc{sql} query (\ie $\Pi^*$), we adopt \textsc{orca}~\cite{orca_mo_2014}, which is a modular top-down query optimizer based on the cascades optimization framework~\cite{cascade_go_1995}. \eat{We adopt it for the following reasons. First, it is a stand-alone optimizer and is independent of the \textsc{rdbms} a learner interacts with. Consequently, it facilitates portability of \textsc{arena} across different \textsc{rdbms}. Second, it interacts with the \textsc{rdbms} through a standard \textsc{xml} interface such that the effort to support a new \textsc{rdbms} is minimized.} We choose to adopt \textsc{orca} because it is a stand-alone optimizer that is portable across different \textsc{rdbms} systems, thanks to its interaction through a standard \textsc{xml} interface. The only task one needs to undertake is to rewrite the parser that transforms a new format of query plan  (\eg \textsc{xml} format in \textit{SQL Server} and \textit{PostgreSQL}) into the standard \textsc{xml} interface of \textsc{orca}. Consequently, this makes \textsc{tips} compatible with many \textsc{rdbms} as a parser can be implemented without modifying the internals of the target \textsc{rdbms}. Specifically, we have implemented this interface on top of \textit{PostgreSQL} and \textit{Greenplum}. 

It is worth noting that the choice of candidate plan set retrieval technique is orthogonal to our work. Other frameworks with similar functionality can also be adopted.

\noindent\textbf{\textit{Remark}.} There are alternative ways to retrieve candidate plans--such as using query hints, altering data statistics. However, such strategies obtain a very limited number of candidates; among them, there are even fewer informative plans. We may also obtain the plans by modifying some physical operators or join order of the \textsc{qep}. 
This offers no advantage over our approach, restricts candidate plan diversity, and makes it challenging to identify modifications that yield informative plans.\eat{ Even if we know what to modify, since the method of selecting alternative plans is fixed, the information obtained is also fixed and limited. Therefore, these alternatives are not suitable for our problem.}

\subsection{Group Forest-based Pruning (GFP)}
\label{ssec:filter}

Because the plan space can be very large, traversing all \textsc{aqp}s to find informative plans can hurt efficiency. We therefore need a way to quickly discard uninformative \textsc{aqp}s. As shown in Section~\ref{sec:feedback}, $AP_{\uppercase\expandafter{\romannumeral6}}$ and $AP_{\uppercase\expandafter{\romannumeral8}}$ are uninformative, and both have large \textsf{LogiPln} differences from the \textsc{qep}. A simple approach is to compare each \textsc{aqp} with the \textsc{qep} and discard those with large differences, but this still requires scanning the entire plan space. To address this, we propose a \textit{group forest-based pruning} (\textsc{gfp}) strategy, which both reduces the number of \textsc{aqp}s and avoids accessing the entire plan space.

Intuitively, the idea is as follows. Reconsider Example~\ref{eg1}. \textsc{gfp} first groups \textsc{aqp}s according to their \textit{memo-derived group trees}. If a group tree differs significantly from the \textsc{qep} in terms of \textsf{LogiPln} or \textsf{Cost} (exceeding $\tau_d$ or $\tau_c$), all plans in that group--typically $AP_{\uppercase\expandafter{\romannumeral6}}$/$AP_{\uppercase\expandafter{\romannumeral8}}$-like--are pruned, while more informative categories such as $AP_{\uppercase\expandafter{\romannumeral2}}$ and $AP_{\uppercase\expandafter{\romannumeral4}}$ are retained. We elaborate on this strategy now.\eat{Reconsider Example~\ref{eg1}: GFP first groups AQPs by their memo-derived group trees. If a group tree is far from the QEP in LogiPln and cost (exceeding $\tau_d,\tau_c$), all plans in that group (typically $AP_{\uppercase\expandafter{\romannumeral6}}/AP_{\uppercase\expandafter{\romannumeral8}}$-like) are pruned, while informative categories such as $AP_{\uppercase\expandafter{\romannumeral2}}$ and $AP_{\uppercase\expandafter{\romannumeral4}}$ remain. We elaborate on this strategy now.} 

\begin{figure}[t]
  \centering
  \subfloat[MEMO Structure.]{\label{fig:memo}
  \includegraphics[width=0.45\columnwidth]{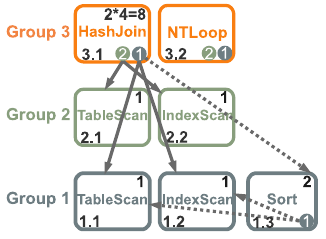}
  }
  \subfloat[Group Forest.]{\label{fig:groupTree}
  \includegraphics[width=0.3\columnwidth]{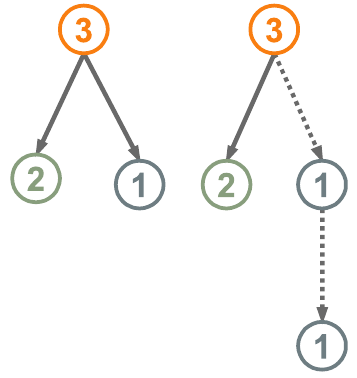}
  }
  \caption{Example of MEMO and Group Forest.}
  \label{fig:memoAndGt}
\end{figure}

\eat{We can leverage the method in~\cite{counting_fl_2000} to count and traverse the entire plan space by assigning to each plan a unique \textit{id}, starting from 1.  Furthermore,} Cascade optimizers~\cite{cascade_go_1995} utilize a structure \textsc{memo} (\ie a forest of operator trees) to track all possible plans. A \textsc{memo} consists of a series of \textit{groups}, each of which contains a number of \textit{expressions}. An \textit{expression} is used to represent a logical/physical operator. Fig.~\ref{fig:memo} shows an example of \textsc{memo}, where each row represents a group and the square represents an \textit{expression} in it. Each expression is associated with an \textit{id} in the form of \textsf{`GroupID.ExpressionID'}, shown at the lower left corner (\eg 3.1, 2.1). The lower right corner of an expression shows the id(s) of the child group(s) to which it points. Different expressions in the same group are different operators that can achieve the same functionality. For example, both 3.1 (\texttt{HashJoin}) and 3.2 (\texttt{NTLoop}) implement the join operation.  Suppose \textit{Group 3} is the \textit{root} group. By traversing down from the expression in the root group, we can get a plan. For example, the id sequence \textit{(3.1, 2.1, 1.1)} represents the plan \texttt{HashJoin[TableScan, TableScan]}. If we change one of the expressions, a different plan can be achieved. For example, suppose the child for \bcirclednumber{1} of expression 3.1 from 1.1 to 1.3 (as shown by the dotted line in Fig.~\ref{fig:memo}), we get a different plan \texttt{HashJoin[TableScan, Sort[TableScan]]} or \texttt{HashJoin[TableScan, Sort[indexScan]]}. The upper right corner of each expression records the number of alternatives with this expression as the root node. For example, the number of expression 1.3 is 2, which includes \textit{1.3$\rightarrow$1.1} and \textit{1.3$\rightarrow$1.2}. 

Observe that the number of alternatives of expression 3.1 in the root node is 8. That is, there are 8 different plans rooted at expression 3.1. However, there are only 2 tree structures in all these 8 plans, as shown in Fig.~\ref{fig:groupTree}. The number in each node represents the id of the group. Since these trees represent connections between different groups, we collectively refer to them as \textit{group forest}. Each \textit{group tree} in a group forest represents multiple plans with the same structure, and if the \textsf{LogiPln} distance of a group tree and the \textsc{qep} is larger than a given \textit{threshold} $\tau_d$, \ie $s\_dist(\pi,\pi^*)\ge\tau_d$, all the corresponding ids can be filtered out. Observe that the number of group trees is significantly smaller than the number of \textsc{aqp}s. Therefore, uninformative plans can be quickly eliminated accordingly. To construct the \textit{group forest}, we only need to traverse \textsc{memo} bottom-up, which is extremely fast, \ie time cost is linear to the height of the constructed tree.\eat{ Empirical results shows that even if the plan space reaches millions, it can still be constructed in 2s.}

Lastly, observe that if we filter based on $s\_dist$ only, informative plans of types $AP_{\uppercase\expandafter{\romannumeral4}}$ and $AP_{\uppercase\expandafter{\romannumeral7}}$ \eat{(Table~\ref{tbl:eightAP}) }can be abandoned. The difference between these two types of plans and $AP_{\uppercase\expandafter{\romannumeral6}}$ and $AP_{\uppercase\expandafter{\romannumeral8}}$ is that the cost is smaller. Hence, we filter using both the \textsf{LogiPln} and \textsf{Cost} differences, \ie $s\_dist(\pi,\pi^*)\ge\tau_d, cost\_dist(\pi,\pi^*) \ge \tau_c$ where $\tau_c$ is a pre-defined \textit{cost threshold}. 

\begin{algorithm}[t]
\scriptsize
\caption{Algorithm \textsc{ips}}
\label{alg:aos}

 \KwIn{\textsc{qep} $\pi^{\ast}$,  collection of all \textsc{aqp}s $\Pi^{\ast}$}
 \KwOut{sampling of $\Pi^{\ast}$}
    $\Pi_1 \gets \emptyset$\;
    \eat{\tcc{In practice, instead of traversing the entire space, we obtain the plans with the same LogiPln as QEP through the data structure inside the optimizer}}
    \For{$\pi_i$ in $\Pi^{\ast}$}
    {
           \If{$s\_dist(\pi_i, \pi^*)=0$} {$\Pi_1 = \Pi_1 \cup \{\pi_i\}$}
    }
    
    $\Pi_2 = UniformSampling(\Pi^{\ast})$\;
    return $\Pi_1 \cup \Pi_2$\;
\end{algorithm}

\subsection{Importance-based Plan Sampling (IPS)}

Enumerating the full plan space is prohibitively expensive for queries with many joins (\ie more than 10 relations). Although sampling can reduce this cost, simply deploying random sampling risks discarding many informative plans. We propose an \textit{importance-based plan sampling} (\textsc{ips}) strategy that exploits the insights gained from feedback from volunteers in Section~\ref{sec:feedback} to select plans for subsequent processing. The overall goal is to preserve the informative plans as much as possible during the sampling.  However, retaining all informative plans will inevitably demand traversal of every \textsc{aqp} to compute plan informativeness. Hence, instead of preserving all informative plans, we choose to keep the most important ones when the number of joined relations in a query is large. We exploit the feedback in Section~\ref{sec:feedback} to sample plans. Observe that $AP_{\uppercase\expandafter{\romannumeral2}}$ plans receive a significant number of votes among all informative plans. Hence, we inject such plans as much as possible in our sample. Also,  $AP_{\uppercase\expandafter{\romannumeral2}}$ plans have small LogiPln differences with the \textsc{qep}. Hence, given a \textsc{qep}, we can easily determine whether an \textsc{aqp} belongs to $AP_{\uppercase\expandafter{\romannumeral2}}$ or not by finding all alternative plans with the same LogiPln as \textsc{qep}. These plans are then combined with uniformly random sampled results. For instance, reconsider Example~\ref{eg1}. \textsc{ips} preserves \textsc{aqp}s that share the same \textsf{LogiPln} as the \textsc{qep} (capturing $AP_{\uppercase\expandafter{\romannumeral2}}$-like plans with large cost differences) and augments them with uniform samples to ensure that informative but rare categories are retained. The pseudocode of the \textsc{ips} strategy is shown in Algorithm~\ref{alg:aos}.

\begin{algorithm}[t]
\scriptsize
\caption{Algorithm \textsc{i-tips}}
\label{alg: iaqps}

 \KwIn{\textsc{qep} $\pi^*$, collection of all \textsc{aqp}s $\Pi^*$, user's rating $r$, \textsc{ips} threshold $\tau_\ell$, \textsc{gfp} threshold $\tau_g$}
 \KwOut{an alternative informative query plan $\pi^{(i)}$ during iteration $i$}
 
 \If{the number of join tables > $\tau_\ell$}{$\Pi^* =$ \textsc{ips}($\pi^*$, $\Pi^*$)\; }
 
 \If{ $|\Pi^*| > \tau_g$}{$\Pi^* =$\textsc{gfp}($\pi^*$, $\Pi^*$)  \tcp{Group forest-based pruning}}
 $update\_U$($\pi^*$, $r$) \tcp{Algorithm~\ref{alg: update_u}}
 \For{$\pi_j \in \Pi^{\ast} -\Pi$}
    {
    $d_j \gets Dist(\vec{\pi_j}, \vec{\pi^*})$
    
    }
    
  Find the plan $\pi^{(i)} \in \Pi^{\ast}-\Pi$ with the maximum $d$\;
  $\Pi=\Pi\bigcup\{\pi^{(i)}\}$\;
  return $\pi^{(i)}$
\end{algorithm}

\eat{ It first retrieves alternative plans with the same LogiPln as the \textsc{qep}. These represent potential \textit{AP2} type plans. Then remaining sampled plans are selected by random sampling. Note that \textsc{ips} is only invoked for queries involving a large number of join relations.} 

\eat{\begin{example}{\em Let us revisit the example shown in Fig.~\ref{fig:alter_plans}. Observe that \textit{AP1}-\textit{AP3} have the same LogiPln as \textsc{qep}. So they will be captured by \textsc{ips}. According to the analysis in Section~\ref{sec:feedback}, they are all informative plans for learners.\/}
\EndOfExample\end{example}}

\eat{\begin{example}{\em Let us revisit the example shown in Fig.~\ref{fig:alter_plans}. Observe that \textit{AP1} differentiates with the \textsc{qep} by commuting only the pair of tables \textsf{movie\_keyword} and \textsf{keyword}. According to our distance function defined in Definition~\ref{def:distance}, $s\_dist()=0$ but has a small $c\_dist()$. Suppose there is a large number of tables to be joined. Then existing join order pruning solutions (\eg DPsube~\cite{DBLP:conf/sigmod/MoerkotteFE13}) may prune out the orders with excessive estimated cost. This may eliminate \textit{AP1} (significantly higher cost compared to the \textsc{qep}) from $\Pi^*$. Consequently,  \textit{AP1} will not be considered subsequently as an informative \textsc{aqp}. Using \textsc{ips}, given the join order of the \textsc{qep} (\eg $[\ldots,\mathsf{SeqScan}[movie\_keyword],\mathsf{SeqScan}[keyword],\ldots]$), \textit{AP1} will be restored in $\Pi^*$ as $edit(QEP,AP1)=2$. In contrast, consider \textit{AP3}. It will be eliminated by a join order pruning strategy and will not be returned by \textsc{ips} as $edit(QEP,AP3)>2$.  Recall that \textit{AP3} does not exhibit any marginal utility compared to \textit{AP1} and hence eliminating it while maintaining \textit{AP1} in the result is palatable to our goal of retrieving informative \textsc{aqp}s.\/}
\EndOfExample\end{example}}
\subsection{\textsc{i-tips} Algorithm}
Algorithm~\ref{alg: iaqps} outlines the procedure to address the \textsc{i-tips} problem defined in Defn.~\ref{iAPQ}. If the number of joined relations is greater than the predefined \textit{\textsc{ips} threshold} $\tau_\ell$, we invoke \textsc{ips} (Line 2). In addition, if the number of \textsc{aqp}s is greater than the predefined \textit{\textsc{gfp} threshold} $\tau_g$, it invokes the GFP strategy (Line 5), to filter out some uninformative plans early. 

Notably, \textsc{i-tips} allows a user to explore \textit{AP} iteratively based on what has been revealed so far. We take into account the user preference in the form of rating $r$ during exploration. Algorithm \ref{alg: update_u} shows the process of updating the plan informativeness, and the procedure strictly follows what we have discussed after Defn.~\ref{planinfor}.
Reconsider Example~\ref{eg1}. After Lena marks $AQP_2$ as preferred ($r=1$) plan, \textsc{i-tips} shifts the reference point toward its category, guiding subsequent suggestion towards nearby informative categories instead of repeating near-duplicate plans.



\begin{algorithm}[t]
	\scriptsize
	\caption{Algorithm \textsc{b-tips-heap}}
	\label{alg: greedy}
	
	\KwIn{\textsc{qep} $\pi^*$, collection of all \textsc{aqp}s $\Pi^*$, budget $k$,  \textsc{ips} threshold $\tau_\ell$, \textsc{gfp} threshold $\tau_g$}
	\KwOut{top-k informative query plans $\Pi$}
	
	Lines 1-6 in Algorithm~\ref{alg: iaqps}\;
	
	set $Heap$ as a max-heap  \tcp{record the minimum distance}
	$\Pi \gets \{\pi^*\}$ \\
	
	\ForEach{$\pi_j \in \Pi^*-\{\pi^*\}$}{
		$d \gets Dist(\pi^*, \pi_j)$\;
		$Heap.push(tuple(d, \pi_j))$\;
	}

	\While{$\left|\Pi\right| \neq k+1$}	{
		set $tmpRecord$ is $\emptyset$\;
		set $optTuple$ is None\;
		
		\While{$Heap$ not empty}{
			$tmpTuple \gets Heap.pop()$\;
			\If{$tmpTuple < optTuple$}{
				break;
			}
			\If{$tmpTuple$ is not latest}{
				$tmpTuple \gets Update(tmpTuple)$\;
			}
			\If{$tmpTuple > optTuple$}{
				$optTuple \gets tmpTuple$\;
			}
			$tempRecord.push(tempTuple)$\;
		}
		$\Pi \gets \Pi \cup \{optTuple.\pi\}$\;
		\tcc{plans that are removed from the heap but not in use are put back into the heap}
		\ForEach{$t$ in $tmpRecord-\{optTuple\}$}{
			$Heap.push(t)$ \;
		}
	}

	return $\Pi\setminus\{\pi^*\}$
\end{algorithm}
\subsection{\textsc{b-tips} Algorithm}
\eat{As discussed in Section~\ref{sec:problem_def}, the \textsc{b-tips} problem is NP-hard. Hence, the practical way to address it is to find a polynomial solution, either heuristic or approximate.} We propose two flavors of the algorithm to address \textsc{b-tips}, namely, \textsc{b-tips-basic} and \textsc{b-tips-heap}. 

\noindent\textbf{\textsc{B-TIPS-basic} algorithm.} We can exploit Algorithm~\ref{alg: iaqps} to address the \textsc{b-tips} problem. We first add the \textsc{qep} $\pi^*$ to the final set $\Pi$. Then iteratively invoke it until there are $k+1$ plans in $\Pi$. Each time we just need to add the result of Algorithm~\ref{alg: iaqps} to $\Pi$. This algorithm has a time complexity of $O\left(k|\Pi^{\ast}|\right)$.

\noindent\textbf{\textsc{B-TIPS-heap} algorithm.} 
Since the minimum distance between each unselected plan in $\Pi^* \setminus \Pi$ and the selected set $\Pi$ decreases monotonically as $|\Pi|$ increases, we can use a max-heap to filter plans with very small distances to avoid unnecessary computation. Algorithm~\ref{alg: greedy} outlines the procedure. We first apply the two pruning strategies (Line 1). In Line 6, we insert into the heap a tuple containing the ID of an alternative plan $\pi$ and its distance, defined as the minimum distance between $\pi$ and the plans in $\Pi$. Each time we add a plan to $\Pi$, we do not recompute all distances; instead, we only verify whether each stored distance is still valid (Line 17) and update it if necessary (Line 18). At each iteration, we select the alternative plan with the largest minimum distance. For instance, in Example~\ref{eg1}, when $k=2$, \textsc{b-tips} prefers sets such as $\{AQP_1, AQP_2\}$ or $\{AQP_2, AQP_3\}$, which collectively maximize the minimum relevance-distance trade-off, instead of selecting two highly similar plans. Observe that the worst-case time complexity is also $O\left(k|\Pi^{\ast}|\right)$. However, it is significantly more efficient than \textsc{b-tips-basic} in practice (detailed in Section~\ref{sec:experiment}).


For a \textit{MaxMin} problem, a polynomial-time approximation algorithm provides a performance guarantee of $\rho \ge 1$, if for each instance of the problem, the minimum distance of the result set produced by the algorithm is at least $1/\rho$ of the optimal set minimum distance. We will also refer to such an algorithm as a $\rho$-approximation algorithm.

\begin{theorem}
	\label{the:approximate}
\textit{\textsc{b-tips} is a 2-approximation algorithm if the distance satisfies the triangle inequality (Lemma~\ref{disttri}).}
\end{theorem}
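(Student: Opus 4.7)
The plan is to recognize that both \textsc{b-tips-basic} and \textsc{b-tips-heap} implement the same greedy selection rule — starting from $\{\pi^*\}$, repeatedly add the plan whose minimum $Dist$-distance to the currently chosen set is maximum — so it suffices to analyze this one greedy scheme. Since Lemma~\ref{disttri} already establishes that $Dist(\cdot,\cdot)$ is non-negative and obeys the triangle inequality, I will invoke the classical greedy MaxMin dispersion analysis (in the style of Ravi--Rosenkrantz--Tayi) and adapt it to our setting where $\pi^*$ is forced into every feasible solution.

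First I would set up notation: let $G = \{\pi^*, g_2, \ldots, g_{k+1}\}$ be the set returned by the algorithm, listed in the order of insertion, and let $U(G) = d$ be its minimum pairwise $Dist$. Let $\Pi_{OPT}\cup\{\pi^*\}$ be an optimal solution with minimum pairwise distance $d^{*}$. The goal is to show $d \geq d^{*}/2$. I would argue by contradiction: suppose $d < d^{*}/2$. Then the minimum distance is attained by some pair $(g_a, g_b)$ with $a < b$ (where $g_1 := \pi^{*}$), so when $g_b$ was appended, the maximum over all candidates of the minimum distance to $G_{b-1} := \{\pi^*, g_2, \ldots, g_{b-1}\}$ was at most $Dist(g_b, g_a) < d^{*}/2$.

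The core step is a pigeonhole argument using the triangle inequality. Consider the closed balls $B(x, d^{*}/2) = \{\pi : Dist(\pi, x) \leq d^{*}/2\}$ for each $x \in G_{b-1}$; there are $b-1 \leq k$ such balls. Any two plans in $\Pi_{OPT}\cup\{\pi^*\}$ are at $Dist$-distance $\geq d^{*}$, so by triangle inequality each ball contains at most one member of $\Pi_{OPT}\cup\{\pi^*\}$. Since $|\Pi_{OPT}\cup\{\pi^*\}| = k+1 > b-1$, some $\pi' \in \Pi_{OPT}\cup\{\pi^*\}$ lies outside every ball; moreover $\pi' \neq \pi^*$ because $\pi^*$ lies in its own ball. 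Hence $\pi' \in \Pi^{*}\setminus G_{b-1}$ was a legal candidate at step $b$ whose minimum distance to $G_{b-1}$ exceeds $d^{*}/2 > Dist(g_b, g_a)$, contradicting the greedy choice of $g_b$. Therefore $d \geq d^{*}/2$, which is exactly the $2$-approximation guarantee.

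The main obstacle I anticipate is not the dispersion argument itself — which is textbook — but bookkeeping around the fact that $\pi^*$ is pinned in both $G$ and the optimum; I would need to make sure the pigeonhole comparison is done between $|G_{b-1}|$ balls and the full $(k+1)$-element optimum so that the slack used to produce $\pi'$ is valid for every $b \in \{2,\ldots,k+1\}$. A small secondary point worth checking is that the \textsc{gfp} and \textsc{laps} pruning of $\Pi^{*}$ does not invalidate the analysis: as long as the optimum $\Pi_{OPT}$ referenced in the approximation bound is defined with respect to the same candidate pool on which the greedy operates, the argument above goes through verbatim.
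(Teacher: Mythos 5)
Your proposal is correct and follows essentially the same route as the paper: both are the classical greedy MaxMin dispersion analysis whose crux is that, by the triangle inequality, no already-selected plan can lie within $OPT/2$ of two distinct members of the optimal set, so a counting argument produces an unblocked candidate that the greedy would have preferred. The paper packages this as an induction over insertions with a ``blocking'' argument while you package it as a single contradiction via ball-counting at the step where the final minimum is realized, but the key lemma and the treatment of the pinned $\pi^*$ are the same.
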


\noindent\textbf{\textit{Remark}.} Our algorithms are generic and not limited to database education. First, they are orthogonal to any candidate plan set retrieval strategy. Second, the \textsc{gfp} and \textsc{ips} strategies—though illustrated with education-focused plan categories—can be easily adapted to other applications (\eg database administration) with different informative and uninformative plan categories derived from a human-centered characterization of \textsc{aqp}s. They are also easily adaptable to non-human-centered characterizations of \textsc{aqp}s by modifying the if-statement in Line 3 of Algorithm~\ref{alg:aos}.

\begin{figure*}[t]
  \centering
  \subfloat[tpch\_18]{\label{fig:infor_18}
  \includegraphics[width=0.2\linewidth, height=2.1cm]{revision_material/Exp1/effect/tpch/18/effect\_tpch\_18.pdf}}
  \subfloat[tpch\_17]{\label{fig:infor_17}
  \includegraphics[width=0.2\linewidth, height=2.1cm]{revision_material/Exp1/effect/tpch/17/effect\_tpch\_17.pdf}}
    \subfloat[tpch\_11]{\label{fig:infor_11}\includegraphics[width=0.2\linewidth, height=2.1cm]{revision_material/Exp1/effect/tpch/11/effect\_tpch\_11.pdf}} 
\subfloat[tpch\_15]{\label{fig:infor_15}\includegraphics[width=0.2\linewidth, height=2.1cm]{revision_material/Exp1/effect/tpch/15/effect\_tpch\_15.pdf}} 
\subfloat[tpch\_22]{\label{fig:infor_22}\includegraphics[width=0.2\linewidth, height=2.1cm]{revision_material/Exp1/effect/tpch/22/effect\_tpch\_22.pdf}} \\ 
  \subfloat[tpch\_18]{\label{fig:rtime_18}
  \includegraphics[width=0.2\linewidth, height=2.1cm]{revision_material/Exp1/efficiency/tpch/18/efficiency\_tpch\_18.pdf}
  }
  \subfloat[tpch\_17]{\label{fig:rtime_17}
  \includegraphics[width=0.2\linewidth, height=2.1cm]{revision_material/Exp1/efficiency/tpch/17/efficiency\_tpch\_17.pdf}
  }
\subfloat[tpch\_11]{\label{fig:rtime_11}\includegraphics[width=0.2\linewidth, height=2.1cm]{revision_material/Exp1/efficiency/tpch/11/efficiency\_tpch\_11.pdf}} 
\subfloat[tpch\_15]{\label{fig:rtime_15}\includegraphics[width=0.2\linewidth, height=2.1cm]{revision_material/Exp1/efficiency/tpch/15/efficiency\_tpch\_15.pdf}}
\subfloat[tpch\_22]{\label{fig:rtime_22}\includegraphics[width=0.2\linewidth, height=2.1cm]{revision_material/Exp1/efficiency/tpch/22/efficiency\_tpch\_22.pdf}}
\caption{(a-e) Plan informativeness; (f-j) Running time (for the 5 \textsc{tpc-h} queries in the first column of Table~\ref{tab:imdb_space}).}
	\label{fig:informative}
\end{figure*}

\begin{figure}[t]
  \centering
  \subfloat[tpcds\_13]{\label{fig:infor_tpcds_13}
  \includegraphics[width=0.24\linewidth]{revision_material/Exp1/effect/tpcds/13/effect\_tpcds\_13.pdf}}
\subfloat[tpcds\_47]{\label{fig:infor_tpcds_47}\includegraphics[width=0.24\linewidth]{revision_material/Exp1/effect/tpcds/47/effect\_tpcds\_47.pdf}}
  \subfloat[tpcds\_13]{\label{fig:rtime_tpcds_13}
  \includegraphics[width=0.24\linewidth]{revision_material/Exp1/efficiency/tpcds/13/efficiency\_tpcds\_13.pdf}
  }
\subfloat[tpcds\_47]{\label{fig:rtime_tpcds_47}\includegraphics[width=0.24\linewidth]{revision_material/Exp1/efficiency/tpcds/47/efficiency\_tpcds\_47.pdf}}
  \caption{(a) Plan informativeness; (b) Running time for two TPC-DS queries (13, 47) in Table~\ref{tab:imdb_space}.} %
	\label{fig:informative_tpcds}
\end{figure}

\begin{figure}[t]
  \centering
\subfloat[Initialization]{\label{diff_tree_kernel_a}\includegraphics[width=0.44\columnwidth]{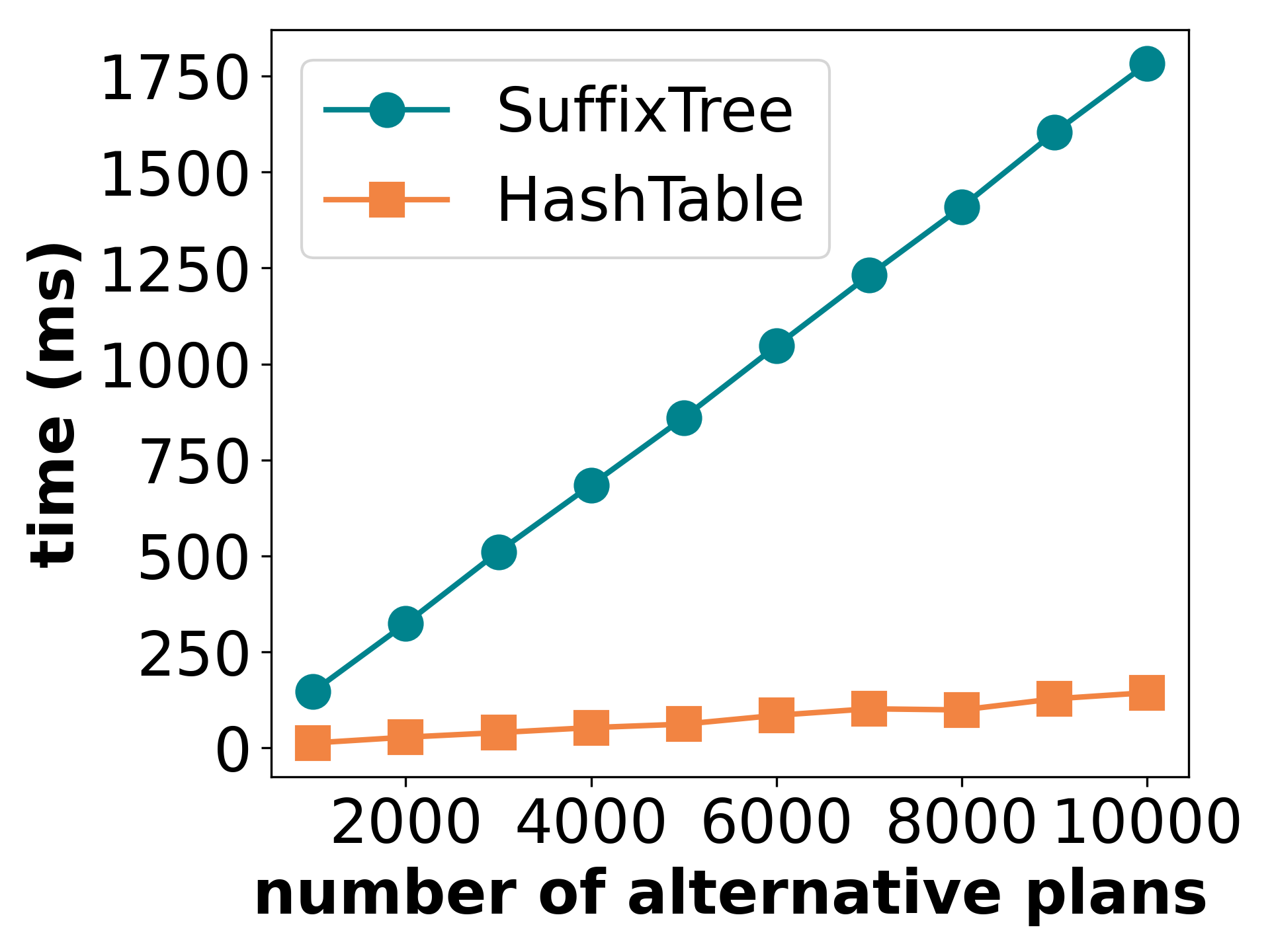}}
\subfloat[Calculate subtree kernel]{\label{diff_tree_kernel_b}\includegraphics[width=0.44\columnwidth]{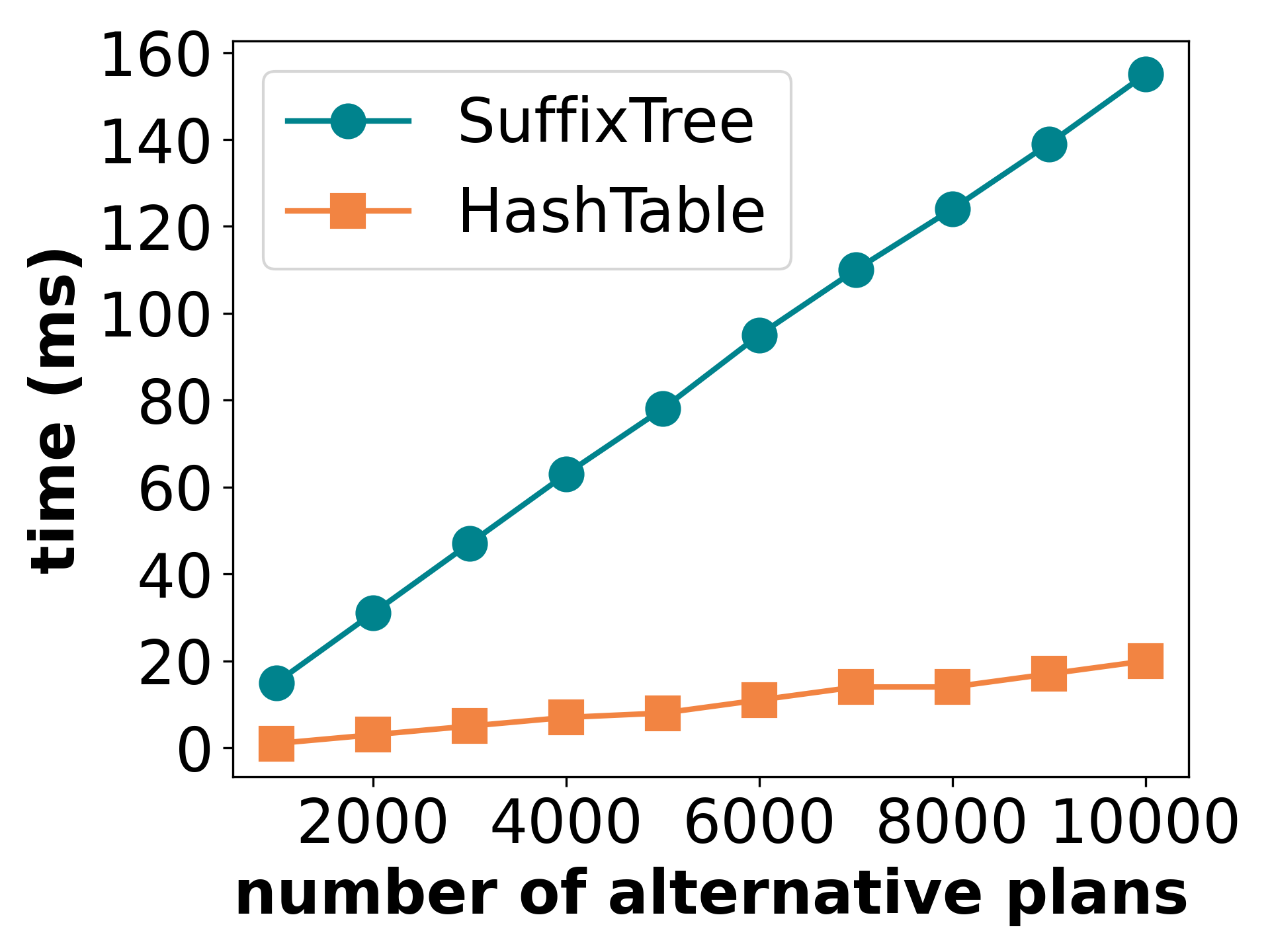}}
	\caption{Subtree kernel.}
	\label{fig:diff_tree_kernel}
\end{figure}

\section{Performance Study}\label{sec:experiment} 
In this section, we investigate the performance of the \textsc{tips} algorithms\footnote{The source code can be found in \url{https://github.com/ZiHao256/TIPS}}. In the next section, we report their usefulness and applicability in database education.
\subsection{Experimental Setup}
\noindent\textbf{Datasets.}  We use two datasets. The first is the Internet Movie Data Base (IMDb) dataset~\cite{imdb,good_viktor_2015}, \eat{ The data is freely available at \url{ftp://ftp.fu-berlin.de/pub/misc/movies/database/} for non-commercial use.} \eat{, which consumes 3.6  GB of space when exported to \textsc{csv} files} containing the two largest tables, \textsf{cast\_info} and \textsf{movie\_info} have 36M and 15M rows, respectively. The second one is the TPC-H dataset~\cite{tpch}. We use the TPC-H v3 and 1 GB data generated using \textit{dbgen}~\cite{tpch}.

Since the number of candidate plans affects the performance of \textsc{tips}, which is mainly affected by the \textsc{sql} statement, we choose different \textsc{sql} queries to vary the plan size. For IMDb,  ~\cite{good_viktor_2015} provides 113 \textsc{sql} queries. We compute the plan-space size $|\Pi^*|$ for all 113 queries, bin the queries into eight equal-width groups based on $|\Pi^*|$ (\eg the first group includes queries with plan counts in the range [1500, 2500]), and select the median query in each bin (manually swap it out for a nearby query if its \textsf{LogiPln} is too similar to one that’s already been selected). For the TPC-H dataset, there are 22 standard \textsc{sql} statements. Because of the uneven distribution of the plan space of these queries, it is difficult to find \textsc{sql} statements that are evenly distributed as in IMDb dataset. We select 5 \textsc{sql} queries from them for our experiments. To cover more diverse schemas and query patterns, we additionally include five TPC-DS queries (13, 26, 34, 47, 48) as a complementary decision-support workload. These selected queries and their plan space sizes are reported in Table~\ref{tab:imdb_space}. 

\begin{table}[t]
  \centering 
  \small
  \caption{Datasets and plan-space sizes.}
  \resizebox{\columnwidth}{!}{%
  \begin{tabular}{lclclclc}
    \toprule
    SQL& $|\Pi^*|$ & SQL& $|\Pi^*|$ & SQL& $|\Pi^*|$ & SQL& $|\Pi^*|$ \\
    \midrule
    tpch\_18.sql & 2070 & tpcds\_48.sql & 956 & imdb\_6b.sql & 2059 & imdb\_1d.sql & 7065\\
    tpch\_17.sql & 2864 & tpcds\_34.sql & 2357 & imdb\_2a.sql & 3211 & imdb\_4b.sql & 7887\\
    tpch\_11.sql & 11880 & tpcds\_26.sql & 6177 & imdb\_5c.sql & 3957 & imdb\_1a.sql & 9007\\
    tpch\_15.sql & 13800 & tpcds\_13.sql & 11309 & imdb\_2b.sql & 4975 & imdb\_6e.sql & 10200\\
    tpch\_22.sql & 51870 & tpcds\_47.sql & 33769 &  &  &  & \\
    \bottomrule
  \end{tabular}}
  \label{tab:imdb_space}
\end{table}

\noindent\textbf{Baselines.} We are unaware of any existing informative plan selection technique. Hence, we are confined to comparing the \textsc{tips} algorithms with the following baselines. (a) \textsc{random}: We implemented a random selection algorithm. It is executed repeatedly $n$ times, and each time a result is randomly generated. The best result is returned after $n$ iterations. In our experiment, we set $n=30$. (b) \textsc{cost}: We implement a \textit{cost-based approach} that returns the top-$k$ plans with the least cost in $\Pi^\ast$ besides the \textsc{qep}. (c) \textsc{embedding}:  We also implemented an embedding-based approach, where we embed each \textsc{aqp} into a vector by feeding its estimated time cost and tree structure into an autoencoder following~\cite{AroraLM17} and then rank the \textsc{aqp}s according to their Euclidean distance \wrt the \textsc{qep}. Afterwards, the plan informativeness of the top-$k$ can be computed using Defn.~\ref{planinfor}. (d) \textsc{category-rep}: We partition \textsc{aqp}s into eight plan categories by thresholding \textsf{LogiPln}, \textsf{PhyOpr}, and \textsf{Cost} distances into small vs. large, using the same thresholds as \textsc{gfp}, and select one representative per category (the plan with the highest relevance within that category). If $k<8$, we take the top-$k$ representatives by relevance; if $k\ge 8$, we return all representatives and fill the remaining slots by relevance. This baseline ensures category coverage but does not optimize the global \textit{MaxMin} objective addressed by \textsc{tips}.

All algorithms are implemented in C++ on a 3.2GHz 4-core machine with 16GB RAM. We denote \textsc{b-tips-basic} and \textsc{b-tips-heap}  as \textsc{b-tips-b} and \textsc{b-tips-h}, respectively. Unless otherwise specified, we set $\alpha=0.33$, $\beta=0.33$, $\lambda=0.5$, $\tau_\ell = 10$, and $\tau_g = 50000$. 

\begin{figure*}[t]
  \centering
  \subfloat[the weight of relevance is 0]{\label{subfig:relevance_0}
  \includegraphics[width=.32\linewidth]{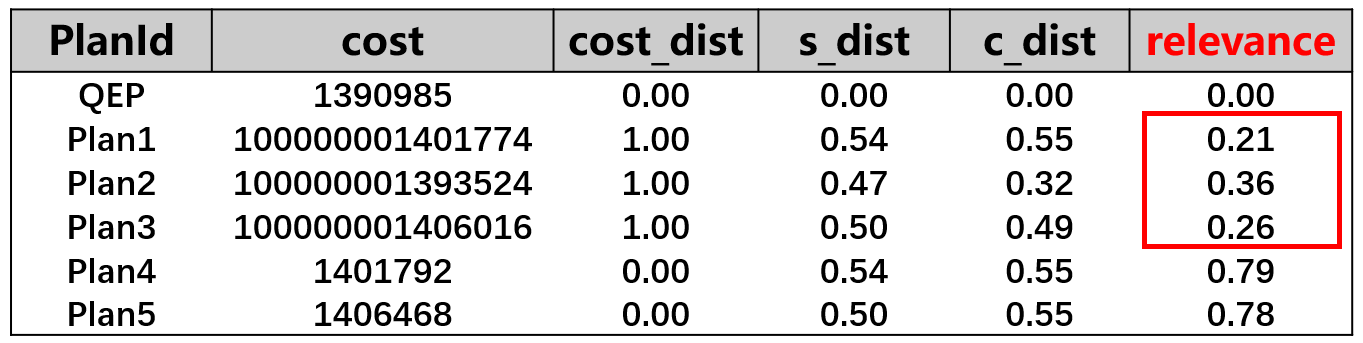}
  }
  \subfloat[the weight of relevance is 0.5]{\label{subfig:relevance_5}
  \includegraphics[width=.32\linewidth]{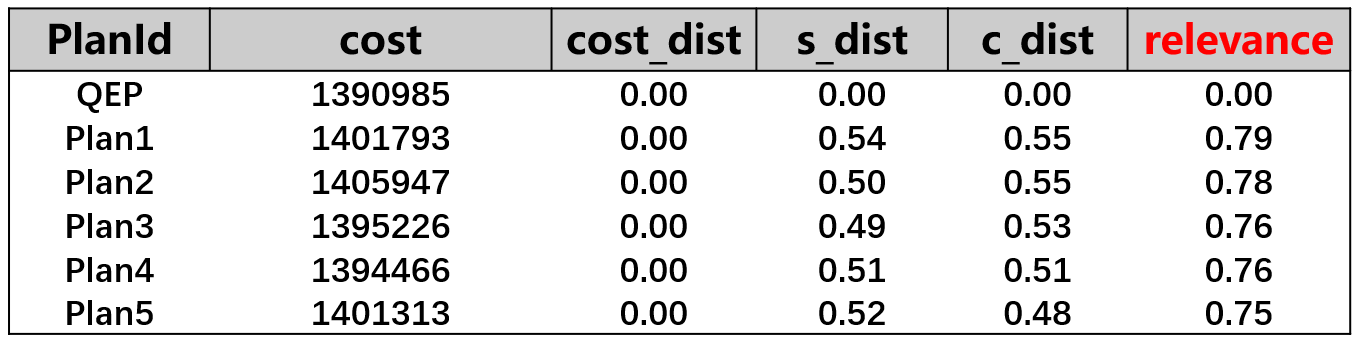}
  }
  \subfloat[the weight of relevance is 1]{\label{subfig:relevance_1}
  \includegraphics[width=.32\linewidth]{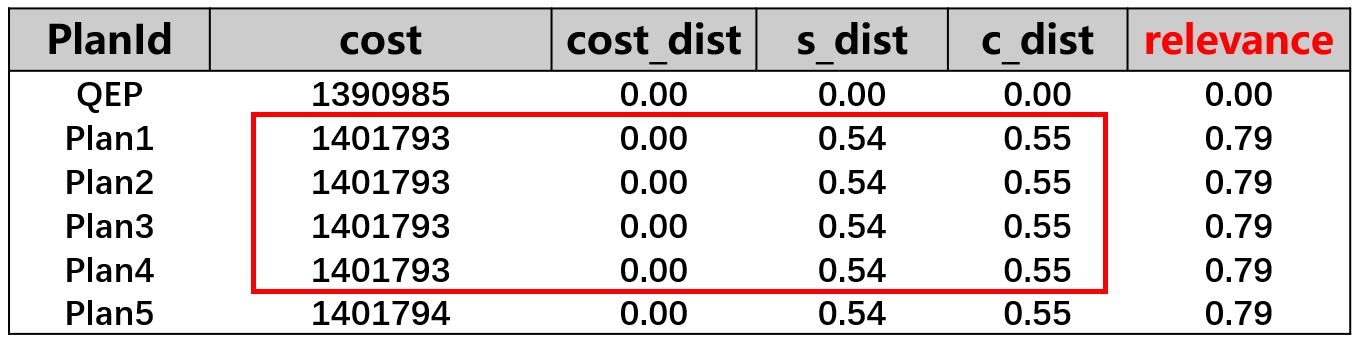}
  }
    \caption{Effect of relevance.}
  \label{fig:relevance}
\end{figure*}

\begin{figure*}[t]
	\centering
	\subfloat[Effect of pruning]{\includegraphics[width=0.25\columnwidth, height=2.5cm]{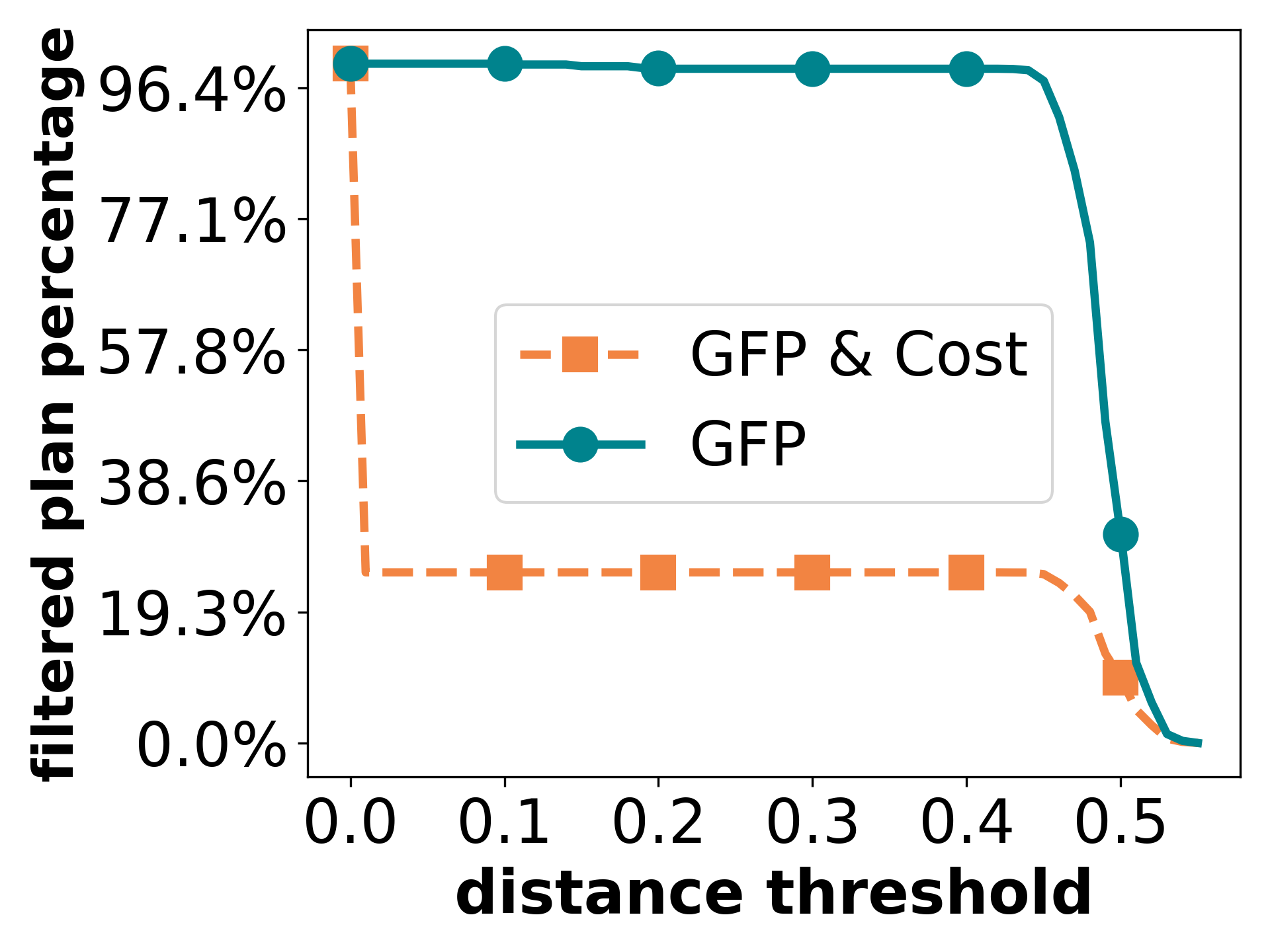}}
	\subfloat[Plan informativeness]{\includegraphics[width=0.25\columnwidth, height=2.5cm]{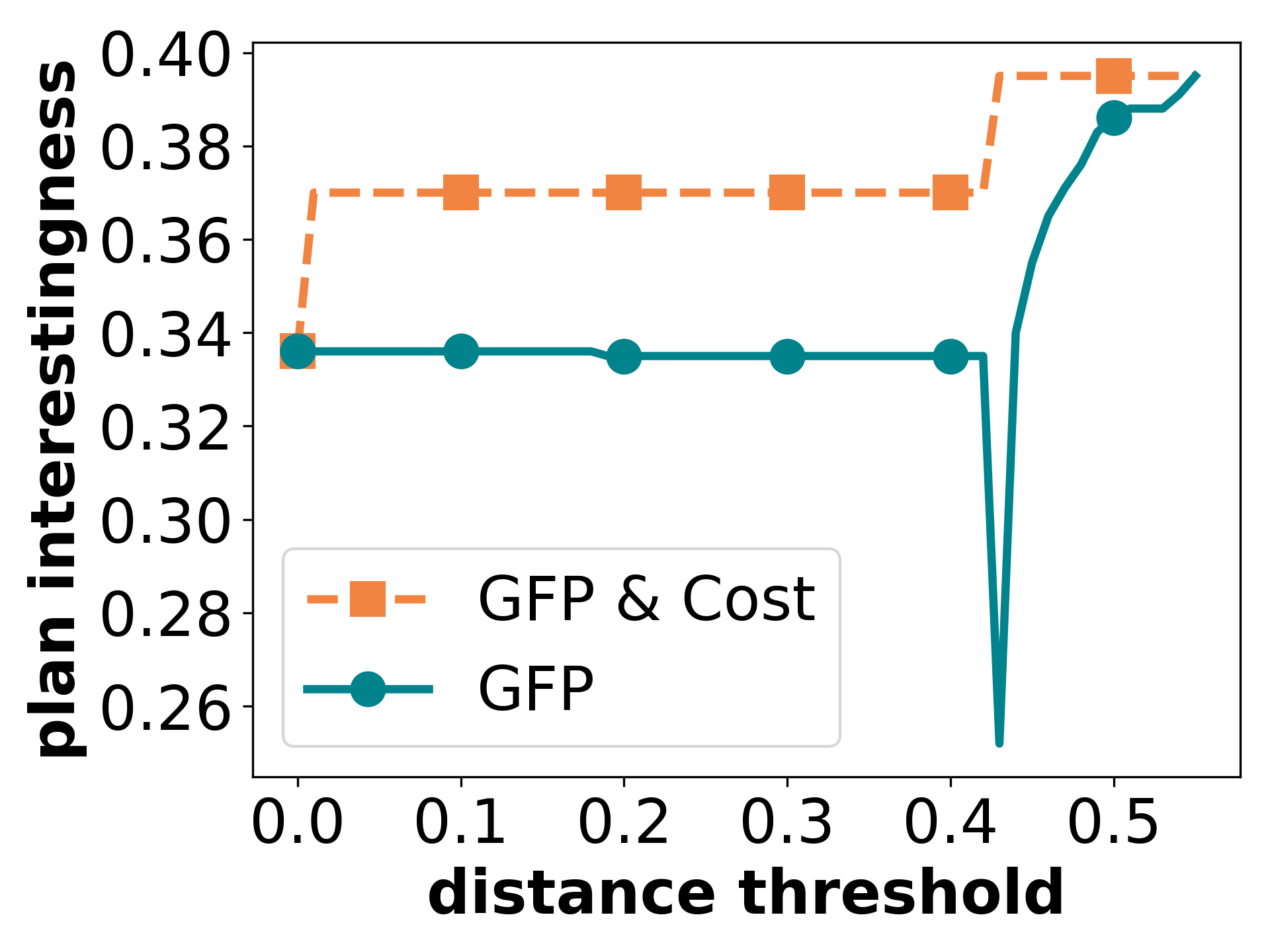}}
	\subfloat[Efficiency]{\includegraphics[width=0.25\columnwidth, height=2.5cm]{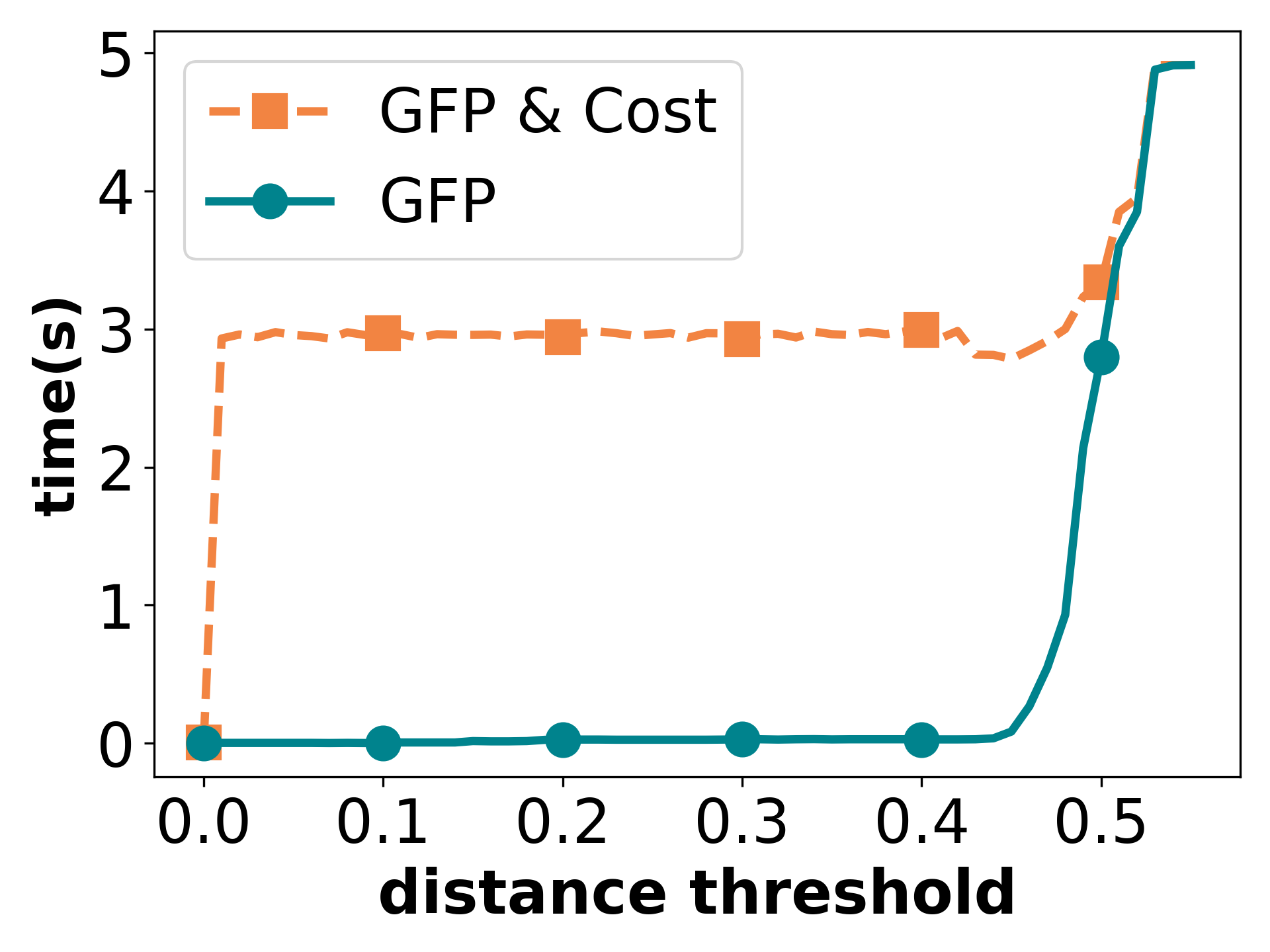}}
	\subfloat[Sample 1000/5000 plans]{\label{subfig:aos_1000_5000}\includegraphics[width=0.25\columnwidth, height=2.5cm]{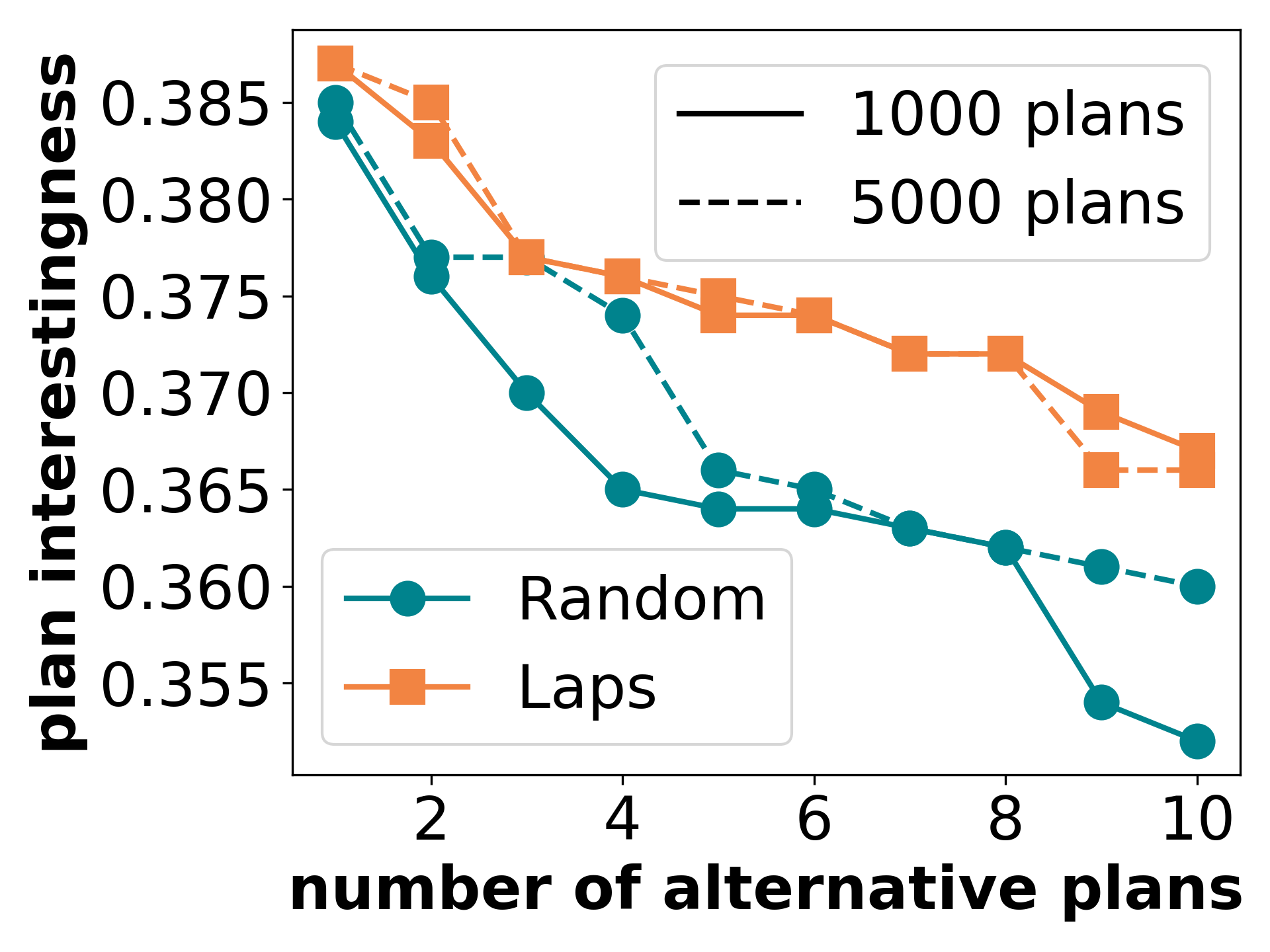}}
        \caption{Pruning strategies: (a--c) effect of \textsc{gfp} (filtered plan percentage, plan informativeness, and efficiency); (d) effect of \textsc{ips}.} %
	\label{fig:gtFilter}
\end{figure*}

\subsection{Experimental Results}\label{ssec:expres}

\noindent\textbf{Exp 1: Efficiency and effectiveness.} We first report the efficiency and effectiveness of different algorithms. For each query in Table~\ref{tab:imdb_space}, we use different algorithms to select $k$ plans and report the plan informativeness $U$ and runtime, shown in Fig.~\ref{fig:informative}(a-e) and ~\ref{fig:informative}(f-j), respectively.\eat{The results are plotted in Figures~\ref{fig:diff_tree_kernel}(a) and ~\ref{fig:diff_tree_kernel}(b). Since the results on different queries are qualitatively similar, we only show results for one query.} Figures \ref{fig:informative} and \ref{fig:informative_tpcds} report plan informativeness and runtime for five TPC-H queries and two representative TPC-DS queries, respectively, with similar trends observed for the remaining TPC-DS queries. We make the following observations. First, the plan informativeness obtained by \textsc{random}, \textsc{cost} and \textsc{embedding} are significantly inferior to that of the \textsc{b-tips} algorithms. \textsc{category-rep} improves coverage across plan categories but still underperforms \textsc{b-tips} because it does not optimize the global \textit{MaxMin} objective. Moreover, the effectiveness of \textsc{embedding} decreases with respect to $k$ unless the volume of the plan space is large enough, \ie $|\Pi^*|>10000$. Second, the running time of \textsc{b-tips-h} is closer to \textsc{cost} and \textsc{random}, significantly faster than \textsc{b-tips-b} and \textsc{embedding}, and stable with increasing $k$ due to the heap-based pruning strategy.

\noindent\textbf{Exp 2: Comparison of strategies for subtree kernel.} We compare the efficiency of two subtree kernel computation methods during initialization and execution. The results are reported in Figures~\ref{fig:diff_tree_kernel}. Observe that our hash table-based strategy is consistently more efficient.

\noindent\textbf{Exp 3: Impact of relevance.} In this set of experiments, we evaluate the relevance measure (Defn.~\ref{def:relevance}). We set $\lambda$ to 0, 0.5, and 1, and use \textsc{b-tips} to select 5 \textsc{aqp}s; the results appear in Fig.~\ref{fig:relevance}. When $\lambda=0$ (\resp $\lambda=1$), $U$ depends only on relevance (\resp distance). Thus, in Fig.~\ref{fig:relevance}\subref{subfig:relevance_0}, the selected plans differ greatly from \textsc{qep} but have low relevance, while in Fig.~\ref{fig:relevance}\subref{subfig:relevance_1}, the plans have high relevance but are closer to \textsc{qep}. By the learner-centric characterization of \textsc{aqp}s in Section~\ref{sec:feedback}, both situations are undesirable. When $\lambda=0.5$, $U$ combines distance and relevance. As shown in Fig.~\ref{fig:relevance}\subref{subfig:relevance_5}, the two values are balanced: neither is particularly small, the selected plans are dissimilar yet highly relevant, and the result is more reasonable and informative.
We conduct a leave-one-out ablation over the eight labeled plan types, comparing the fixed cubic relevance form to standard regressors trained on the same polynomial features. The cubic form achieves the best agreement with learner feedback (MAE 0.225, RMSE 0.237, Spearman $\rho=0.873$), whereas linear regression and SVR baselines yield higher errors (MAE 0.396--1.048) and lower or negative rank agreement ($\rho\in[-0.86,-0.19]$).

\noindent\textbf{Exp 4: Impact of \textsc{GFP} strategy.} We report the performance of the \textsc{gfp}-based pruning strategy, which underpins the system’s practical latency and whose impact is evaluated here. Fig.~\ref{fig:gtFilter}(a)-(c) plot the results on \textit{tpch\_22.sql}, which has a large plan space. We compare the pruning power, plan informativeness, and efficiency of two variants of \textsc{gfp}, one pruned based only on $s\_dist(\cdot)$ and the other with both $s\_dist(\cdot)$ and $cost\_dist(\cdot)$ (referred to as \textsc{gfp} and \textit{Cost}, respectively). We vary the distance threshold $\tau_d$ and set the cost threshold $\tau_c=\tau_d$. Since the distribution of the plans is not uniform, the curves do not change uniformly. Pruning based only on $s\_dist(\cdot)$ removes more plans but degrades plan informativeness. 
Compared to no \textsc{gfp} (plan informativeness 0.395, time 5s), using \textsc{gfp} with \textit{Cost} (or \textsc{gfp} alone) saves over 40\% (or 95\%) of runtime while sacrificing less than 6\% (or 14\%) of plan informativeness, justifying the use of both $s\_dist(\cdot)$ and $cost\_dist(\cdot)$ in \textsc{gfp} and \textit{Cost}.

Fig.~\ref{fig:gfp} reports the performance of \textsc{gfp} strategy for different $k$ values. It consistently filters more aggressively but achieves lower informativeness than \textit{\textsc{gfp}+Cost}, and the efficiency-informativeness trade-off remains stable as $k$ increases. \eat{Our system implements both of these methods, allowing users to choose according to their needs.}
Figs.~\ref{fig:apx_gfp_k30} and~\ref{fig:apx_gfp_k50} provide supplementary results for $k=30$ and $k=50$, respectively. Observe that they exhibit trends qualitatively similar to those shown in Fig.~\ref{fig:gfp}.

\noindent\textbf{Exp 5: Impact of the \textsc{IPS} strategy.} We evaluate the \textsc{ips} strategy from Section~\ref{ssec:plansel} using \textit{13c.sql} in~\cite{good_viktor_2015}, which has 11 joins. We create two datasets: (a) a baseline set of $n$ plans $(n=1000, 2000, \ldots, 5000)$  randomly sampled from all available plans; and (b) an \textsc{ips}-enhanced set constructed by using \textsc{ips} to collect all alternative plans with the same \textsf{LogiPln} as \textsc{qep}, add them to $n$ randomly selected alternative plans, and then randomly removing plans to match the baseline’s size. We then run \textsc{b-tips} on both datasets and compare plan informativeness. Fig.~\ref{fig:gtFilter}(d) shows that \textsc{ips} typically samples plans with higher informativeness.

\eat{we only change the join order of any two tables to generate a series of alternative plans, add them to the $n$ randomly selected alternative plans. Notably, for a fair comparison, we randomly delete the same number of plans from (b) to ensure that the two sets of data, \ie (a) and (b), have the same size of plan space. After that, we execute \textsc{b-aqps} on the two sets of data and compare the educational utility scores of the  results. Fig.~\ref{fig:gtFilter}(c,d) reports the results.}

\eat{It can be seen from Fig.~\ref{fig:gtFilter}(c,d) that when the number of alternative plans is small, the effect of \textsc{aos} strategy is obvious. But the effect of this method is gradually no longer obvious as the number of alternative plans increases. This is because \textsc{aos} only generates plans that are within a single commutation of joined tables compared to that of the \textsc{qep}. The number of plans \textsc{aos} can generate is limited. When the number of alternative plans gradually increases, its effect can easily be overshadowed by rest of the plans. In practice, this is fine as according to our user study, the number of alternative plans returned to the learners should be small  (within 5) to avoid fatigue and confusion. Obviously, the \textsc{aos} strategy shows significant benefits when less than 10 alternative plans are selected. }

\begin{figure*}[t]
	\centering
	\subfloat[Effect of pruning]{\label{fig:gfp_k_filter}\includegraphics[width=0.32\linewidth, height=3cm]{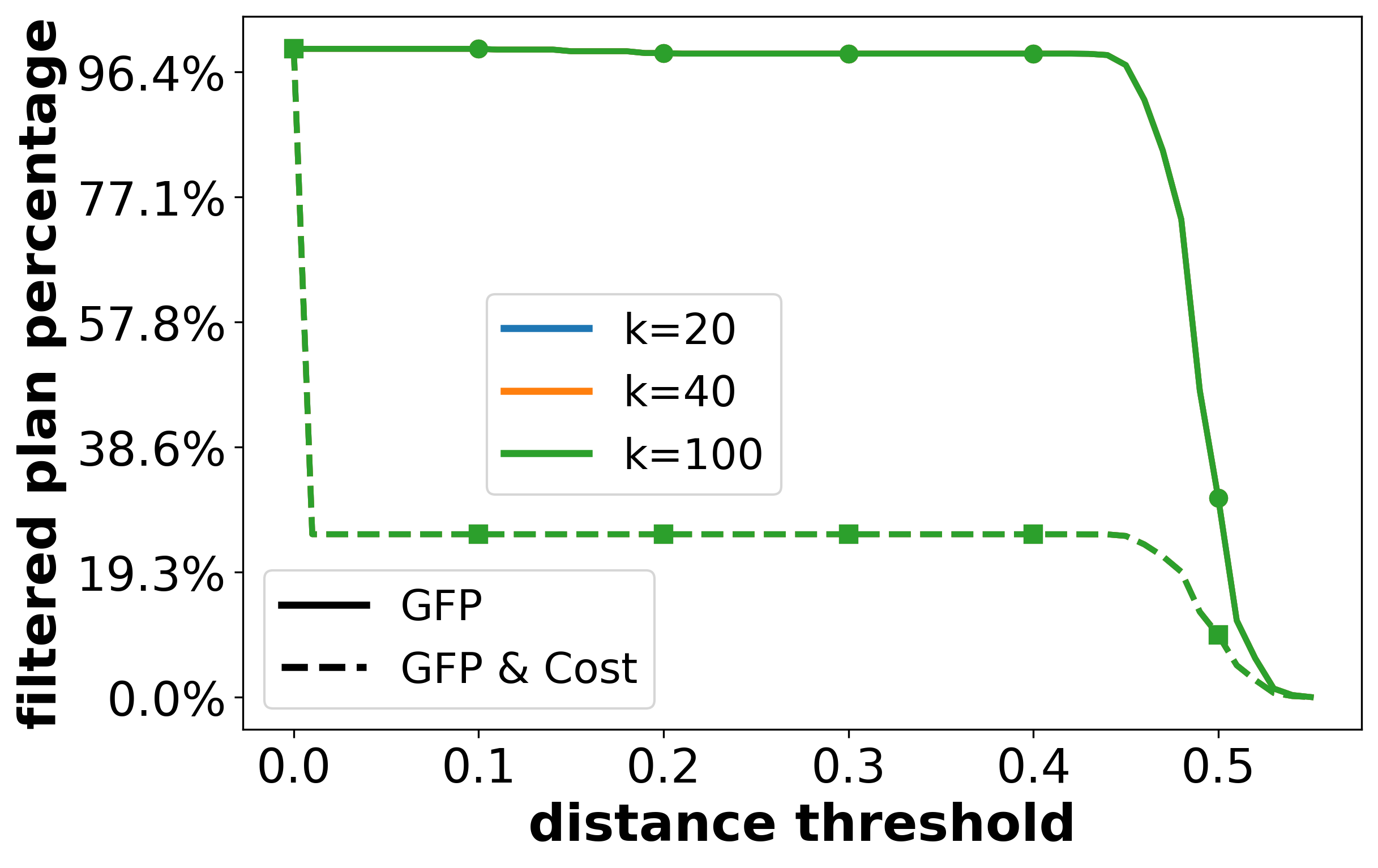}}
	\subfloat[Plan informativeness]{\label{fig:gfp_k_infor}\includegraphics[width=0.32\linewidth, height=3cm]{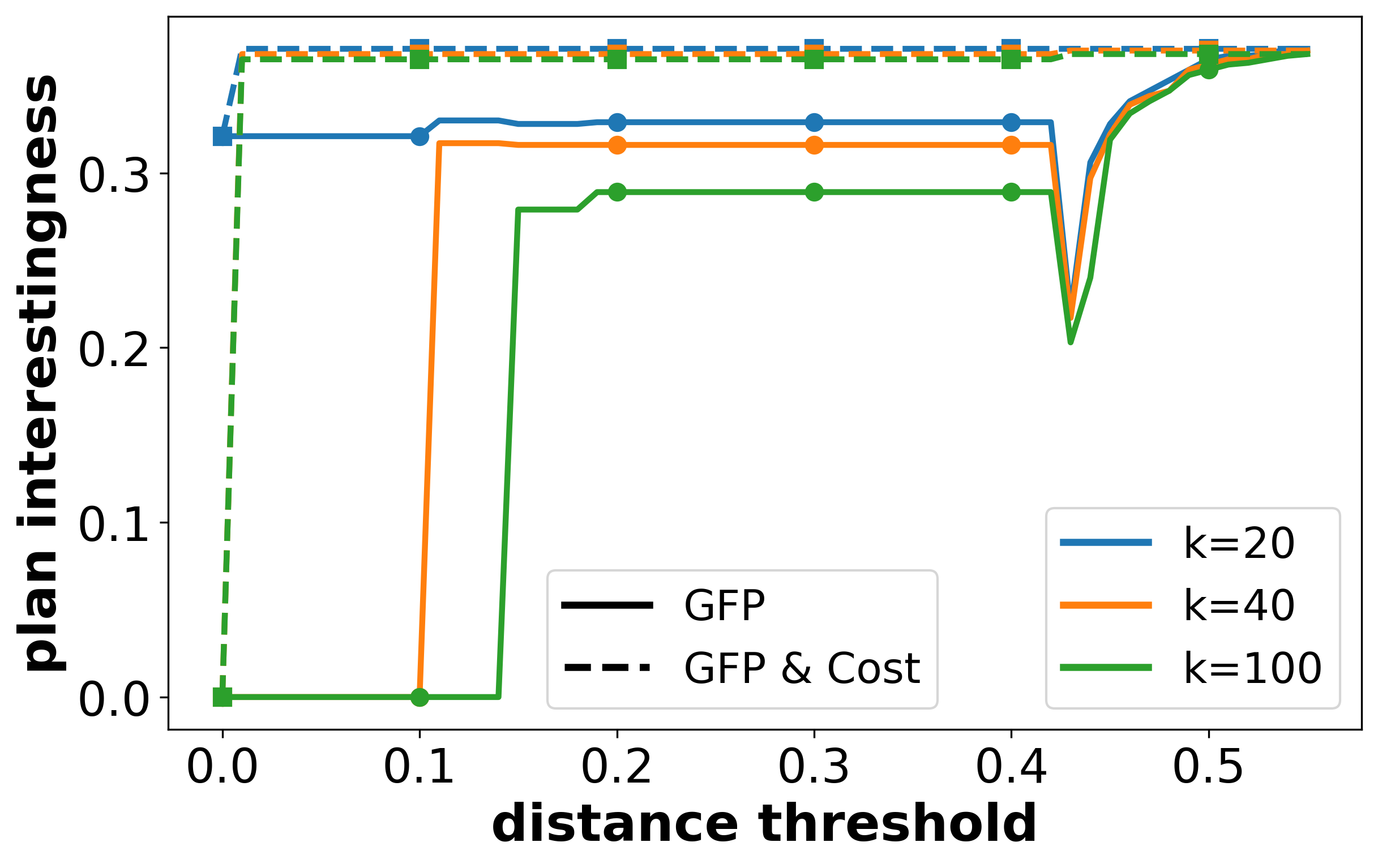}}
	\subfloat[Efficiency]{\label{fig:gfp_k_time}\includegraphics[width=0.32\linewidth, height=3cm]{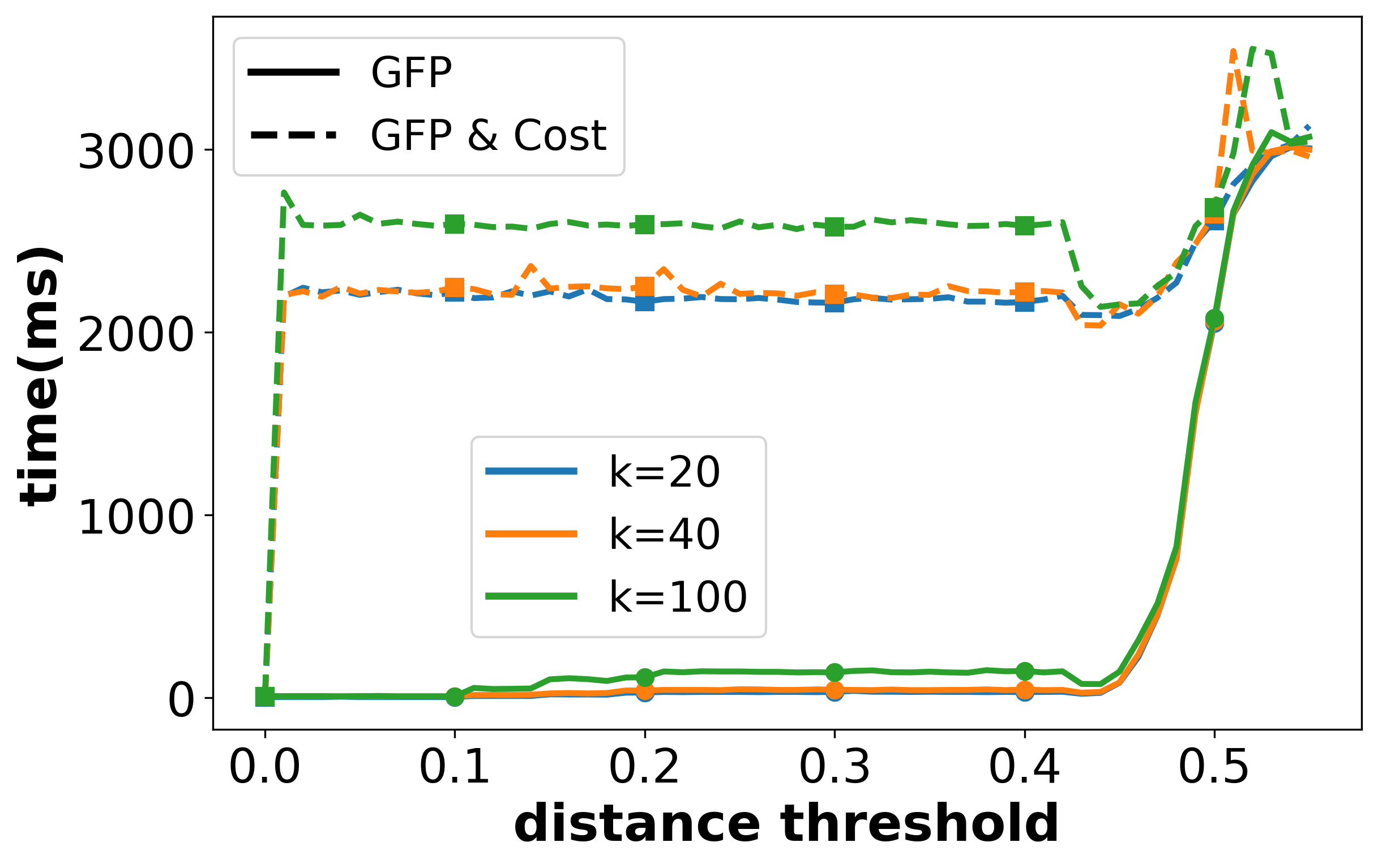}}
  \caption{GFP strategy for different $k$ values (\textit{tpch\_22.sql}): (a) filtered plan percentage, (b) plan informativeness, and (c) runtime.} %
	\label{fig:gfp}
\end{figure*}

\begin{figure*}[t]
	\centering
	\subfloat[Plan informativeness]{\includegraphics[width=0.32\linewidth]{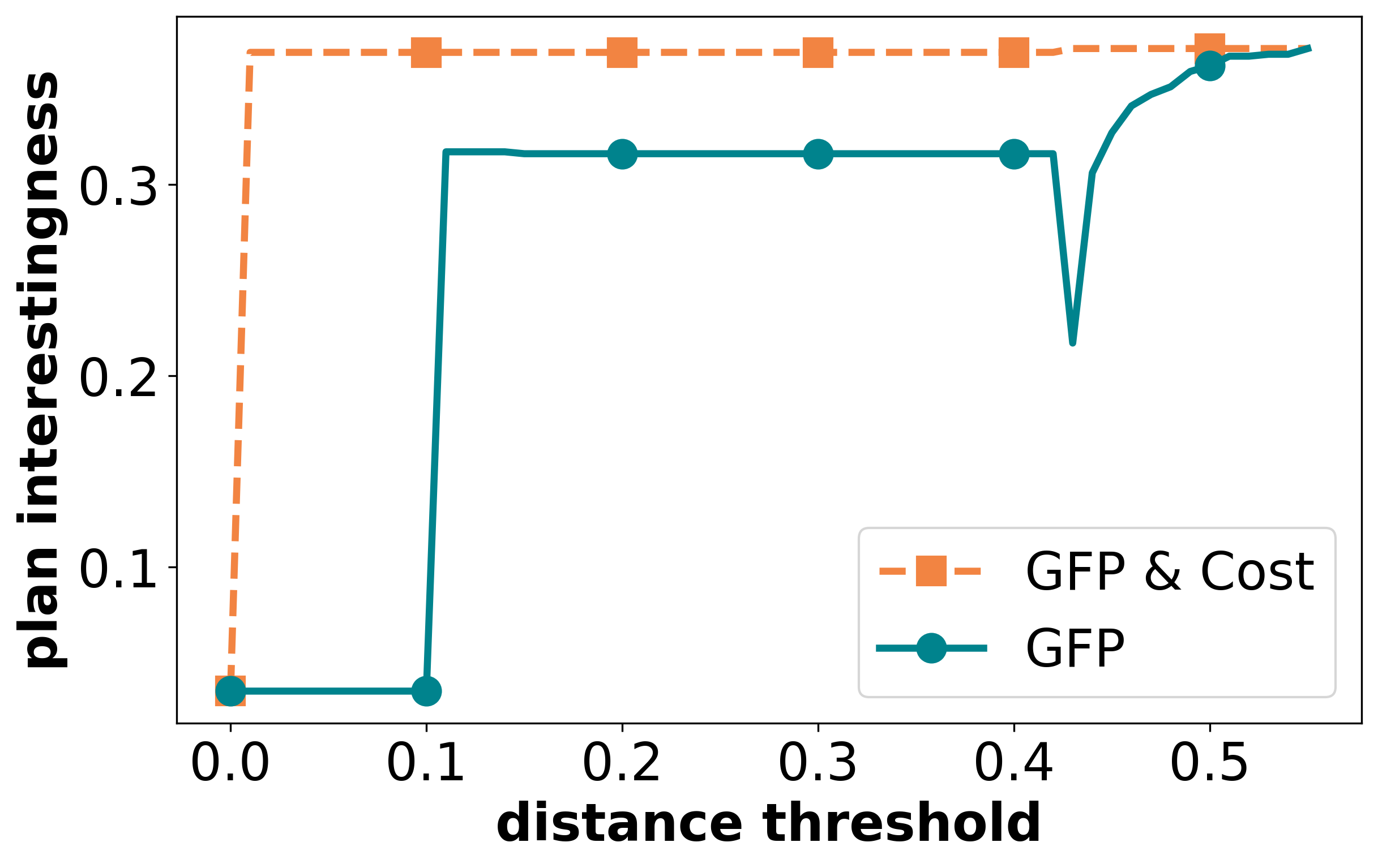}}
	\subfloat[Runtime]{\includegraphics[width=0.32\linewidth]{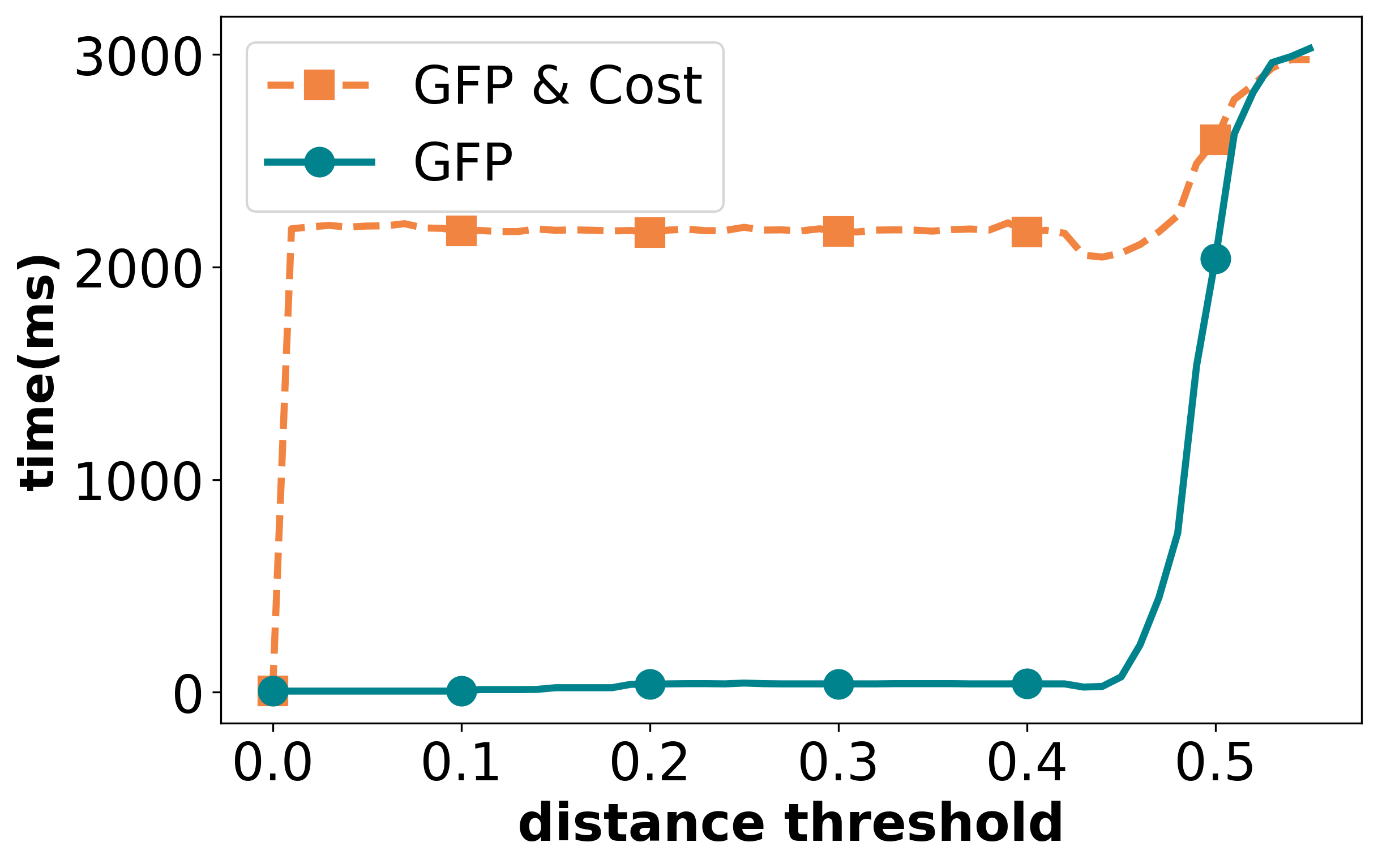}}
	\subfloat[Filtered plan percentage]{\includegraphics[width=0.32\linewidth]{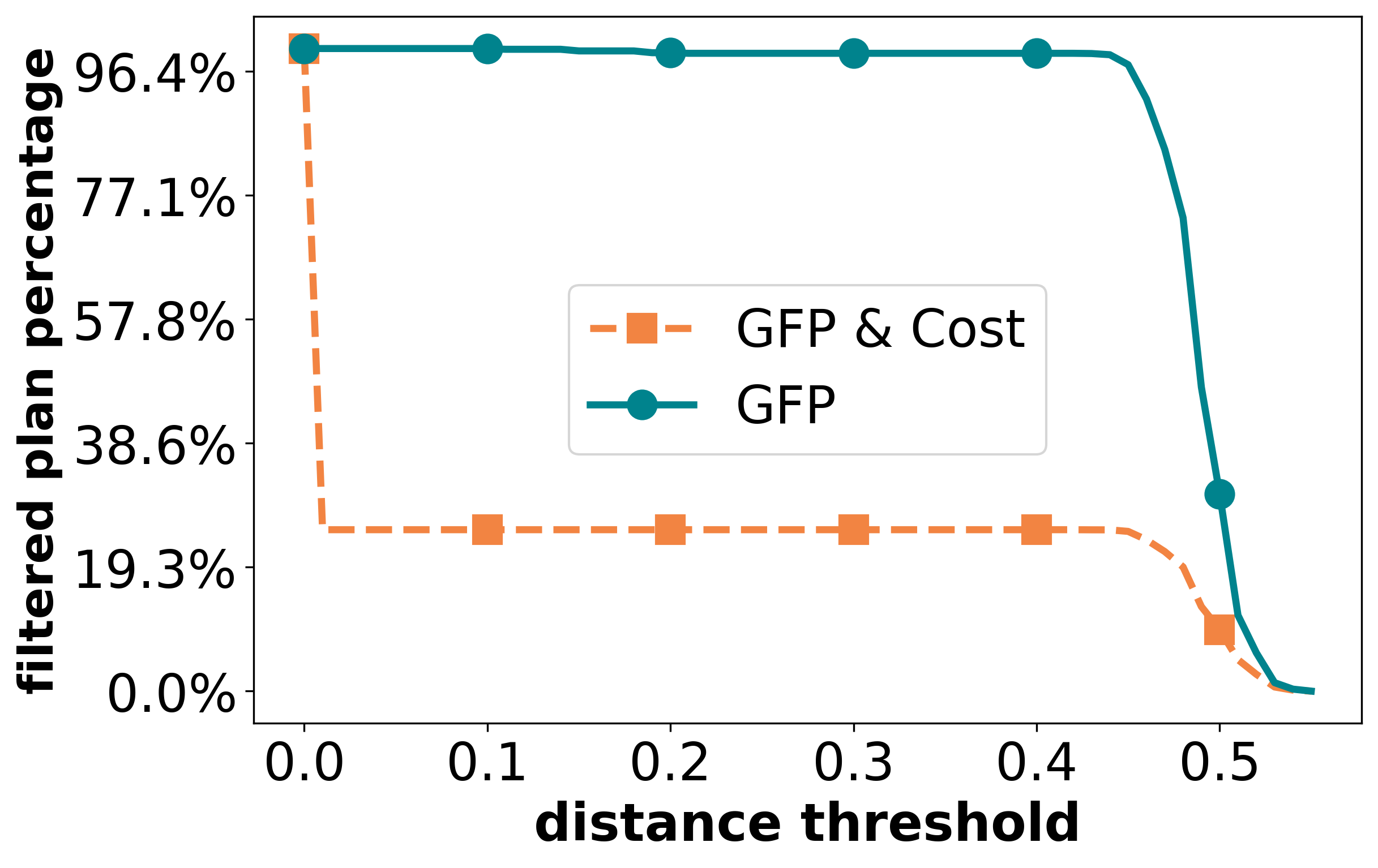}}
	\vspace{-2ex}\caption{\textsc{gfp} strategy for \textit{tpch\_22.sql} with $k=30$.}
	\label{fig:apx_gfp_k30}
	\vspace{-3ex}
\end{figure*}

\begin{figure*}[t]
	\centering
	\subfloat[Plan informativeness]{\includegraphics[width=0.32\linewidth]{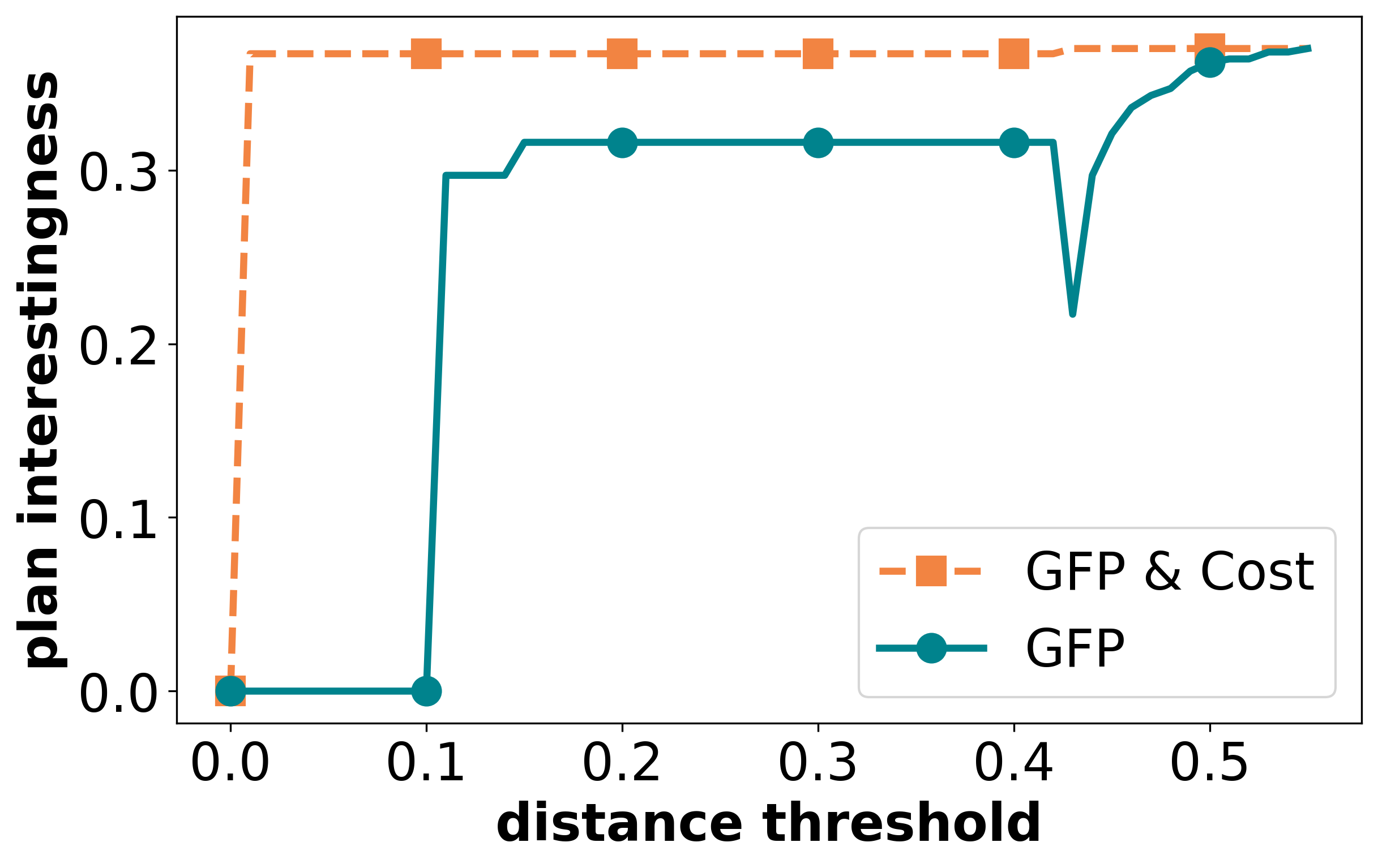}}
	\subfloat[Runtime]{\includegraphics[width=0.32\linewidth]{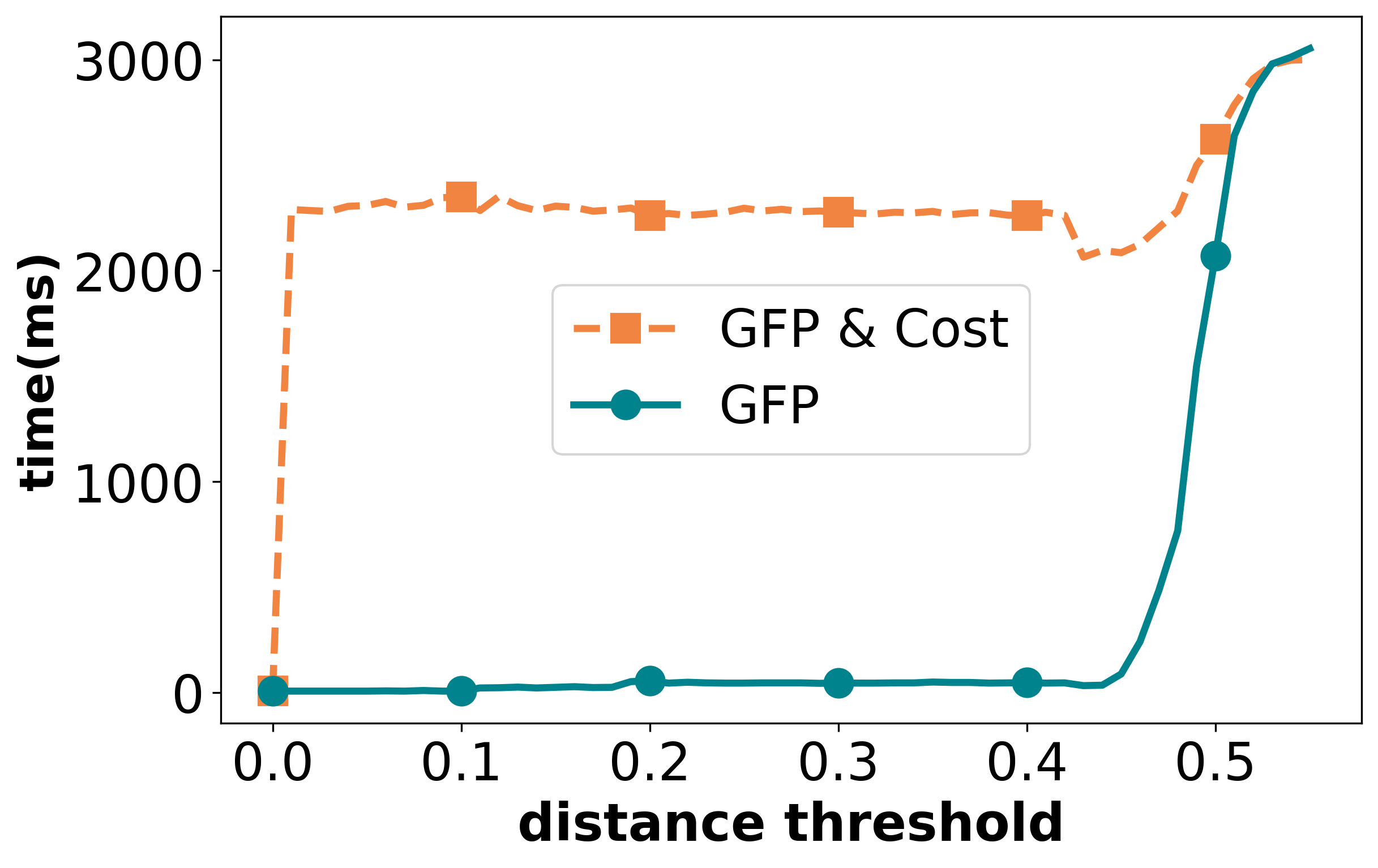}}
	\subfloat[Filtered plan percentage]{\includegraphics[width=0.32\linewidth]{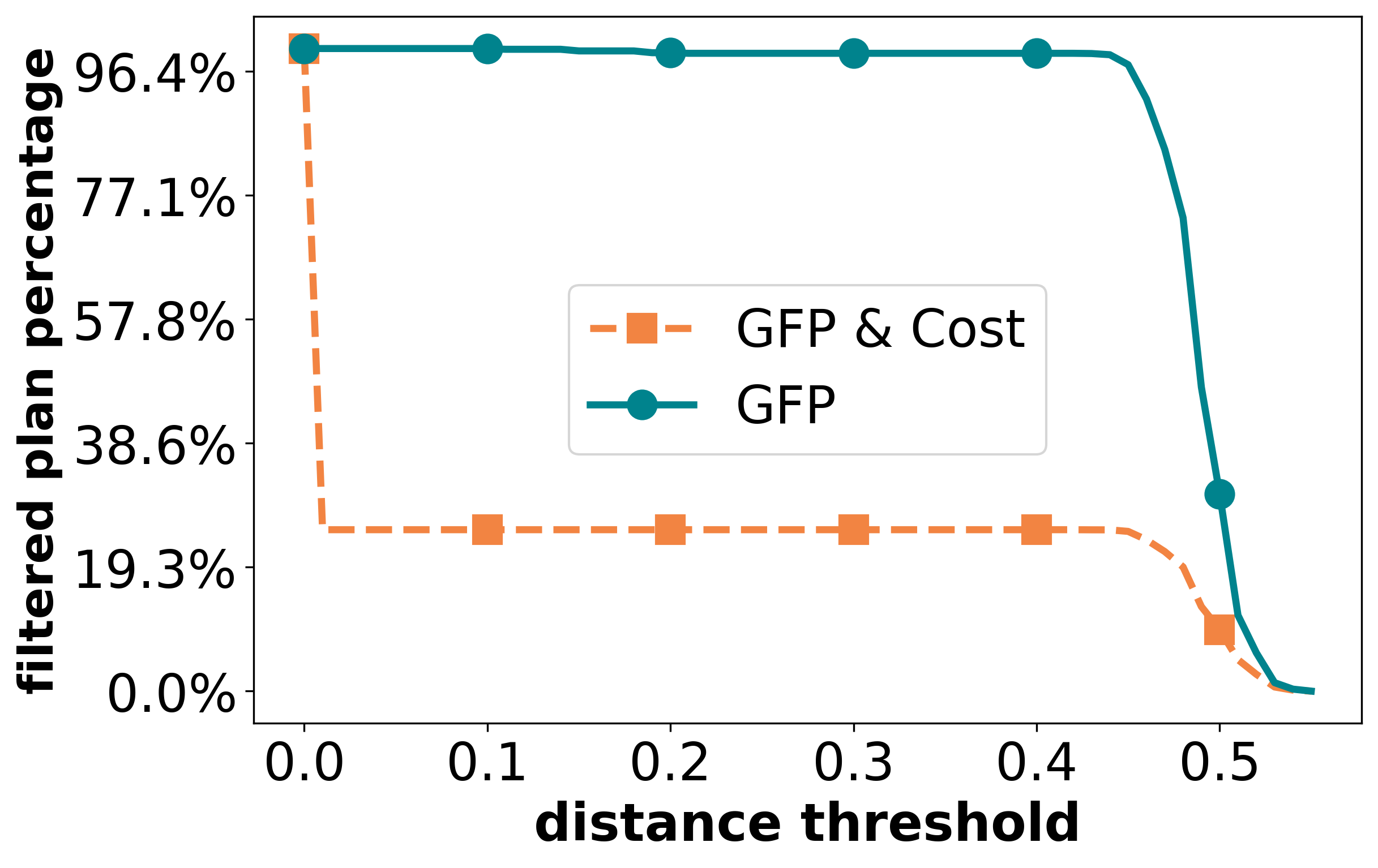}}
	\vspace{-2ex}\caption{\textsc{gfp} strategy for \textit{tpch\_22.sql} with $k=50$.}
	\label{fig:apx_gfp_k50}
	\vspace{-1ex}
\end{figure*}

\begin{figure*}[t]
    \centering
    \subfloat[Q1]{\label{subfig:q1}\includegraphics[width=0.25\linewidth, height=2cm]{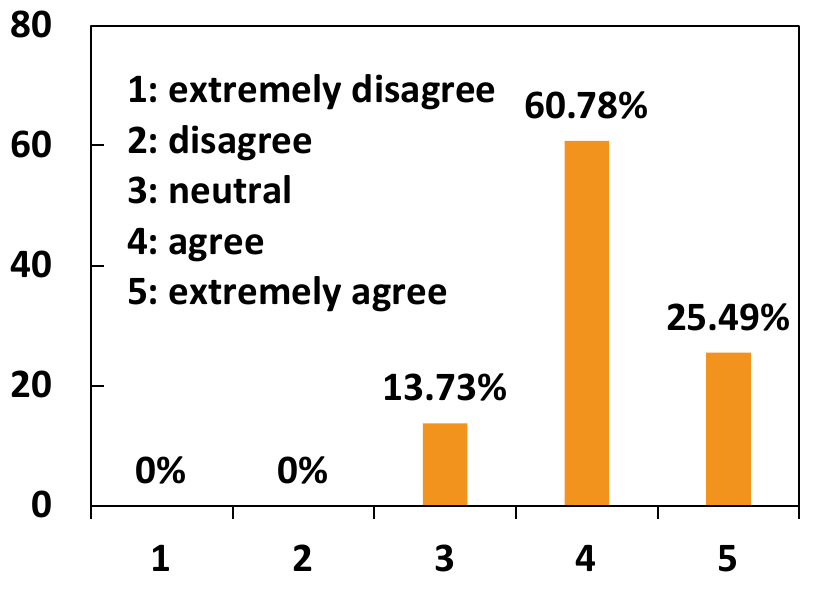}}
    \subfloat[Q2]{\label{subfig:q8}\includegraphics[width=0.25\linewidth, height=2cm]{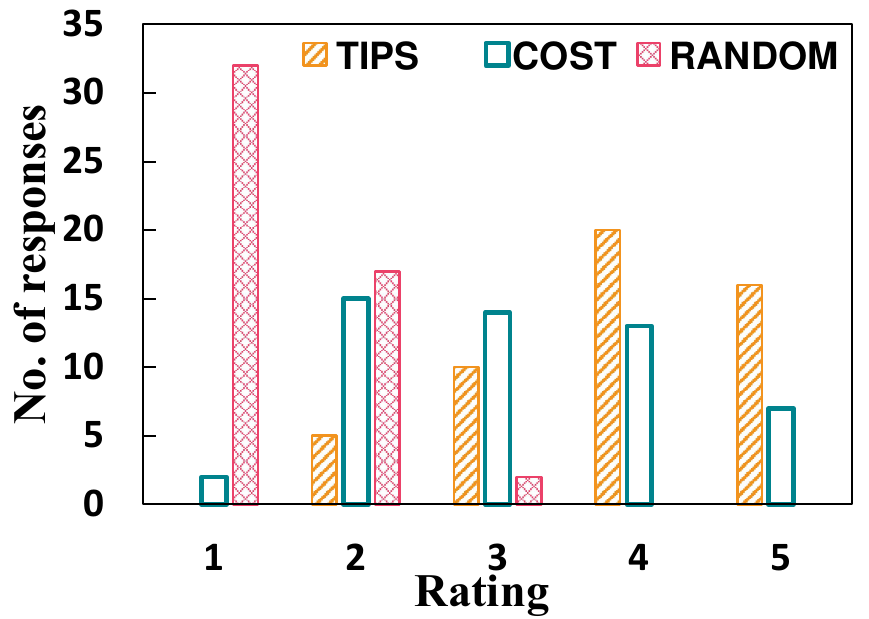}}
    \subfloat[Q4]{\label{subfig:q3}\includegraphics[width=0.25\linewidth, height=2cm]{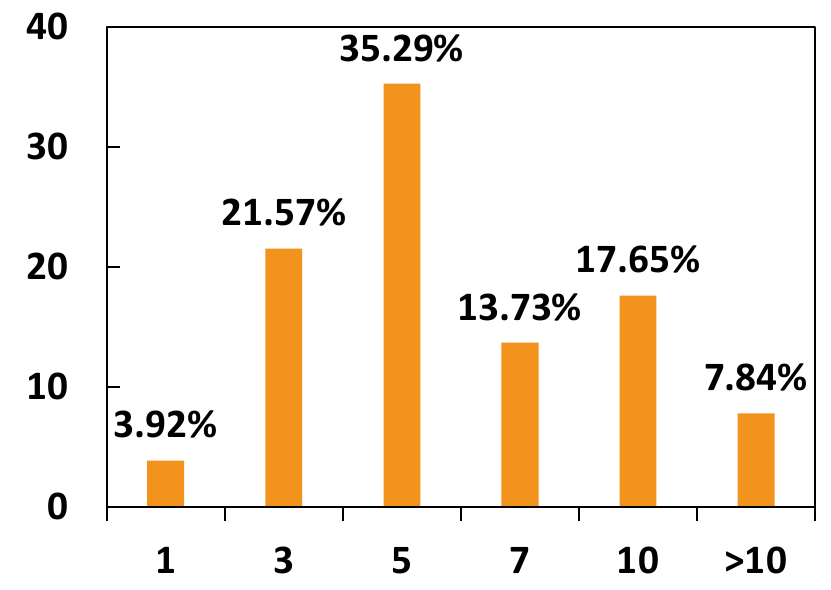}} 
    \subfloat[Q5]{\label{subfig:q7}\includegraphics[width=0.25\linewidth, height=2cm]{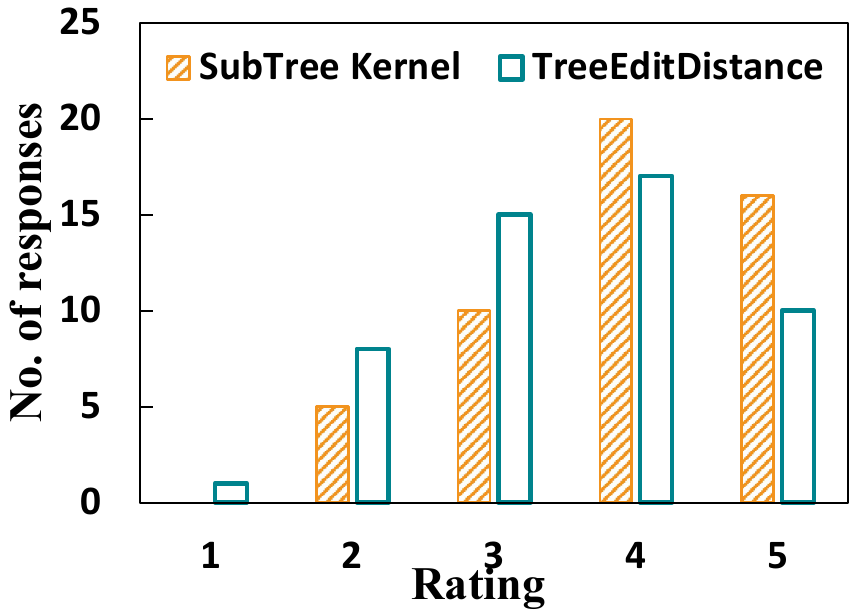}}
    \\
    \subfloat[Q6]{\label{fig:tips_llm_q2}
  \includegraphics[width=0.19\linewidth, height=2cm]{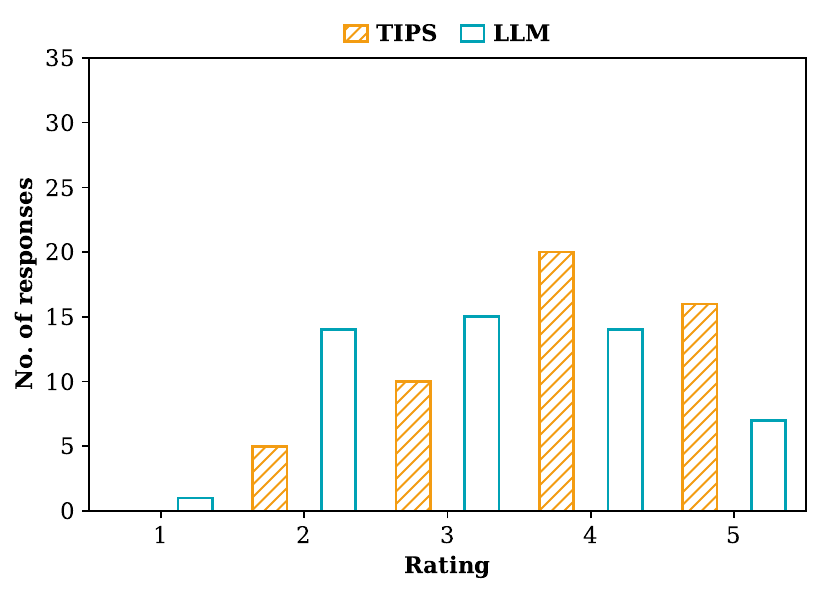}
  } 
    \subfloat[\textsc{b-tips}]{\label{subfig:bapq_diff_dist}
  \includegraphics[width=0.19\linewidth, height=2cm]{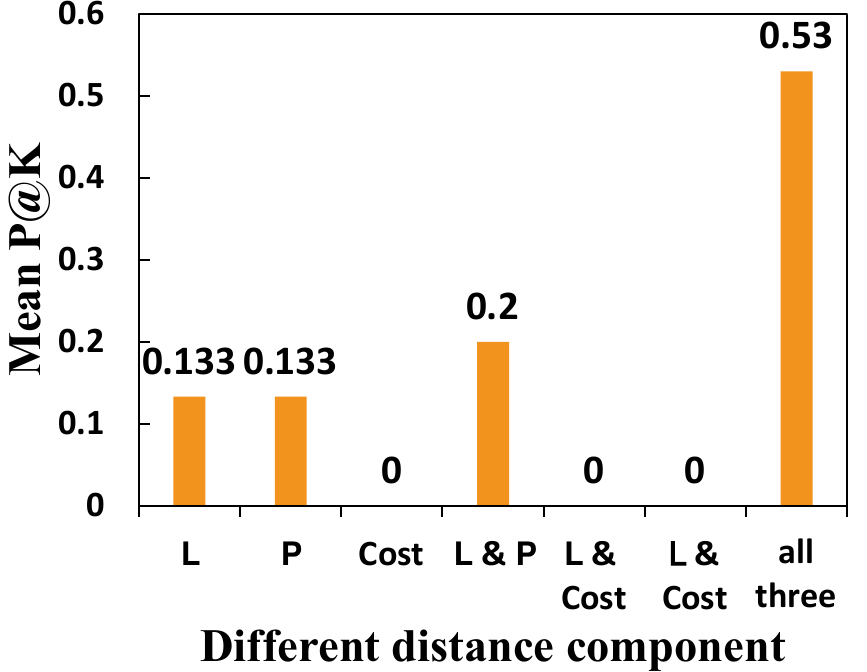}
  }
    \subfloat[\textsc{i-tips}]{\label{subfig:iapq_diff_dist}
  \includegraphics[width=0.19\linewidth, height=2cm]{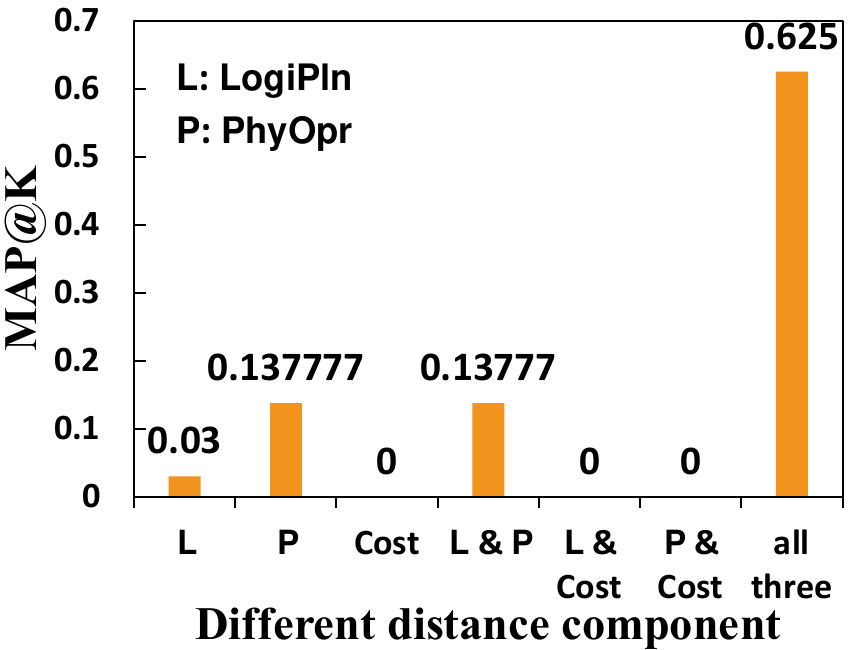}
  } 
    \subfloat[Score distribution]{\label{fig:quizrep1}\includegraphics[width=0.19\linewidth, height=2cm]{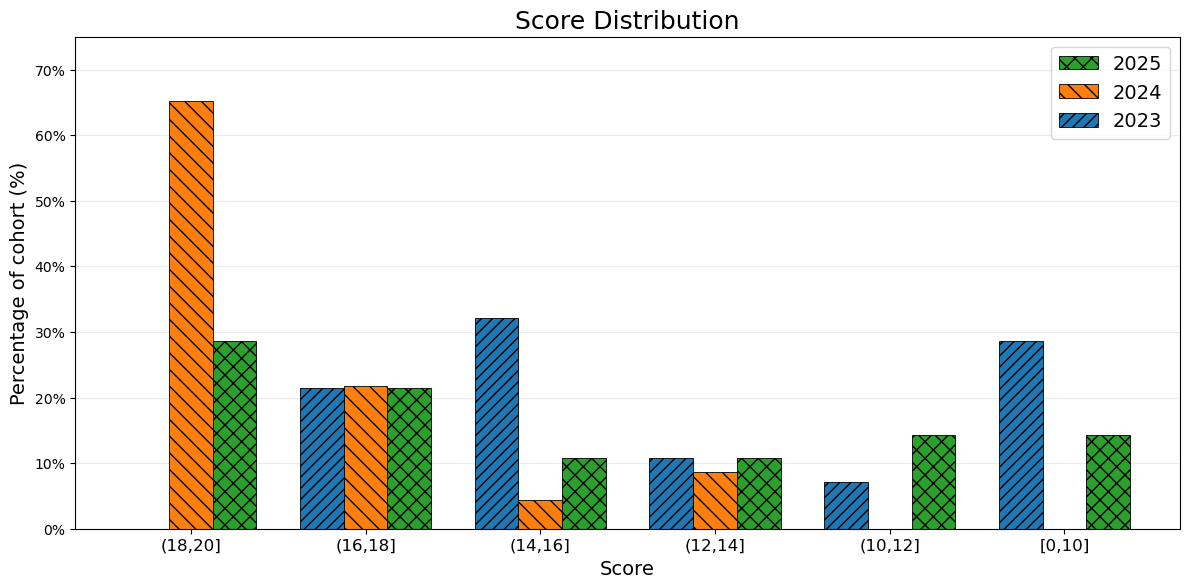}} %
    \subfloat[Score vs. relative \#logs]{\label{fig:quizrep2}\includegraphics[width=0.19\linewidth, height=2cm]{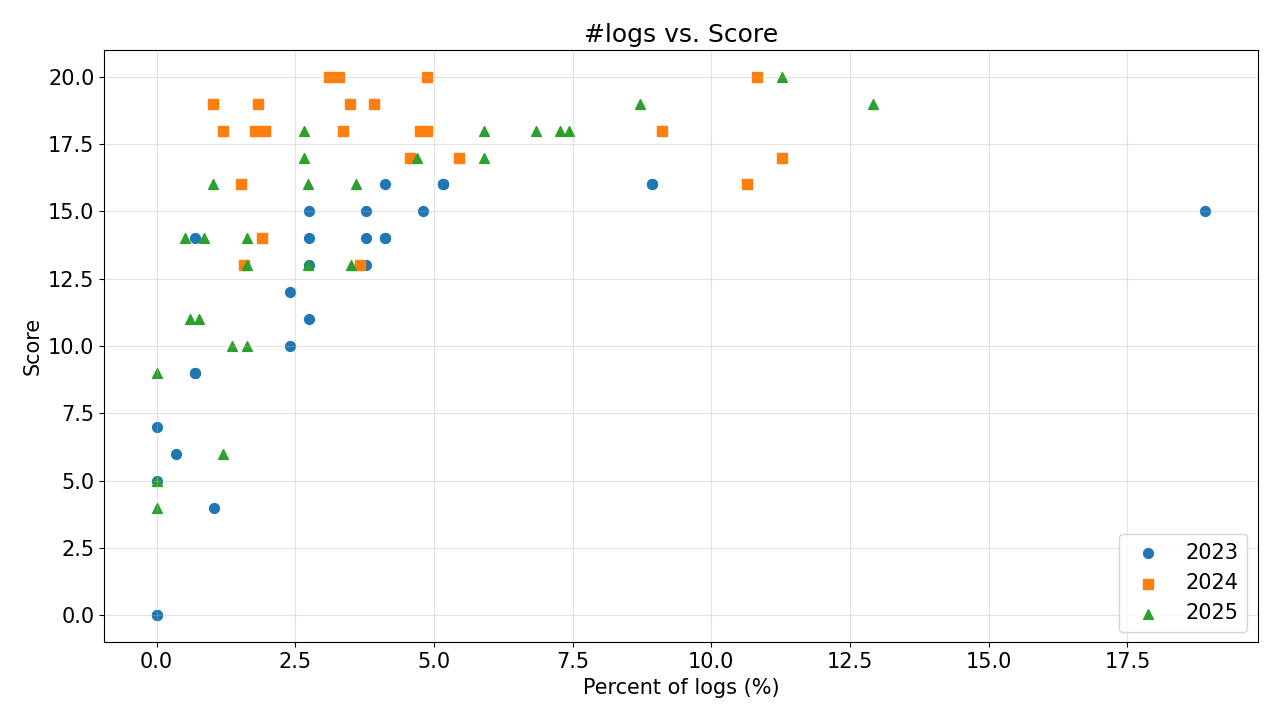}} %
    \caption{User study and academic outcomes.}
    \label{fig:Q1Q3}
\end{figure*}

\section{\textsc{tips} for Database Education}\label{sec:appl}  
We report the usefulness of the \textsc{tips} algorithms (collectively referred to as \textsc{tips}) for supplementing learning of different plan choices (\ie \textsc{aqp}s). To this end, we use the GUI of \textsc{ARENA}~\cite{arena}, which provides a user-friendly interface for retrieving and comparing the \textsc{qep} and \textsc{aqp}s. Specifically, we conducted a user study and an academic assessment to demonstrate the usefulness of the \textsc{tips} algorithms. The various parameters are set to the default (best) values as mentioned in Section~\ref{sec:experiment}.
\subsection{User Study}\label{sec:ustudy}
We conducted a user study among \textsc{cs} undergraduate/postgraduate students who are enrolled in the database course at our university. 51 unpaid volunteers participated in the study. Note that these volunteers are different from those in the survey in Section~\ref{sec:feedback}. After consenting to have their feedback recorded and analyzed, we presented a brief tutorial of the \textsc{ARENA} GUI~\cite{arena}, describing how learners can use the interface to inspect the \textsc{qep} and alternative plans. Then they were allowed to explore it for an unstructured time.\eat{ Through the \textsc{gui}, the participants can specify the choice of problem (\textsc{b-tips} vs \textsc{i-tips}), configure parameters (\eg $\alpha,\beta,\lambda$), and switch between subtree kernel and tree edit distance\eat{ for computing LogiPln differences}.} 
With \textsc{gfp} enabled, the observed waiting time for each \textsc{aqp} in the study did not exceed approximately 1 second.
Figure~\ref{fig:tips_gui} shows the \textsc{tips-gui} interface used in the study.

\begin{figure*}[t]
  \centering
  \includegraphics[width=0.81\linewidth]{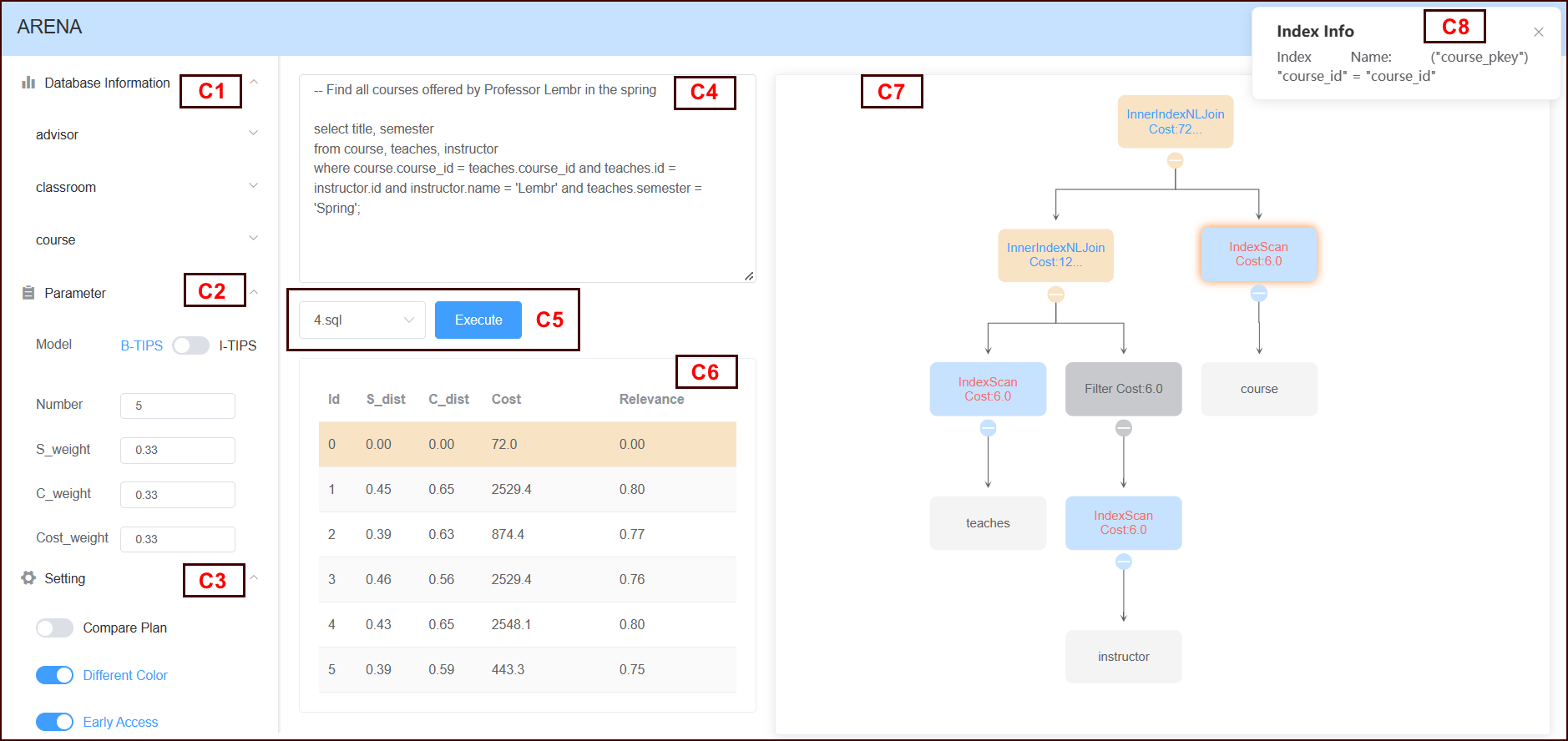}
  \vspace{-3ex}\caption{Screenshot of \textsc{tips-gui} interface used in the user study.}
  \label{fig:tips_gui}
\end{figure*}

\noindent\textbf{US 1: Survey.} We presented 30 predefined IMDb queries to all volunteers, who then viewed the returned \textsc{aqp}s for feedback. For our study, we generated \textsc{aqp}s using \textsc{tips}, \textsc{cost}, and \textsc{random}; all plans can be visualized in \textsc{tips-gui}. Participants could click any \textsc{aqp} to see it plotted side-by-side with the \textsc{qep}, showing the tree structure, each node’s physical operator, and estimated cost. Volunteers could take as much time as they wished to explore the alternative plans. They then completed a survey with several questions.\eat{ In \textit{Q1} and \textit{Q2}, participants indicated whether they found the method useful for improving their understanding in the database systems course.} Where applicable, each subject gave a rating on a 1–5 Likert scale, with higher values indicating stronger agreement with the affirmative statement. We emphasized that survey responses would not affect their grades. Below, we report the key results.

\textit{Q1: Is \textsc{tips} useful in supporting their learning?} Fig.~\ref{subfig:q1} plots the results of the corresponding responses.  Observe that 44 participants agree that it is useful for their learning.\eat{ None thinks that there is no positive impact.}

\textit{Q2: How well does \textsc{tips}/\textsc{random}/\textsc{cost} help in understanding the alternative plan space?} To mitigate bias, the results of \textsc{tips}, \textsc{cost}, and \textsc{random} are presented in random order, and the participants were not informed of the approach used to select the \textsc{aqp}s. Fig.~\ref{subfig:q8} reports the results. In general, \textsc{tips} receives the best scores (\ie 36 out of 51 gave a score over 3).

\textit{Q3: Preference for \textsc{i-tips} or \textsc{b-tips}.} Most participants preferred \textsc{b-tips} (38 vs. 13), indicating that both \textsc{i-tips} and \textsc{b-tips} are valuable since learners differ in how they prefer to explore \textsc{aqp}s.

\textit{Q4: How many \textsc{aqp}s is sufficient to understand different plan choices?} Fig.~\ref{subfig:q3} reports the responses. Most participants felt that 10 or fewer \textsc{aqp}s were sufficient, noting that additional \textsc{aqp}s provided diminishing gains in understanding, which supports the diminishing-return nature of the approximate solution.

\eat{Next, we report the impact of different parameters on the understanding of \textsc{qep} selection for a given \textsc{sql} query.} 

\eat{\textit{Q5: Importance of LogiPln, PhyOpr, and Cost differences.}  Fig.~\ref{subfig:q5} reports the responses. Setting $\alpha$ (\resp $\beta$) as 0 indicates that the LogiPln (\resp PhyOpr) difference is not taken into account in \textsc{tips}. Similarly, by setting $\alpha=\beta=0.5$, \textsc{tips} ignores the cost difference between plans. We observe that $\alpha=\beta=1/3$ receives the best response. 36 participants give a score of 4 and above, while each of the remaining settings have less than 10 responses with scores 4-5. This highlights the importance of LogiPln, PhyOpr, and Cost differences.

\textit{Q6: Importance of distance and relevance measures.} $\lambda$ allows us to balance the impact of distance and relevance in selecting alternative plans. Fig.~\ref{subfig:q6} reports the results. Clearly, by setting $\lambda=0.5$, the results of \textsc{tips} is significantly better than either disabling distance or relevance.  }

\textit{Q5: Tree Edit Distance vs. Subtree Kernel.} Fig.~\ref{subfig:q7} reports the results. In general,  subtree kernel-based structural difference computation strategy receives superior scores (\ie 36 out of 51 give a score over 3) to tree edit distance-based strategies (\ie 27 out 51). This justifies the choice of the former as the default LogiPln difference computation technique.

\textit{Q6: \textsc{tips} vs.\ LLM-based Learning.}
Given the popularity of Large Language Models (\textsc{llm}) in education, we compare \textsc{tips} with \textsc{llm}-based learning (ChatGPT, GPT-4o). Unlike \textsc{llm} explanations that operate over text descriptions, \textsc{tips} exposes concrete optimizer plans and costs, enabling direct comparison of \textsf{LogiPln}/\textsf{PhyOpr}/\textsf{Cost} trade-offs.
We follow the \textit{Q2} protocol. After each query, participants rate how well the provided information helps them understand the alternative plan space. Fig.~\ref{fig:tips_llm_q2} summarizes the ratings. 
\textsc{tips} outperformed \textsc{llm}: 36 of 51 \textsc{tips} ratings exceeded 3 compared to 20 of 51 for \textsc{llm}, with a higher average rating (3.92 vs. 3.16) and more ratings concentrated at 4-5 rather than 1-2.


\noindent\textbf{US 2: Impact of different distance components.} We explore the impact of different distance components on \textsc{aqp} preferences. We consider various combinations of distance components ($x$ axis in  Fig.~\ref{fig:Q1Q3}(f) and~\ref{fig:Q1Q3}(g)). For each combination, the distance components are equal in weight. These combinations were hidden from the students to mitigate any bias. We choose 3 \textsc{sql} queries and ask participants to mark and order the top-5 \textsc{aqp}s (from 50 plans returned by our algorithms in each setting) that they most want to see for each query. Note that we fix $k=50$ here for readability; the system supports larger $k$ and we observe similar trends. We then compare the user-marked list of plans in each case with the results returned by the \textsc{tips} algorithms.\eat{ Since \textsc{b-tips} and \textsc{i-tips} are two different modes of exploration, we have adopted different evaluation indicators for these two cases.} For \textsc{b-tips}, we determine whether the returned result contains the alternative plans that users expect to see and the number of such plans. We do not consider the order in which these plans appear. Therefore, we use \textit{Mean P@k (k=5)}~\cite{mcf2010metric} to measure similarity. \textsc{i-tips}, on the other hand,  returns \textsc{aqp}s iteratively. The order of appearance is important. Therefore, for \textsc{i-tips}, we use the \textit{MAP@K (k=5)}~\cite{mcf2010metric}.\eat{ \textit{P@k} represents the precision for top-k elements in the answer. \textit{Mean P@k} represents the mean of the \textit{P@k} for all the queries and in our experiment, it represents the mean of 3 \textsc{sql} queries. \textit{AP@k} represents the mean of \textit{P@i for i=1,...,k}. Finally, \textit{MAP@k} represents the mean of \textit{AP@k} for all the queries.}

\eat{The experimental results are reported in Fig.~\ref{fig:Q1Q3}(f) and~\ref{fig:Q1Q3}(g).} Observe that regardless of the exploration mode, the effect when the three distances (\ie \textsf{LogiPln}, \textsf{PhyOpr}, and \textsf{Cost}) are considered simultaneously is superior to other combinations. Interestingly, when the weight of cost distance is large, \textit{Mean P@k} and \textit{MAP@k} are both 0. That is, only considering the cost of plans as a factor for selecting \textsc{aqp}s results in potentially suboptimal collection as far as learners are concerned.  

\noindent\textbf{US 3: Result differences between \textsc{i-tips} and \textsc{b-tips}.} Lastly, we investigate the differences in the results returned by our proposed algorithms. We collected 80 groups of results from 20 volunteers; each volunteer was requested to use both \textsc{i-tips} and \textsc{b-tips} and view 10 \textsc{aqp}s. For each \textsc{aqp} returned in \textsc{i-tips}, volunteers can mark whether it is useful to them (\ie rating $r$) and the \textsc{aqp}s returned subsequently take this feedback into account. \eat{During the study, after 10 \textsc{aqp}s returned, the users are asked to label those ones helpful in her study experience.} We observe that, on average, 56.07\% of the returned \textsc{aqp}s are different in both solutions.

\eat{
\subsection{TIPS vs.\ LLM-based Learning}\label{sec:llm}
We compare TIPS with LLM-based learning workflows. Unlike LLM explanations that operate over text descriptions, TIPS exposes concrete optimizer plans and costs, enabling direct comparison of LogiPln/PhyOpr/Cost trade-offs.


We follow the US2 (Q2) protocol to compare TIPS and LLM in helping students understand the alternative plan space. After each query, they rate how well the provided information helps them understand the alternative plan space on a 1--5 Likert scale. Fig.~\ref{fig:tips_llm_q2} summarizes the ratings: 36 out of 51 ratings are above 3 for TIPS versus 20 out of 51 for LLM, with a higher average rating for TIPS (3.92 vs.\ 3.16). Ratings for TIPS concentrate more at 4--5, whereas LLM receives more 1--2 ratings.}

\subsection{Academic Outcomes}\label{sec:outcome}

We conducted a three-year longitudinal study across three cohorts of an undergraduate computer science degree program. The \textit{\textsc{tips}-based pedagogical model} was introduced to each of the cohorts in \textit{February 2023}, \textit{February 2024} and \textit{February 2025} by the end of March, respectively, with courses ending in mid-June and a final paper-based exam held later that month. Each exam included a 20-mark question assessing students’ understanding of alternative plans, including (a) other \textsc{aqp}s that can also answer the query; (b) whether a plan is better than another; and (c) what plans the optimizer will never choose. The question set can be found in the Appendix. We report \emph{relative} (within-cohort normalized) counts: within each cohort, score-bin counts are divided by cohort size so that bins sum to 100\%, and log-score plots normalize each student's logs by the cohort’s total logs. Fig.~\ref{fig:Q1Q3} presents these results.

\eat{We have been conducting a long-term empirical test on the academic outcomes within the last 3 years, over the classes in 3 grades of the same program, major in computer science. Our model was introduced to each of the cohorts in \textit{February 2023}, \textit{February 2024} and \textit{February 2025} by the end of March, respectively. The course culminated in mid-June for all three cohorts. After that, a final paper-based exam is conducted by the end of June. In the exam of each year, there is a question, with 20 marks, asking about the following: (a) other \textsc{aqp}s that can also answer the query; (b) whether a plan is better than another; and (c) what plans the optimizer will never select. The question set can be found in the Appendix. 
We report \emph{relative} (within-cohort normalized) counts: within each cohort, score-bin counts are divided by cohort size so that bins sum to 100\%, and log-score plots normalize each student's logs by the cohort’s total logs. Fig.~\ref{fig:Q1Q3} presents these results.}

Each cohort consisted of about 30 undergraduate students. Over the three-year study, we varied how the \textsc{tips}-based pedagogical model was integrated into teaching: in 2023, \textsc{tips} was introduced without required practice; in 2024, students were encouraged to practice extensively on the \textsc{tips-gui}, with usage contributing to their lab homework grade (the final grade consists of final exam score, code-based project score, and lab homework score); and in 2025, practice was optional and ungraded to assess voluntary adoption. To ensure unbiased evaluation, different teaching assistants--unaware of the \textsc{tips}--graded the final exams for each cohort, enabling us to address the following research questions:

\eat{During the 3-year empirical study, we have been trying different approaches to inject our model into the pedagogical process. In 2023, we only introduced the model but did not ask for complementary practice over the given \textsc{tips-gui}; in 2024, we asked the students to practice on the \textsc{tips-gui} as much as possible and informed them that the practical logs would account for their lab homework score (the final grade consists of final exam score, code-based project score, and lab homework score); in 2025, different from 2024, we adopted a more flexible policy, \ie practicing on \textsc{tips-gui} is not complementary and did not directly contribute to their scores, in order to test whether the students find \textsc{tips-gui} helpful and spontaneous practice on that. We employed three different teaching assistants, one for each cohort, where none of the assistants is aware of \textsc{tips-gui}, to mark the final exam sheets, in order to answer the following research questions:}

\textbf{RQ1: Do \textsc{tips} help to improve academic outcomes?} Students in 2024 performed best, with over half scoring above 18, compared to fewer than one third in 2025 and none in 2023, indicating superior overall performance for the 2024 cohort, followed by those in 2025.
\eat{More than half of the students in the 2024 cohort scored higher than 18; less than one third of the students in the 2025 cohort got a score higher than 18; while no one in the 2023 cohort scored above 18. Comparing all three cohorts, it is evident that the students in 2024 obtained the best scores, followed by those in 2025.} We conducted a one-way ANOVA to determine whether the differences between the score distributions of the three cohorts are significant. The test resulted in a p-value of 0.000008, which indicates that the different pedagogical model of \textsc{tips} significantly affects the scores obtained by students across the three cohorts.

\textbf{RQ2: Is there any correlation between the user logs and their outcomes?} As illustrated in \Cref{fig:quizrep2}, quiz scores generally increase with the percentage of cohort logs, exhibiting a pattern of diminishing returns. The improvement is particularly pronounced for the 2023 cohort, where scores rise rapidly with increasing log percentages, likely due to their relatively limited number of logs compared to the other cohorts. However, once the proportion of logs reaches approximately 5--10\%, further increases yield little additional gain in scores. Because the 2024 cohort were encouraged to use \textsc{tips}, their average number of logs is correspondingly higher than those of the 2023 and 2025 cohorts. Given the positive relationship between the number of logs and performance, this also accounts for why their score distribution in \Cref{fig:quizrep1} is the strongest among the three cohorts.

\eat{As shown in \Cref{fig:quizrep2}, in general, as the percentage of cohort logs increases, the scores also increase with a diminishing return pattern. In particular, the scores improve rapidly \wrt as the percentage of logs increases for students of 2023, who have a generally limited number of logs compared with the other two cohorts. Besides, after the percentage of logs reaches 5--10\%, further increases in \#logs will not lead to significant improvements in the scores obtained. As practicing on \textsc{tips-gui} is complementary for the 2024 cohort, the average \#logs is correspondingly larger than that of the 2023 and 2025 cohorts. As \#logs shows a positive effect on the scores, this also explains why the score distribution in \Cref{fig:quizrep1} is the best among the three cohorts.}

\noindent\textbf{\textit{Remark}}. We analyzed the login and query logs of \textsc{tips} and found that database course students have used it frequently over the past two years. Their queries often involve complex nested subqueries and multi-table joins (more than three tables), indicating that students actively rely on our framework to better understand plan choices for complex queries.

\subsection{Beyond Database Education}
Finally, we briefly describe how our framework can be used beyond database education. Specifically, we highlight two practical use cases where it benefits DBAs and junior database engineers.

\textbf{Slow-query diagnosis.} In traditional tuning, DBAs often spend considerable effort examining minor variants of the same inefficient plan. By suppressing \textsc{aqp}s that are structurally or operator-wise similar to the \textsc{qep}, \textsc{tips} enables a DBA to explore queries that are slow by identifying \textsc{aqp}s from a different ``quadrant'' of the search space (\eg plans with a substantially different join order but comparable estimated cost). If such a structurally distant \textsc{aqp} performs better in practice, it indicates that the optimizer's cost model may be biased toward a particular plan shape (\eg left-deep trees) that is suboptimal for the current data distribution. Moreover, by exposing plans that differ significantly in their physical operators, \textsc{tips} can also help DBAs determine whether a performance bottleneck stems from a specific implementation (\eg a \textsf{Nested Loop Join} causing I/O thrashing) rather than the query logic itself. Because our framework presents the DBA with a ``candidate list'' of distinct physical strategies to test, it can potentially accelerate the process of identifying an effective solution.

\textbf{Onboarding junior engineers.} \textsc{tips} can serve as a valuable tool for onboarding junior database engineers. By filtering out \textsc{aqp}s that closely resemble the \textsc{qep}, it immediately exposes juniors to high-contrast examples, allowing them to examine why the optimizer favors a \textsf{Hash Join}-based pipeline over a structurally different \textsf{Sort-Merge}-based alternative. Additionally, by reviewing $k$ plans that vary across both \textsf{LogiPln} and \textsf{PhyOpr}, junior DBAs can develop an understanding of how different workload patterns (\eg OLTP vs. OLAP) translate into distinct plan shapes.

\section{Related Work}\label{sec:rel}  
Our work is closely related to the problem of diversified top-\(k\) query processing, which seeks to compute the \(k\) most relevant results for a user query while explicitly incorporating diversity into the ranking objective. This problem has been extensively investigated across a broad range of settings, including diversified keyword search in databases~\cite{broad_zhao_2011}, textual document collections~\cite{efficient_2011_albert}, and graphs~\cite{keyword_2008_konstantin}; diversified top-\(k\) pattern matching~\cite{diversified_wenfei_2013}; diversified clique enumeration~\cite{diversified_long_2015}; and diversified structural search in graphs~\cite{top_xin_2013}, among others~\cite{sap_rui_2017,top_bin_2018}.  
Furthermore, \cite{search_drosou_2010} provided a comprehensive survey of methodologies for diversifying query results, Ting and Jin~\cite{on_ting_2013} study the computational complexity of diversification problems, and Gollapudi and Sharma~\cite{axiomatic_go_2009} employ an axiomatic framework to analyze diversification systems. Nevertheless, these techniques are not directly applicable to our setting, as they do not address the informativeness of query execution plans within a relational database management system (\textsc{rdbms}).

Databases and Structured Query Language (\textsc{sql}) are core topics in Software Engineering and Computer Science curricula~\cite{taipalusSQLEducationSystematic2020}. Because students struggle with \textsc{sql}~\cite{MAF21}, many tools have been developed to help them understand complex statements~\cite{KV+12,HM+22,MRY19,MF21,LZ+20,DG11} and learn to write correct, efficient queries~\cite{Hu2024QrHintAH,wangFalsePositivesDeceptive2024,miedemaExpertPerspectivesStudent2022,taipalusErrorsComplicationsSQL2018,brusilovskyLearningSQLProgramming2010}. Other work focuses on visualizing \textsc{sql} queries~\cite{Gatterbauer2024ACT, Gat22,LZ+20,MF21} and query plans~\cite{SDB15}. Far less research targets technologies for learning relational query processing~\cite{Tian2024SQLucidGN, towards_weiguo_2021,neuron_liu_2018,picasso_jayant_2010,BL22,SL25SIGCSE}. \textsc{neuron}~\cite{neuron_liu_2018} and \textsc{lantern}~\cite{towards_weiguo_2021} generate natural language descriptions of \textsc{qep}s. \textsc{mocha}~\cite{mocha} supports learner-friendly interaction and visualization of how alternative physical operators affect a chosen \textsc{qep}, but assumes learners already know which operators they want to explore. Picasso~\cite{picasso_jayant_2010} visualizes different \textsc{qep}s and their costs across the selectivity space. Our work complements these systems by enabling exploration of informative \textsc{aqp}s.

\section{Conclusions and Future Work}\label{sec:conclusion}  
This paper presents the informative plan selection (\textsc{tips}) problem, one piece of a larger problem of technology-enabled learning of relational query processing. We propose efficient approximate algorithms (\textsc{b-tips} and \textsc{i-tips}) that maximize the plan informativeness of the selected plans. These algorithms are integrated into our tool called \textsc{tips} for database education. Experimental studies, feedback from real-world learners, and analysis of academic performance demonstrate the effectiveness of our solutions. As part of future work, we wish to facilitate learning of various other components related to relational query processing such as the query plan selection algorithm, query rewriting, and cost estimation. As mentioned above, the \textsc{tips} algorithms can be also adapted for database administration, which is an interesting direction for future work.

\bibliographystyle{ACM-Reference-Format}
\bibliography{main}

\appendix
\renewcommand{\thefigure}{\thesection.\arabic{figure}}
\setcounter{figure}{0}
\renewcommand{\thetable}{\thesection.\arabic{table}}
\setcounter{table}{0}
\section{Proofs}\label{apx:a}
\subsection{Proof of \Cref{np-bapq}}
 \begin{proof} (Sketch).
 Let us consider the facility dispersion problem which is NP-hard~\cite{approximation_ss_2018}. In that problem, one is given a facility set $S$, a function $f$ of distance and an integer $k (k < |F|)$. The objective is to find a subset $S' (|S'| = k)$ of $S$ so that the given function on $f$ is maximized. Because the facility dispersion problem is a special case of \textsc{b-tips} where $\{\pi^*\} = \emptyset$ and $U(\cdot)=f$, \textsc{b-tips} is also NP-hard.
 \end{proof}
 
\subsection{Proof of Lemma \ref{lem:metric}}
\begin{proof}
Subtree kernel $\kappa_S\left(T_a, T_b\right)$ is a valid kernel, and we map the tree to one metric space. Consequently, we define a norm $||T||=\sqrt{\kappa\left(T, T\right)}$ based on the normalized kernel, and then the $L_2$ distance metric on this space is
\begin{align*}
 L_2(T_a, T_b) &= ||T_a-T_b||\\
               &= \sqrt{\kappa(T_a, T_a)+\kappa(T_b, T_b)-2\kappa(T_a, T_b)} \\
               &= \sqrt{2\left(1-\kappa(T_a, T_b)\right)}
\end{align*}
The above deduction relies on $\kappa(T_a, T_a) = \kappa(T_b, T_b) = 1$. Finally, we conclude that $s\_dist(T_a, T_b)$ is a metric since $L_2(T_a, T_b)=\sqrt{2}s\_dist(T_a, T_b)$.
\end{proof}

\subsection{Proof of Lemma \ref{disttri}}
\begin{proof}
 $dist(\cdot)$ is a linear combination of $s\_dist$, $c\_dist$ and $cost\_dist$. Because all of them are metric, $dist(\cdot)$ is a metric too. $dist(\cdot)$ satisfies the triangle inequality. Then we only need to prove that $\frac{1-\lambda}{2}\left(rel(a)+rel(b)\right)$ satisfies the triangle inequality. Let $R(a,b) = rel(a)+rel(b)$, then
 \begin{align*}
     R(a,b)+R(a,c) &= rel(a)+rel(b)+rel(a)+rel(c)\\
                   &= 2\cdot rel(a)+rel(b)+rel(c)\\
                   &= 2\cdot rel(a) + R(b, c)
 \end{align*}
 Since $rel(a)\ge 0$, $R(a,b)+R(a,c)\ge R(b,c)$. Similarly, we can prove that $R(a,b)-R(a,c)\le R(b,c)$. Therefore the $Dist(\cdot)$ consists of $R(\cdot)$ and $dist(\cdot)$ also satisfies the triangle inequality.
\end{proof}

\subsection{Proof of \Cref{the:approximate}}
\begin{proof}
Suppose $\Pi_{OPT}$ is the optimal solution of the \textsc{b-tips} problem. $OPT = U(\Pi_{OPT}\cup \{\pi^*\})$ is the minimum distance. Let $\Pi$ and $\Pi^{\ast}$  represent the result selected by Algorithm~\ref{alg: greedy} and all plans, respectively. Let $f(P)=\min \limits_{x, y \in P} \{Dist(x, y)\}$. We will prove the condition $f(\Pi \cup \{\pi^*\}) \ge OPT/2$ holds after each addition to $\Pi$. Since $f(\Pi)$ represents the minimum distance of result of Algorithm~\ref{alg: greedy} after the last addition to $\Pi$, the theorem would then follow. 

The first addition inserts a plan that is farthest from $\pi^*$. There are two cases about $OPT$: $OPT=dist(\pi^*, \pi_i)$ or $OPT=dist(\pi_j, \pi_i)$ where $\pi_i, \pi_j \neq \pi^*$. In either case, $f(\Pi\cup \{\pi^*\}) > OPT$, Hence, the condition clearly holds after the first addition. 

Assume that the condition holds after $k$ additions to $\Pi$, where $k\ge 1$. We will prove by contradiction that the condition holds after the $(k+1)^{th}$ addition to $\Pi$ as well. Let $\Pi_k={\pi_1, \pi_2, \ldots, \pi_k, \pi^*}$ denote the set $\Pi$ after $k$ additions\eat{ (Note that $\pi$ is added as the \textsc{qep}, not included in the alternative plans)}. Since we are assuming that the above condition does not hold after the $(k + 1)^{th}$ addition, it must be that for each $\pi_j \in \Pi^{\ast}-\Pi_k$, there is a plan $\pi_i \in \Pi_k$ such that $dist(\pi_i, \pi_j) < OPT/2$. We describe this situation by saying that $\pi_j$ is blocked by $\pi_i$.

Let $\Pi'=\Pi_{OPT}\cap (\Pi^{\ast}\setminus\Pi_k)$. Note that $k< \left|\Pi_{OPT}\right|$ since an additional plan is to be added. Thus $\Pi' \ge 1$. It is easy to verify that if $|\Pi'|=1$, then the condition will hold after the $(k+1)^{th}$ addition. Therefore, assume that $|\Pi'|=r\ge 2$. Furthermore, let $\Pi'={y_1, y_2, \ldots, y_r}$. Since $\Pi' \subseteq \Pi_{opt}$, we must have $Dist(y_i, y_j)\ge OPT$ for $1\le i, j\le r, (i\neq j)$. Since the distances satisfy the triangle inequality (Lemma~\ref{disttri}), it is possible to show that no two distinct plans in $\Pi'$ can be blocked by the same node $\pi_i \in \Pi_k$. Moreover, if there is a blocking plan in $\Pi_k$ that is also in $\Pi_{OPT}$, the minimum distance of $\Pi_{OPT}$ must be smaller than $OPT$, because this is a blocking plan whose minimum distance with $\Pi_{OPT}\setminus\Pi_k$ is smaller than $OPT/2$. Therefore, none of the blocking plans in $\Pi_k$ can be in $\Pi_{OPT}$ and $|\Pi'|=|\Pi_{OPT}|$. It is possible to show that $\Pi_k$ must contain $\left|\Pi_{OPT}\right|$ plans to block all plans in $\Pi'$.  However, this contradicts our initial assumption that $k < \left|\Pi_{OPT}\right|$. Thus, the condition must hold after the $(k+1)^{th}$ addition.
\end{proof}

\section{Final questions in the exam of 2023-2025}\label{apx:b}
\subsection{2023 -- Part V: Comprehensive Problem (20 points)}

\begin{table}[H]
  \centering \scriptsize
  \caption{Statistics of 2023 Teaching Database Relations.}
  \label{tab:2023_relations}
  \begin{tabular}{lccc}
  \toprule
  Relation & \# Tuples & \# Physical Pages & Primary Key \\
  \midrule
  Teacher(Tno, Tname, Tsex, Tage, Tdept) & 2000 & 5 & Tno \\
  Course(Cno, Cname, Credit, Cpno) & 1200 & 3 & Cno \\
  TC(Tno, Cno, Year) & 6000 & 14 & (Tno, Cno) \\
  \bottomrule
  \end{tabular}
\end{table}

\begin{figure}[H]
	\centering
	\subfloat[Optimal Plan]{
		\label{fig:2023-opt}
		\includegraphics[width=0.485\linewidth]{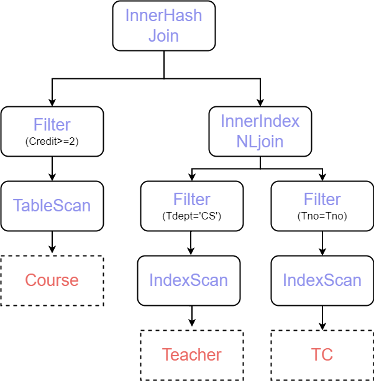}
		}
	\subfloat[Alternative Plan]{
		\label{fig:2023-alt}
		\includegraphics[width=0.475\linewidth]{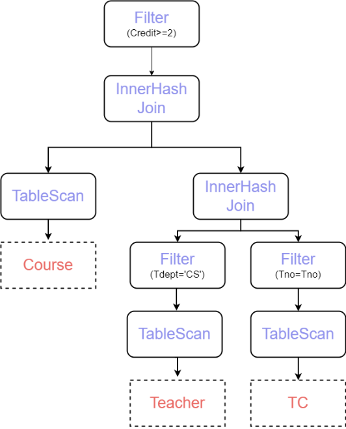}
		}
	\caption{Optimal Plan and Alternative Plan for the 2023 query.}
	\label{fig:2023-plans}
\end{figure}

Consider the teaching database in Table~\ref{tab:2023_relations}.

Assume:
\begin{itemize}
	\item TC.Tno is a foreign key referencing Teacher.Tno.
	\item TC.Cno is a foreign key referencing Course.Cno.
	\item Teachers are uniformly distributed across Tdept, and TC are uniformly distributed across Year.
	\item Teacher names are unique.
	\item Each table has an in-memory B$^+$ tree index built on its primary key.
	\item Teacher.Tdept also has a B$^+$ tree index.
	\item All intermediate results can be kept in memory during query execution.
\end{itemize}

By the end of 2022, the university must report every course with credit $\geq 2$ that the ``CS'' department teachers offered that year. Determine the qualifying course names and address the following:

\begin{enumerate}
	\item Figures~\ref{fig:2023-opt} and~\ref{fig:2023-alt} show the optimal plan and an alternative plan for the SQL query. Explain the advantages of the optimal plan over the alternative plan. (10 points)
	\item Other than the two plans in part~(1), describe at least one additional executable plan for requirement~(1) and explain why the optimizer might reject it. (10 points)
\end{enumerate}

\subsection{2024 -- Part V: Query Understanding (20 points)}

\begin{table}[H]
  \centering \scriptsize
  \caption{Statistics of 2024 Advising Database Relations.}
  \label{tab:2024_relations}
  \begin{tabular}{lccc}
  \toprule
  Relation & \# Tuples & \# Physical Pages & Primary Key \\
  \midrule
  Student(id, name, sex, age, dept) & 20000 & 50 & id \\
  Instructor(id, name, age, dept) & 1000 & 3 & Cno \\
  Advisor(s\_id, i\_id, Course) & 1000000 & 600 & (s\_id, i\_id) \\
  \bottomrule
  \end{tabular}
\end{table}

\begin{figure}[H]
	\centering
	\subfloat[Plan A.]{
		\label{fig:2024-planA}
		\includegraphics[width=0.49\linewidth]{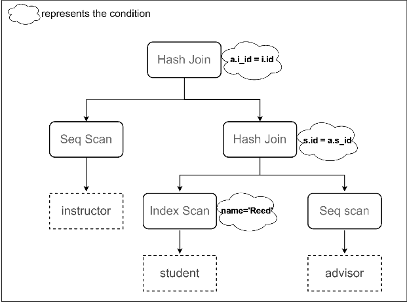}
		}
	\subfloat[Plan B.]{
		\label{fig:2024-planB}
		\includegraphics[width=0.49\linewidth]{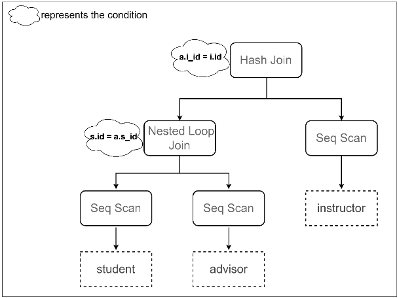}
		}
	\caption{Optimal Plan and Alternative Plan for the 2024 query.}
	\label{fig:2024-plans}
\end{figure}

Consider the database in Table~\ref{tab:2024_relations}. The three relations correspond to student information, instructor information, and advising relationships, respectively.

Assume:
\begin{itemize}
	\item Advisor.s\_id is a foreign key referencing Student.id.
	\item Advisor.i\_id is a foreign key referencing Instructor.id.
	\item Students are uniformly distributed across dept.
	\item Student names are unique.
	\item Each table has an in-memory B$^+$ tree index built on its primary key.
	\item ``Seq Scan'' denotes a table scan operator.
\end{itemize}

Consider the query below.

\begin{quote}
	\begin{verbatim}
SELECT s.name, i.name
FROM student AS s, advisor AS a, instructor AS i
WHERE s.name = 'Reed'
  AND s.id = a.s_id
  AND a.i_id = i.id;
	\end{verbatim}
\end{quote}

Two alternative plans for this query are shown in Figures~\ref{fig:2024-planA} and~\ref{fig:2024-planB}, and plan A is the optimizer's choice.

\begin{enumerate}
	\item List possible reasons why plan A is superior to plan B. (10 points)
	\item Sketch a third potential plan (plan C) for this query and explain why the optimizer is unlikely to choose it. (10 points)
\end{enumerate}

\subsection{2025 -- Part V: Query Understanding (20 points)}

\begin{table}[H]
  \centering \scriptsize
  \caption{Statistics of 2025 Teaching Database Relations.}
  \label{tab:2025_relations}
  \begin{tabular}{lccc}
  \toprule
  Relation & \# Tuples & \# Physical Pages & Primary Key \\
  \midrule
  Course(credits, course\_id, title, dept\_name) & 100 & 2 & Course\_id \\
  Instructor(id, name, age, dept) & 1000 & 10 & id \\
  Teaches(year, id, course\_id, sec\_id, semester) & 60000 & 600 & (s\_id, i\_id) \\
  \bottomrule
  \end{tabular}
\end{table}

\begin{figure}[H]
  \centering
  \includegraphics[width=\linewidth]{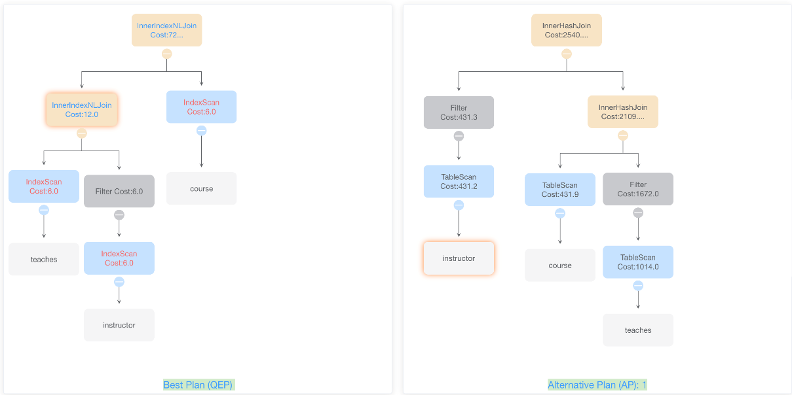}
  \caption{Optimal Plan and Alternative Plan for the 2025 query.}
  \label{fig:2025-plans}
\end{figure}
Consider the database in Table~\ref{tab:2025_relations}, where the three relations store course information, instructor information, and teaching relationships.

Assume:
\begin{itemize}
	\item Teaches.id is a foreign key referencing Instructor.id.
	\item Teaches.course\_id is a foreign key referencing Course.course\_id.
	\item Instructors are uniformly distributed across dept.
	\item Instructor names are unique.
	\item Each table has an in-memory B$^+$ tree index built on its primary key.
	\item ``Seq Scan'' denotes a table scan operator.
\end{itemize}

Consider the SQL query:

\begin{quote}
	\begin{verbatim}
-- List courses Professor Lembr teaches in the Fall semester.
SELECT title, semester
FROM course, teaches, instructor
WHERE course.course_id = teaches.course_id
  AND teaches.id = instructor.id
  AND instructor.name = 'Lembr'
  AND teaches.semester = 'Fall';
	\end{verbatim}
\end{quote}

\begin{enumerate}
	\item Figure~\ref{fig:2025-plans} depicts the optimal and an alternative plan. Analyze the advantages the optimal plan has over the alternative plan. (10 points)
	\item Describe at least one additional plan that satisfies the query and explain why it might not be chosen for execution. (10 points)
\end{enumerate}

\end{document}